\theoremstyle{theorem}
\newtheorem{theorem}{Theorem}[section]
\newtheorem{lemma}[theorem]{Lemma}
\newtheorem{corollary}[theorem]{Corollary}
\theoremstyle{definition}
\newtheorem*{exa}{Example}
\begin{document}

\title[Evolving the Euler rotation axis as a dynamical system]{Evolving the Euler rotation axis as a dynamical system, using the Euler vector and generalizations}
\markright{ Euler Vector Dynamical System}

\author{John H. Elton}  
\address{School of Mathematics, Georgia Institute of Technology, Atlanta, GA}
\email{Jelton@bellsouth.net}

\author{John R. Elton}  
\email{Jelton.physics@gmail.com}

\subjclass{Primary 70B10; Secondary 70E17, 7G60}

\keywords{Euler's rotation theorem,
rigid body dynamics, axis-angle,
spinors, quaternions,
quasiperiodicity}

\begin{abstract}
Differential equations are derived which show how generalized Euler vector representations of the Euler rotation axis and angle for a rigid body evolve in time; the Euler vector is also known as a rotation vector or axis-angle vector. The solutions can exhibit interesting rotational features in this non-abstract, visualizable setting, including spinor-like behavior and quasiperiodicity.  The equations are well-behaved at zero, reducing to the simple infinitesimal case there. One of them is equivalent to a known quaternion differential equation. The simple geometric derivation does not depend on Euler's rotation theorem, and yields a proof of Euler's theorem using only infinitesimal motions.  With mild regularity conditions on the angular velocity function, there is a continuous evolution of the normalized axis and angle for all time. Dynamical systems properties are discussed, and numerical solutions are used to investigate them when the angular velocity is itself rotating, and the Euler vector trajectory traces out a torus-like shape, with a strobe plot that densely fills in a closed curve.
\end{abstract}

\maketitle

\section{\textbf{Introduction.}}
\label{sec:Introduction}

In 1775, Euler proved that any displacement of a rigid body in 3-dimensional space that leaves a point fixed, is a rotation by some angle about some axis through the fixed point. His proof was purely geometric.  Elementary physics books just implicitly assume it is true without bringing it up, or declare it is intuitively obvious (e.g. Marion \cite{marion} p. 362), which seems unlikely since Euler thought it required proof. Mechanics books now (e.g. Goldstein et. al \cite{goldstein}, Corben and Stehle \cite{corben}) show the existence of the rotation axis as the eigenvector with eigenvalue 1 of an orthogonal transformation that preserves orientation; some older ones (e.g. Whittaker \cite{whittaker}, Banach \cite{banach}) give geometric proofs, different from Euler’s\footnote{Palais et. al \cite{palais} provide an English exposition of Euler’s proof, along with their own novel proof that uses modern mathematical notation and rigor.  They also give a brief survey of several other proofs, and references.  Joseph \cite{joseph} gives yet another geometric proof, and references to other proofs, as well as a discussion of the extension of Euler’s theorem to general rigid body motion, to include translations.} \footnote{The theorem generally called Chasles’ theorem states that the general displacement of a rigid body can be split into a translation followed by rotation about an axis.  The Mozzi-Chasles theorem shows that the rotation axis can be chosen so that the translation is along the direction of that axis, called a screw axis.  It seems Giulio Mozzi showed the existence of the screw axis for a rotation-translation in 1763 \cite{ceccarelli},  even before Euler’s rotation theorem.  Chasles’ writing on this was later, in 1830, but most books attribute the result to Chasles and don’t mention Mozzi.  See \cite{ceccarelli} for a discussion.}.

Upon reflection, after making our own geometric proof of Euler’s rotation theorem, it occurred to us that our proof, as well as the one given by Euler and others, including the usual modern eigenvector proof, look at the original position of the body, and then a second position a finite (not infinitesimal) time later, without tracking how it moved there continuously. It would seem as though the rotation axis were decomposed and then magically reassembled in the final position, the only constraints for this operation being that the point at the origin goes back to the same point, the distance between any two points remains the same - and also that orientation is preserved.  The definition of rigid body motion given in texts asks that the distance between points does not change at any time during the motion, but does not explicitly say anything about preserving orientation.  Did they leave something out?  No, because motion means that it moved continuously from its starting to its final position, all the while preserving distance between points, and this forces orientation to be preserved at all times; see e.g. \cite{corben} p.137.

So it doesn’t seem right, philosophically, to simply look at the final position and say “I can prove that there is a rotation axis and angle!”. There was in fact a rotation axis and angle at each time from initial to final.  Thus our goal became \textit{finding the time evolution of the Euler rotation axis and angle}, starting from the conceptually simple infinitesimal case, and using only analysis of infinitesimal motions. In this way, we will find differential equations for the Euler vector (also known as the rotation vector or axis-angle vector) and generalizations, whose directions show the evolving axis of rotation, and whose lengths are functions of the angle of rotation. One of these equations is the real part of a known quaternion differential equation, obtained in a different way. The Euler vector itself has length equal to the rotation angle. With mild regularity conditions on the angular velocity function, these lead to continuous solutions for both the normalized axis vector and the rotation angle, for all time. Looking at them as dynamical systems, they can exhibit interesting behaviors, including spinor-like features and quasiperiodicity, despite arising from this simple non-abstract setting. 

As visual motivation for what follows, we first present Figures \ref{fig:Euler vec const w trajectory}, \ref{fig:Euler vec const w ind variables}, and \ref{fig:Euler vec rotating w}, in which we show examples of numerical solutions of the differential equation where the trajectory of the Euler rotation axis has surprisingly non-simple behavior. The explanation of these results will be given in the following sections, but it is worth having a picture in mind from the start. In Figures \ref{fig:Euler vec const w trajectory} and \ref{fig:Euler vec const w ind variables} below, a point on a rigid body is indicated by vector $\boldsymbol{B}$, and $\boldsymbol{B}$ rotates in the x-y plane at constant rate about a constant unit-length angular velocity vector $\pmb{\omega }$, which is along z; the initial Euler rotation axis is not parallel to $\pmb{\omega }$.  We observe that the trajectory of the Euler rotation axis, indicated by $\boldsymbol{\hat{E}}$, the normalization of the Euler vector solution $\boldsymbol{E}$, revolves in a plane that is tilted relative to the path of $\boldsymbol{B}$, with a period 4$\pi$ that is \textit{twice} the period with which the body revolves, and also that ${\boldsymbol{\hat{E}}}(t+2\pi) = -{\boldsymbol{\hat{E}}}(t)$.  Because of this spinor-like behavior (see e.g.\cite{mtw} p.1148), we call it the \textit{Euler Spinor}. An animation showing the continuous evolution of this example is available  \href{https://giphy.com/gifs/VPGMaYyGtF8TqxJy6S}{here}.
In Figure \ref{fig:Euler vec rotating w}, we observe the complex torus-like shape of the trajectory traced out by the Euler vector after a long time when $\pmb{\omega }(t)$ \textit{itself} rotates in a plane at constant frequency. A Poincar\'e section strobe plot (e.g. Strogatz \cite{strog} Example 12.5.2), sampled at the period with which $\pmb{\omega }$ rotates, is shown as well. As the integration time is increased, the strobe plot appears to be densely filling in a closed curve, a hallmark of quasiperiodicity (\cite{ivchenko}, \cite{broer}, \cite{das}). For the rotating $\pmb{\omega}$ case, we provide another animation of the motion of all 3 vectors, $\pmb{\omega}$, $\boldsymbol{B}$, and the rotation axis $\boldsymbol{\hat{E}}$, for the early part of the trajectory 
\href{https://giphy.com/gifs/5NInitbDR3EdOw1LYz}{here}.  Early on, the motion of  $\boldsymbol{\hat{E}}$ appears to be rather "random" and unpredictable, so it is fascinating that eventually, after a long time, the curve ends up tracing out the ordered-looking shape in Figure \ref{fig: E torus norm}! It is tempting to think that perhaps the trajectory is chaotic, though we find that this is not the case. By the end of the animation, one can start to see the final shape taking form. Another plot of the exotic trajectory of $\boldsymbol{E}$ for the case of $\pmb{\omega }$ rotating with a different (faster) frequency is shown in Figure \ref{fig:E and strobe period pi} and discussed later in the text.

For clarity, as solutions are visualized, it is important to keep in mind the various movements that are happening simultaneously: A rigid body rotates about a point that is fixed in the body frame with (generally non-constant) angular velocity $\pmb{\omega }(t)$; the position of a point on the rigid body is labeled by a vector $\boldsymbol{B}$; and finally, the Euler vector evolves in time as the body rotates, with ${\boldsymbol{\hat{E}}}(t)$ tracing out a trajectory which, at any given moment, represents the axis about which the body would have to be rotated from its initial position to get into the current position, with a \textit{single} rotation by angle $|{\boldsymbol{E}}|$ . 

The remainder of this paper is organized as follows:

Section \ref{sec:EulerVec} introduces the Euler vector as an extension of the usual representation of infinitesimal rotation that comes from the existence of angular velocity, suitable for our purposes of growing finite motions from infinitesimal ones.

Section \ref{sec:DiffEqns} gives a simple informal geometric argument with infinitesimals to arrive at the differential equations that are the subject of this paper. Then we give a rigorous proof that the solutions do give correct representations of the motion.  This yields a proof of Euler's theorem using only infinitesimal arguments, as we set out to do. The Modified Gibbs version of the Euler vector is arrived at as giving the simplest equation, and is shown to be equivalent to a quaternion differential equation.  This section also gives an example of a pathological angular velocity function that demonstrates that, without some regularity condition, it is not always possible to have the normalized axis direction vector evolve continuously.

Section \ref{sec: spinor} treats the case that the angular velocity is constant after some initial motion, as in Figures \ref{fig:Euler vec const w trajectory} and \ref{fig:Euler vec const w ind variables}.  A closed form for the solution in this case is given, with a geometric interpretation of the trajectory of ${\boldsymbol{\hat{E}}}$ being the normalization of a simply-parametrized planar ellipse, and with spinor-like properties.

Section \ref{sec: boundary} shows that with a mild regularity condition on the angular velocity function, the normalized rotation axis and rotation angle evolve continuously for all time, by allowing the rotation angle to go beyond $2\pi$ and below zero.  We give an example to show that this can happen in a non-trivial way, though it does not seem to occur naturally in our numerical experiments. The conditions also imply an Euler vector representation for all time, including when the rotation angle takes values that are multiples of $2\pi$. 

Finally, in Section \ref{sec:NumericalSolns}, some numerical solutions of the dynamical system are investigated for a case in which the angular velocity of the body is \textit{not} constant in direction, and in this more complex case it is seen that the Euler vector traces out quasiperiodic trajectories.

\begin{figure}
\centering
	\begin{subfigure}{0.49\textwidth}
		\includegraphics[width=1.0\textwidth,height=1.0\linewidth]{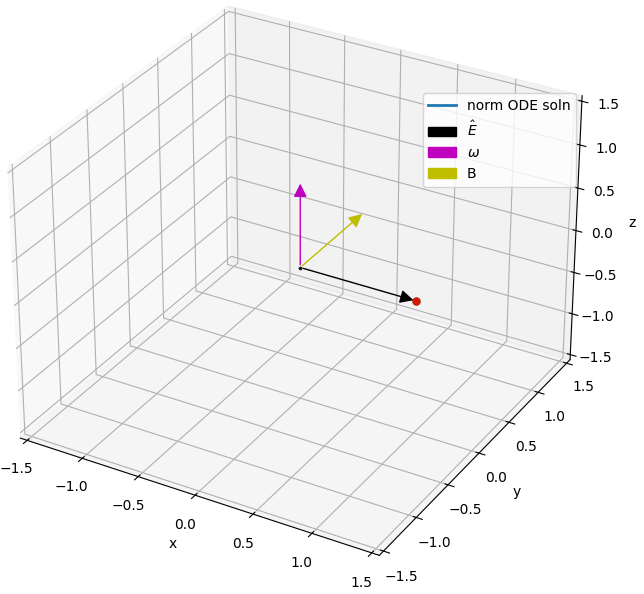}
		\caption{Initial conditions}
		\label{fig:E_0}
	\end{subfigure}
	\begin{subfigure}{0.49\textwidth}
		\includegraphics[width=1.0\textwidth,height=1.0\linewidth]{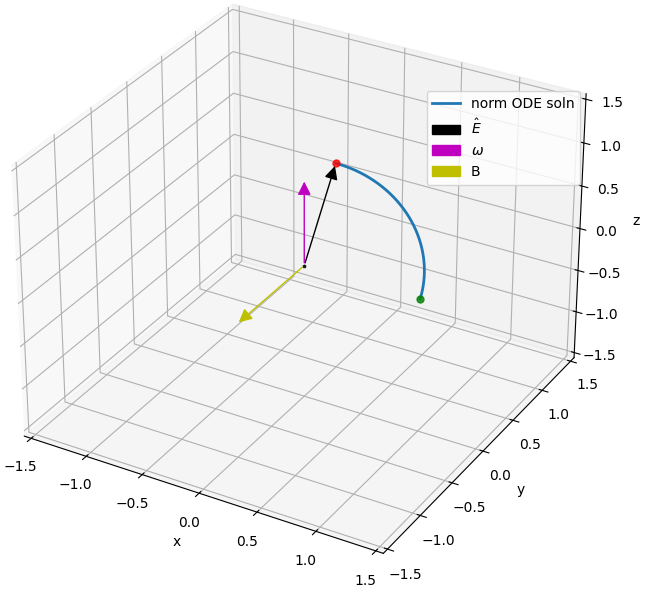}
		\caption{After time $\pi$}
		\label{fig:E_1}
	\end{subfigure}
	\begin{subfigure}{0.49\textwidth}
		\includegraphics[width=1.0\textwidth,height=1.0\linewidth]{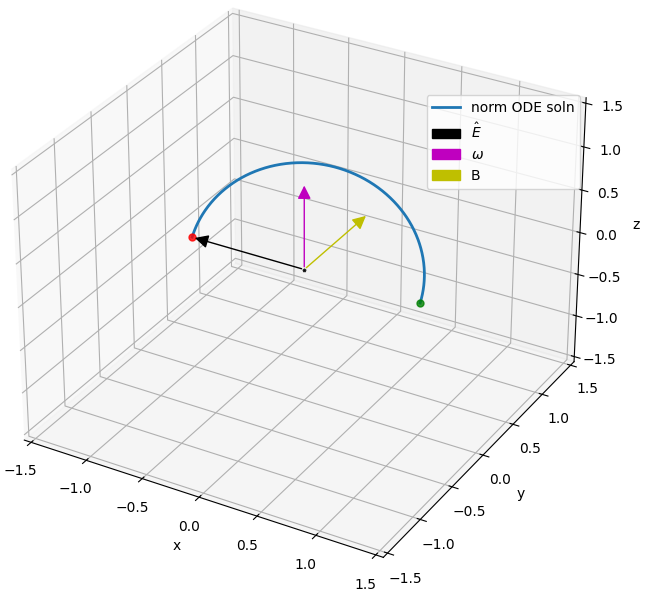}
		\caption{After time 2$\pi$}
		\label{fig:E_2}
	\end{subfigure}
	\begin{subfigure}{0.49\textwidth}
		\includegraphics[width=1.0\textwidth,height=1.0\linewidth]{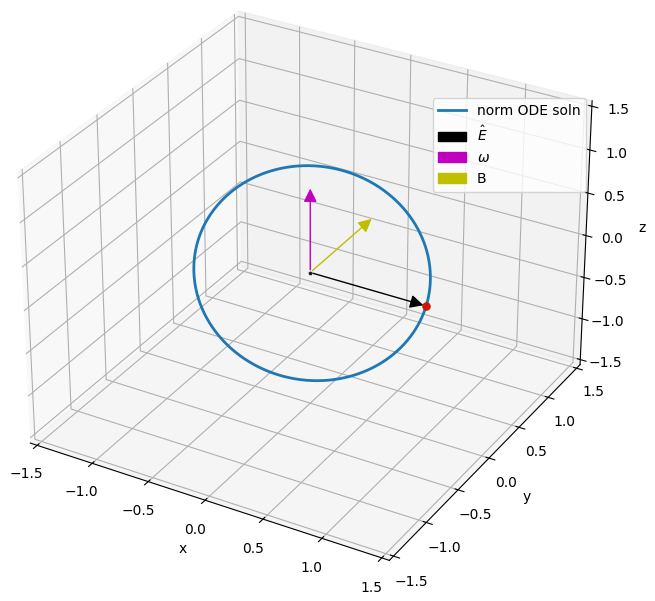}
		\caption{After time 4$\pi$}
		\label{fig:E_4}
	\end{subfigure}

	\caption{Evolution of the normalized Euler vector $\boldsymbol{\hat{E}}$ from times 0 to $\pi$, 2$\pi$, 4$\pi$ for constant $\pmb{\omega}$. Vectors indicating the location of a rigid body and the angular velocity vector are also plotted. The body vector $\boldsymbol{B}$ rotates at a constant rate in the x-y plane around $\pmb{\omega}$, which is along z. The path traced out by the Euler vector makes a complete periodic revolution in a plane that is tilted relative to the path of $\boldsymbol{B}$ in time 4$\pi$, while the body makes two complete revolutions, showing spinor-like behavior.}
	\label{fig:Euler vec const w trajectory}
\end{figure}

\begin{figure}
\centering
	\begin{subfigure}{0.49\textwidth}
		\includegraphics[width=1.0\textwidth,height=1.0\linewidth]{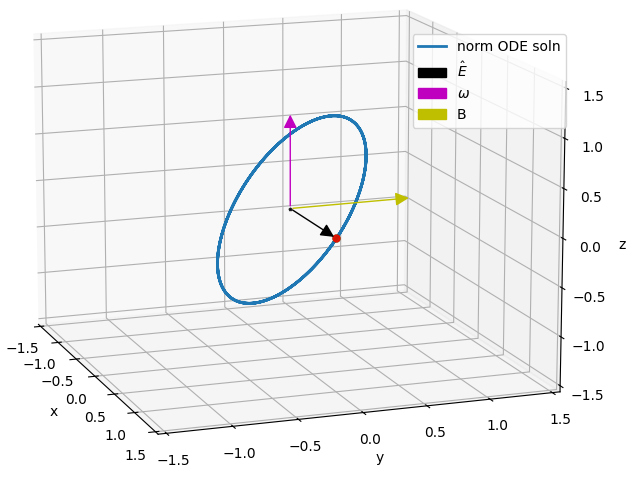}
		\caption{Trajectory of $\boldsymbol{\hat{E}}$}
		\label{fig:E 4 periods}
	\end{subfigure}
	\begin{subfigure}{0.49\textwidth}
		\includegraphics[width=1.0\textwidth,height=1.0\linewidth]{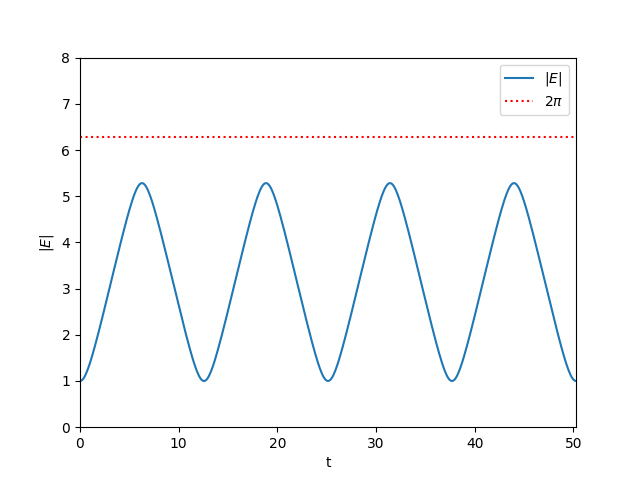}
		\caption{$|\boldsymbol{E}|$ against time}
		\label{fig: E t 4 periods}
	\end{subfigure}
	\begin{subfigure}{0.49\textwidth}
		\includegraphics[width=1.0\textwidth,height=1.0\linewidth]{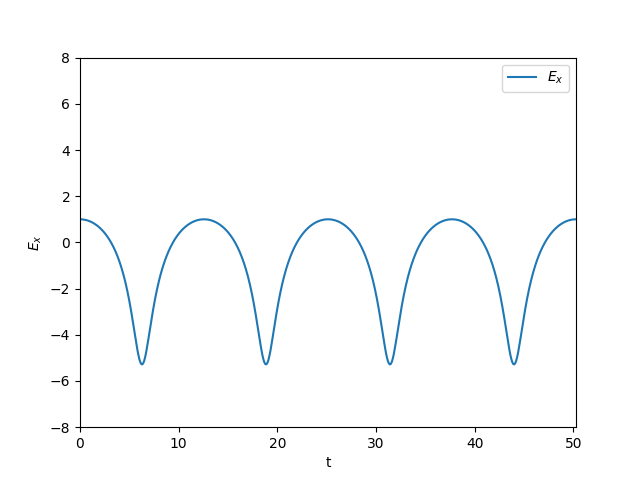}
		\caption{$E_x$ against time}
		\label{fig: Ex t 4 periods}
	\end{subfigure}
 	\begin{subfigure}{0.49\textwidth}
		\includegraphics[width=1.0\textwidth,height=1.0\linewidth]{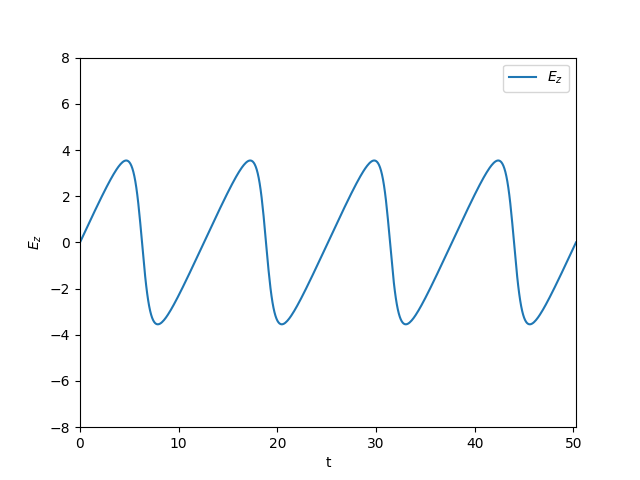}
		\caption{$E_z$ against time}
		\label{fig: Ez t 4 periods}
	\end{subfigure}

	\caption{Evolution of the Euler vector trajectory through several periods for constant $\pmb{\omega}$. We give another view showing the path of the tilted plane for the trajectory of $\boldsymbol{\hat{E}}$ while $\boldsymbol{B}$ rotates in the x-y plane. We also show a time series of the length of the unnormalized $\boldsymbol{E}$ and two components of $\boldsymbol{E}$, demonstrating the periodicity. }
	\label{fig:Euler vec const w ind variables}
\end{figure}

\begin{figure}
\centering
	\begin{subfigure}{0.49\textwidth}
		\includegraphics[width=1.0\textwidth,height=1.0\linewidth]{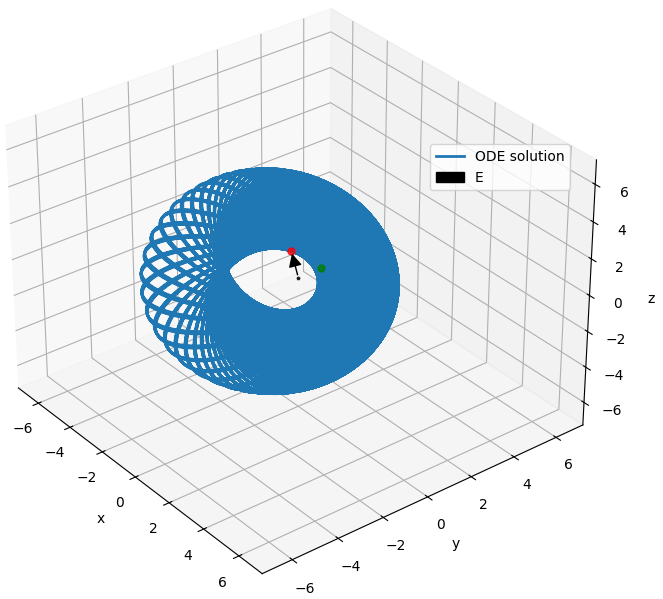}
		\caption{Trajectory of $\boldsymbol{E}$}
		\label{fig: E torus 40}
	\end{subfigure}
	\begin{subfigure}{0.49\textwidth}
		\includegraphics[width=1.0\textwidth,height=1.0\linewidth]{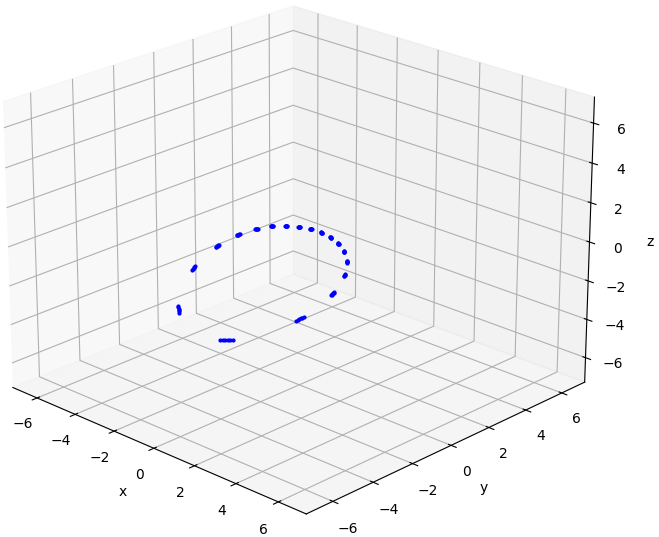}
		\caption{Strobe section plot for $\boldsymbol{E}$}
		\label{fig: E strobe 420 40}
	\end{subfigure}
	\begin{subfigure}{0.49\textwidth}
		\includegraphics[width=1.0\textwidth,height=1.0\linewidth]{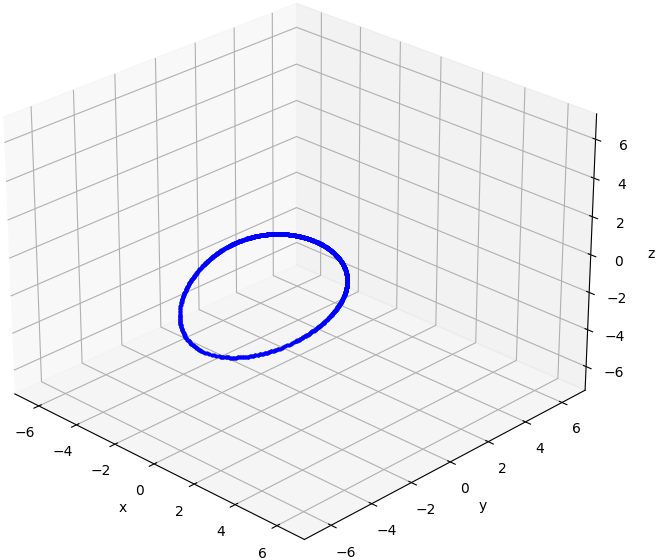}
		\caption{Strobe section plot for $\boldsymbol{E}$ at longer time}
		\label{fig: E strobe 4200 40}
	\end{subfigure}
	\begin{subfigure}{0.49\textwidth}
		\includegraphics[width=1.0\textwidth,height=1.0\linewidth]{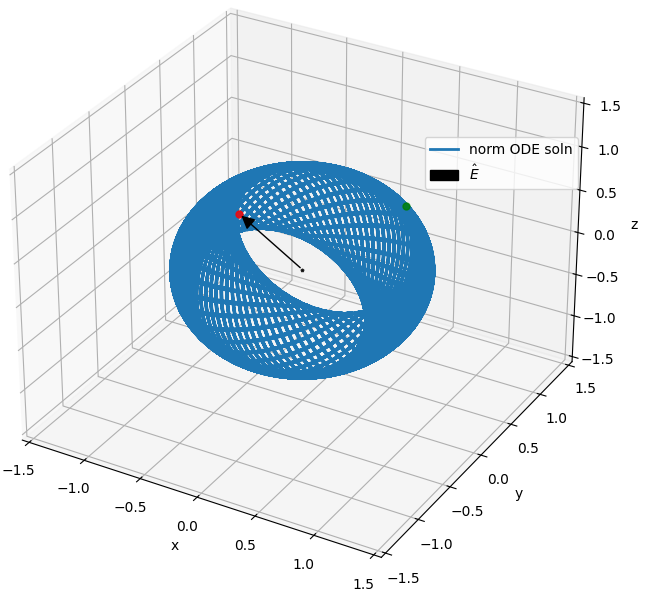}
		\caption{Trajectory of $\boldsymbol{\hat{E}}$}
		\label{fig: E torus norm}
    \end{subfigure}

	\caption{Evolution of the Euler vector trajectory for a rotating $\pmb{\omega}$($t$) with period $T=40$ up to time $t=4200$. In (a), we see the intricate torus-like path traced out by $\boldsymbol{E}$ over this long time. In (b), we view a Poincar\'e strobe plot of the trajectory in (a), sampled at the period with which $\pmb{\omega}$ rotates. (c) shows another strobe plot but for 10x as long of an integration time, showing the strobe section plot densely filling in a closed curve, an indication of quasiperiodicity. (d) shows the trajectory for the \textit{normalized} rotation axis, $\boldsymbol{\hat{E}}$.}
	\label{fig:Euler vec rotating w}
\end{figure}

\section{\textbf{From infinitesimal to finite. The Euler vector.}}
\label{sec:EulerVec}

A rigid body is a collection of points in $\mathbb{R}^3$ whose mutual distances remain constant in time; it is assumed that it has at least 3 non-collinear points.  The motion of the points is assumed continuous.  To speak more precisely about the motion of the body, we attach a cartesian coordinate system to the body, called the body system or frame, which moves along with the body in a space cartesian coordinate system which is not moving.  “Attach” means the points in the body have constant coordinates in the body system.  In the Euler rotation theorem, the fixed point need not be in the original physical body.  It is common, for example, for the center of mass to be  the fixed point, though that is not necessarily a point of the original physical rigid body.

From now on, we assume that the origin of the body system is the same as the origin of the space system at all times, and that at time $t=0$, the two systems coincide.  Quantities indicated with bold-faced vector notation will refer to the space coordinate system, unless stated otherwise.  The purpose of the Euler rotation theorem is to produce a rotation axis in the original space system.

We start with infinitesimal motions, which are easy to treat.  The following does not depend on Euler’s rotation theorem.

\begin{lemma}
[Existence of angular velocity]  For any rigid body motion that leaves the origin of the body system fixed, there exists a unique (axial) vector function $\pmb{\omega }(t)$ such that for any point in the body, if $\bf{r}(t)$  is its displacement at time $t$, then
\begin{equation}
\label{eqn: Ang velocity}
d\bold{r}/dt = \bold{\dot r}(t) = \pmb{\omega }(t) \times \bold{r}(t).
\end{equation}
We often suppress $t$ and write ${\bf{\dot r}} = \pmb{\omega } \times \bf{r}$.
\end{lemma}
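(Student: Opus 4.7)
The plan is to realize the rigid body motion as a time-dependent orthogonal transformation $R(t)$ of $\mathbb{R}^3$, differentiate the orthogonality relation $R R^T = I$ to produce a skew-symmetric matrix at each instant, and identify that matrix with cross-product by a unique axial vector, which will be $\pmb{\omega}(t)$.

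First I would promote the body's pointwise motion to a linear map on all of $\mathbb{R}^3$. Choose three non-collinear body points $\mathbf{b}_1, \mathbf{b}_2, \mathbf{b}_3$; their initial positions together with the fixed origin span $\mathbb{R}^3$. Because each body point has constant coordinates in the body frame, its distance to the body origin (which coincides with the space origin) is preserved, and mutual distances between body points are preserved by hypothesis. The polarization identity then gives $\mathbf{b}_i(t) \cdot \mathbf{b}_j(t) = \mathbf{b}_i(0) \cdot \mathbf{b}_j(0)$ for all $i,j$, so the assignment $\mathbf{b}_i(0) \mapsto \mathbf{b}_i(t)$ extends uniquely to an orthogonal linear map $R(t): \mathbb{R}^3 \to \mathbb{R}^3$ with $R(0) = I$. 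For any other body point $\mathbf{r}$, its distances to the three $\mathbf{b}_i(t)$ pin it down to one of two positions mirrored across the plane of $\mathbf{b}_1(t), \mathbf{b}_2(t), \mathbf{b}_3(t)$, and continuity of the motion together with $R(0) = I$ selects the correct one, so $\mathbf{r}(t) = R(t)\mathbf{r}(0)$.

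Assuming enough smoothness for $\dot{\mathbf{r}}(t)$ to exist (implicit in the statement of the lemma), $R(t)$ is differentiable, and differentiating $R R^T = I$ shows that $\Omega(t) := \dot R(t) R(t)^T$ is skew-symmetric. In three dimensions every skew-symmetric matrix is the map $\mathbf{v} \mapsto \pmb{\omega} \times \mathbf{v}$ for a unique axial vector $\pmb{\omega}$; take $\pmb{\omega}(t)$ to be the axial vector of $\Omega(t)$. Then, for any body point,
\[
\dot{\mathbf{r}}(t) = \dot R(t)\,\mathbf{r}(0) = \Omega(t)\, R(t)\,\mathbf{r}(0) = \Omega(t)\,\mathbf{r}(t) = \pmb{\omega}(t) \times \mathbf{r}(t),
\]
which gives existence.

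Uniqueness is then immediate: if two axial vectors satisfy the equation for every body point, their difference must cross-annihilate $\mathbf{b}_1, \mathbf{b}_2, \mathbf{b}_3$, and non-collinearity prevents these three vectors from all being parallel to a common nonzero direction, so the difference is zero. I expect the main obstacle to be the very first step, namely rigorously deducing, from distance preservation alone and without invoking Euler's rotation theorem (which the paper intends to derive rather than assume), that the body's motion is the restriction of an orthogonal linear map of the ambient space. Once that foothold is secured, the remainder is standard three-dimensional linear algebra.
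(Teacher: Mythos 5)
The paper does not actually prove this lemma; immediately after stating it the authors remark that the existence of angular velocity ``is either assumed or proved (with varying amounts of care) in every physics book'' and cite Goldstein, then simply assume $\pmb\omega$ is continuous. So there is no internal proof to compare against, and your proposal must be judged on its own. Your route --- realize the motion as a time-dependent orthogonal $R(t)$, differentiate $RR^T=I$ to get a skew-symmetric $\dot R R^T$, and read off its axial vector --- is precisely the standard textbook derivation the paper defers to. Your worry about circularity is unfounded: what you invoke is only that a distance-preserving motion fixing the origin is the restriction of an orthogonal linear map, which is strictly weaker than, and does not presuppose, Euler's theorem that such a map in three dimensions is a single rotation about an axis.

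One step needs repair. You assert that the three non-collinear body points ``together with the fixed origin span $\mathbb{R}^3$.'' Non-collinearity does not force this: the three points may be coplanar with the origin (e.g.\ $(1,0,0)$, $(0,1,0)$, $(1,1,0)$), and the paper's definition of a rigid body guarantees only three non-collinear points, so a flat plate through the origin is allowed. In that case the extension of $\mathbf{b}_i(0)\mapsto\mathbf{b}_i(t)$ to an orthogonal map of $\mathbb{R}^3$ is not unique; the two extensions differ by reflection across the plane of the points, and your later continuity argument addresses pinning down a \emph{fourth body point}, not pinning down $R(t)$ itself. The fix is small: adjoin a unit normal $\mathbf{n}_0$ to that plane, map it to the appropriate unit normal $\mathbf{n}(t)$ of the plane of the $\mathbf{b}_i(t)$, and select the sign by continuity from $R(0)=I$; then $R(t)$ is a well-defined continuous path of orthogonal matrices and the rest of your argument proceeds unchanged. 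Your uniqueness argument is correct as written, since non-collinear points cannot all lie on a single line through the origin and hence cannot all be parallel as vectors.
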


The existence of angular velocity is either assumed or proved (with varying amounts of care) in every physics book (e.g. \cite{goldstein} p. 172).  A rigorous proof shows that $\pmb{\omega }(t)$ is continuous if the points in the body have continuous velocities.  We shall assume that $\pmb{\omega }(t)$ is continuous.

Let $d\pmb{\theta } = \pmb{\omega }dt$, so $d{\bf{r}} = d\pmb{\theta } \times {\bf{r}}$.  Thus, infinitesimal motion is indeed rotation, with rotation axis determined by the direction of $d\pmb{\theta }$, and angle $\left| {d{\pmb{\theta }}} \right|$ counterclockwise using the right-hand rule, with thumb pointing in direction of $\pmb{\omega}$.  Everyone is happy to call this a vector, more precisely an axial vector, representing the infinitesimal motion.

For \textit{ non-infinitesimal} rotations of a rigid body, one can do the same thing.   Suppose ${\boldsymbol{\hat{E}}}$ is a unit vector pointing along the axis of rotation, and $\theta $ is the counterclockwise angle of rotation about that axis. Define a finite analog of the infinitesimal rotation vector as
\begin{equation}
\label{Euler vector def 1}
\boldsymbol{E} = {\boldsymbol{\hat{E}}}\theta.
\end{equation}
This is called an \textit{Euler vector} in (\cite{wikiRotationformalisms}, \cite{wikiEulerVec}) and we will use that terminology.  It is also called a rotation vector (\cite{kruglov}, \cite{shuster},\cite{wikiRotationformalisms}), or axis-angle vector \cite{terzakis}. It is referred to as the \textit{Rodrigues parameter} in \cite{kruglov}, although that term is defined differently in \cite{terzakis}.  But one thing they all do is apologize for calling this a vector.   If $\boldsymbol{E}_1$ and $\boldsymbol{E}_2$ are Euler vector representations of two rotations, then $\boldsymbol{E}_1 + \boldsymbol{E}_2$ is generally not an Euler vector representation of the result of doing the two rotations in succession (in either order).  On account of this, \cite{kruglov} p. 236, says that the Euler vector is not a "real vector", and \cite{wikiRotationformalisms} says that this "shows that finite rotations are not really vectors at all".  The physics books don’t even write it down; \cite{goldstein}, p. 163, says that “such a correspondence cannot be made successfully”.  

We disagree.  Although it is true that finite rotations “are” not literally vectors (they are not literally matrices either: they are operations), that doesn’t mean they can’t be represented by vectors.  One property of the Euler vector that attests to its vector quality is that when drawn as an arrow in the real physical space of the rigid body, one sees that it is more than some arbitrary parameterization of the three degrees of freedom of the body: one can view the rotation being around the axis defined by that arrow in the picture, so its direction has real meaning, not an abstraction.  Another test of being a vector is how it transforms.  Under an orthogonal transformation, the rotation axis for a rigid body clearly just transforms along with the body.  If the orthogonal transformation is orientation-preserving, the counterclockwise sense of rotation stays the same, so the Euler vector is transformed the same way as the vectors in the body.  If the transformation is not orientation-preserving (it's determinant is -1), the transformed Euler vector has to be flipped, because the meaning of counterclockwise got reversed.  In other words, the Euler vector transforms as an axial vector, just as does the angular velocity.

Finally, an analogy: the unit vectors are a subset of $\mathbb{R}^3$ (the unit ball).  The sum of two of them is not a unit vector, but no one has ever said that unit vectors aren't "really" vectors on account of that.  They form a subset, not a subspace, but they are still vectors, elements of $\mathbb{R}^3$ . But if the reader prefers to insist that the Euler vector is not truly a vector, it will not affect the discussion that follows.  The reason we bring it up is that perhaps the dynamics of the Euler vector had not been previously considered due to its vector status being discounted.  But it is a natural quantity for passing from infinitesimal to finite motions because the vector that is always used to represent an infinitesimal motion is already in Euler vector form.  In the next section, we also consider generalizations of the Euler vector which behave just like the Euler vector in the infinitesimal limit.

\section{\textbf{Differential equations for the evolution of Euler vector and other vector representations of the motion.}}
\label{sec:DiffEqns}
$ $\\
Let $Rot_{{\bf{n}},\theta}$ be the operation of counterclockwise rotation by angle $\theta $ around the axis through the origin with direction unit vector $\bf{n}$; if $\theta$ is a multiple of $2\pi$, it is understood to be the identity operator even if $\bf{n}$ is undefined. The rotation can be represented by Euler vector $\boldsymbol{E} = \bf{n}\theta$. In (\cite{wikiRotationformalisms}, \cite{wikiEulerVec}), $\theta$ is called the length of the vector, which would make it non-negative. But $Rot_{{\bf{n}},\theta}$ makes sense when $\theta$ is negative; a counterclockwise rotation by a negative amount is just a clockwise rotation by that amount. Since ${\bf{n}}\theta = ({-\bf{n}})(-\theta)$,  $\boldsymbol{E} = \bf{n}\theta$ could always be written as a unit vector times the length of $\bf{E}$, with the length representing the rotation angle, if desired. But later we'll have occasion to need $\theta$ to be negative, in describing a continuous evolution of the Euler rotation axis and angle.

More generally, let $\boldsymbol{F} = {\bf{n}}f(\theta)$, where $f$ is odd with continuous third derivative, and $f'(0)>0$, so $f(\theta) = f'(0)\theta + f^{(3)}(0)\theta^3/6 + o(\theta^3)$.  We call this a \textit{generalized Euler vector representation}. It behaves similarly to the Euler vector for $\theta$ near zero. 
 We'll only be considering two examples in addition to the Euler vector case. 

Recall the following well-known formula.
\begin{lemma}
[Rodrigues' rotation formula] For any point $\bf{r}$, 
\begin{equation}
\label{eqn: Rod formula 1}
Rot_{{\bf{n}},\theta}(\bf{r}) = \bf{r}\cos \theta  + \bf{n}(\bf{n} \cdot \bf{r})(1 - \cos \theta ) + (\bf{n} \times \bf{r})\sin \theta .
\end{equation}
\end{lemma}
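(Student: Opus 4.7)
The plan is to decompose $\mathbf{r}$ into components parallel and perpendicular to the rotation axis $\mathbf{n}$, since these two components behave very differently under $Rot_{\mathbf{n},\theta}$. Write $\mathbf{r} = \mathbf{r}_\parallel + \mathbf{r}_\perp$ where $\mathbf{r}_\parallel = \mathbf{n}(\mathbf{n}\cdot\mathbf{r})$ is along the axis and $\mathbf{r}_\perp = \mathbf{r} - \mathbf{n}(\mathbf{n}\cdot\mathbf{r})$ lies in the plane through the origin perpendicular to $\mathbf{n}$. The parallel component is fixed by the rotation, while the perpendicular component rotates by angle $\theta$ inside that plane.

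To describe this planar rotation, I need an orthonormal basis for the plane. I would use $\mathbf{r}_\perp$ together with $\mathbf{n}\times\mathbf{r}$. Two facts make this convenient: first, $\mathbf{n}\times\mathbf{r} = \mathbf{n}\times\mathbf{r}_\perp$ because $\mathbf{n}\times\mathbf{r}_\parallel = \mathbf{0}$; second, since $\mathbf{n}\perp\mathbf{r}_\perp$ and $|\mathbf{n}|=1$, the vector $\mathbf{n}\times\mathbf{r}$ is perpendicular to $\mathbf{r}_\perp$, lies in the same plane, and has length $|\mathbf{r}_\perp|$. The right-hand rule then identifies $\mathbf{n}\times\mathbf{r}$ as the image of $\mathbf{r}_\perp$ under a counterclockwise quarter-turn viewed with thumb along $\mathbf{n}$. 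Consequently, the rotated perpendicular component is $\mathbf{r}_\perp\cos\theta + (\mathbf{n}\times\mathbf{r})\sin\theta$, and adding back $\mathbf{r}_\parallel$ and collecting terms gives the stated formula.

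An alternative route, aligned with the paper's infinitesimal philosophy, would be to observe from the angular velocity lemma that, for motion about the fixed axis $\mathbf{n}$ at unit rate, the rotated point $\mathbf{r}(\theta) = Rot_{\mathbf{n},\theta}(\mathbf{r}_0)$ satisfies $d\mathbf{r}/d\theta = \mathbf{n}\times\mathbf{r}$. Differentiating once more and applying the BAC--CAB identity yields $d^2\mathbf{r}/d\theta^2 = \mathbf{n}(\mathbf{n}\cdot\mathbf{r}) - \mathbf{r}$; since $\mathbf{n}\cdot\mathbf{r}$ is a conserved quantity along the flow, this reduces to a harmonic equation for $\mathbf{r} - \mathbf{r}_\parallel$. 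Solving with initial conditions $\mathbf{r}(0) = \mathbf{r}_0$ and $\mathbf{r}'(0) = \mathbf{n}\times\mathbf{r}_0$ reproduces Rodrigues' formula.

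The only genuine hazard is keeping the orientation correct: the sign of the $\sin\theta$ term hinges on whether $\mathbf{r}_\perp \mapsto \mathbf{n}\times\mathbf{r}_\perp$ is a counterclockwise or clockwise quarter-turn in the axis-orthogonal plane. I would verify this once explicitly, for instance by specializing to $\mathbf{n} = \hat{\mathbf{z}}$ and $\mathbf{r} = \hat{\mathbf{x}}$, where $\mathbf{n}\times\mathbf{r} = \hat{\mathbf{y}}$, confirming that the cross product supplies the correct counterclockwise partner consistent with the right-hand-rule convention already adopted in the paper.
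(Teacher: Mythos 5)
The paper does not actually prove this lemma; it cites Goldstein and remarks that it is ``an easy trig-geometry exercise,'' delegating the argument to the reader. Your decomposition proof is correct and is the standard argument that such a citation implicitly invokes: split $\mathbf{r}$ into $\mathbf{r}_\parallel = \mathbf{n}(\mathbf{n}\cdot\mathbf{r})$ (fixed) and $\mathbf{r}_\perp$ (rotated in the orthogonal plane), use $\mathbf{n}\times\mathbf{r} = \mathbf{n}\times\mathbf{r}_\perp$ as the quarter-turn partner of $\mathbf{r}_\perp$, apply the planar rotation formula, and recombine. Your attention to the sign of the $\sin\theta$ term and the explicit orientation check at $\mathbf{n}=\hat{\mathbf{z}}$, $\mathbf{r}=\hat{\mathbf{x}}$ is exactly the right hygiene. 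Your alternative route --- treating $Rot_{\mathbf{n},\theta}$ as the flow of $d\mathbf{r}/d\theta = \mathbf{n}\times\mathbf{r}$, noting that $\mathbf{n}\cdot\mathbf{r}$ is conserved, and solving the resulting harmonic equation for $\mathbf{r}-\mathbf{r}_\parallel$ --- is also correct and is arguably the more interesting choice here, since it derives the finite-rotation formula from the same infinitesimal starting point (the angular-velocity lemma) that the rest of the paper builds on, rather than treating Rodrigues' formula as an independent geometric input. Either version is a sound replacement for the proof the paper omits.
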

See e.g.(\cite{goldstein}, p. 162).  It is an easy trig-geometry exercise.

We now give a geometric argument with differentials to get a differential equation for a generalized Euler vector representation of the motion of the body.  We’ll later give a rigorous proof that it is correct, but the picture argument gives insight into why it  evolves the way it does.  Suppose at time $t$, the body position corresponds to rotation by angle $\theta $ about the axis with unit direction vector $\bf{n}$, so ${\bf{F}} = {\bf{n}}f(\theta)$. Note that this amounts to an induction assumption; we are not assuming Euler's rotation theorem.  We'll show that after an infinitesimal motion, the position still corresponds to a rotation. For a point initially at $\bf{r}$, ${\bf{r}}'= Rot_{{\bf{n}},\theta}(\bf{r})$ is its position at time $t.$   By Equation~\ref{eqn: Ang velocity}, its position at time $t + dt$ is ${\bf{r}}'' = {\bf{r}}' + {\pmb{\omega }}dt \times {\bf{r}}'$, the result of infinitesimal motion starting at ${\bf{r}}'$, where ${\pmb{\omega }} = {\pmb{\omega }}(t)$ is the angular velocity (we ignore $o(dt)$ terms, and assume ${\bf{r}}$ is from some bounded set).   The goal is to find ${\bf{\tilde F}} = {\bf{F}} + d{\bf{F}}$ so that for all ${\bf{r}},$ ${\bf{\tilde r}}= Rot_{{\bf{\tilde n}},\tilde \theta}(\bf{r})$ will coincide (up to $o(dt)$) with ${\bf{r}}'',$ where $\tilde {\bf{F}}=\tilde {\bf{n}}{f(\tilde \theta)}.$ That is, the result of an infinitesimal motion after a rotation is again a rotation, with representing vector ${\bf{F}} + d{\bf{F}}.$ To show how to find $d{\bf{F}}$, we separate into two cases: ${\pmb{\omega }}$ parallel to ${\bf{F}},$ and ${\pmb{\omega }}$ perpendicular to ${\bf{F}},$ and then put them together.
  
If ${\pmb{\omega }}$ is parallel to ${\bf{F}}$, then clearly $d{\bf{F}} = {\pmb{\omega }}f'(\theta)dt$: the rotation axis remains parallel to ${\pmb{\omega }}$; just the amount of rotation, and thus the length of ${\bf{F}}$, changes, in time $dt$.

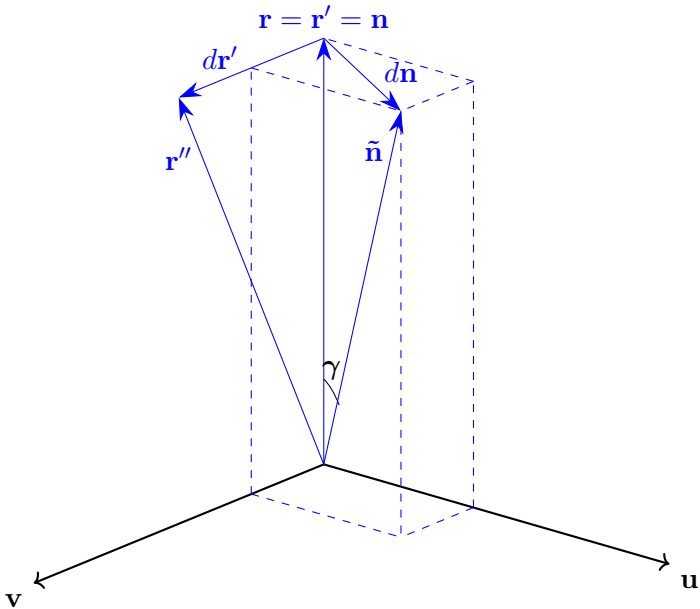
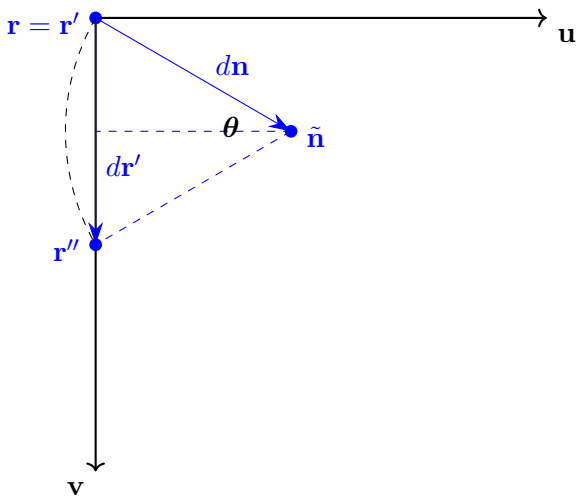
\begin{figure}[h!]
\centering
	\begin{subfigure}{0.85\textwidth}
		\tdplotsetmaincoords{70}{130}
        \begin{tikzpicture}[scale=6, tdplot_main_coords]
        \draw[thick,->] (0,0,0) -- (1,0,0) node[anchor=north east]{$\bf{v}$};
        \draw[thick,->] (0,0,0) -- (0,1,0) node[anchor=north west]{$\bf{u}$};
        \draw[-{Stealth[length=3mm, width=2mm]},color=blue] (0,0,0) -- (0,0,1) node[anchor=south]{$\bf{r}=\bf{r'}=\bf{n}$};

        \pgfmathsetmacro{\rvec}{1.0};
        \pgfmathsetmacro{\thetavec}{30};
        \pgfmathsetmacro{\phivec}{60};

        \tdplotsetthetaplanecoords{\phivec}
        \tdplotdrawarc[tdplot_rotated_coords]{(0,0,0)}{0.2}{0}{\thetavec}{anchor=south}{$\boldsymbol{\gamma}$}

        \tdplotsetcoord{O}{0}{0}{0};
        \tdplotsetcoord{P}{\rvec}{\thetavec}{\phivec}

        \pgfmathsetmacro{\nnx}{1.0*sin(\thetavec)*cos(\phivec)}
        \pgfmathsetmacro{\nny}{1.0*sin(\thetavec)*sin(\phivec)}
        \pgfmathsetmacro{\nnz}{0}

        \draw[-{Stealth[length=3mm, width=2mm]},color=blue] (O) -- (\nnx, \nny, 1)
        node[below=15,left=3]{$\bf{\tilde{n}}$};

        \draw[-{Stealth[length=3mm, width=2mm]},color=blue] (0,0,1) -- (\nnx, \nny, 1)
        node[above=7]{$d\bf{n}$};

        \draw[-{Stealth[length=3mm, width=2mm]},color=blue] (0,0,1) -- (0.5,0,1)
        node[above=15,right=5]{$d\bf{r'}$};

        \draw[-{Stealth[length=3mm, width=2mm]},color=blue] (0,0,0) -- (0.5,0,1)
        node[below=15]{$\bf{r''}$};

        \draw[dashed, color=blue] (\nnx,\nny,1) -- (Pxy);
        \draw[dashed, color=blue] (Px) -- (Pxy);
        \draw[dashed, color=blue] (Py) -- (Pxy);

        \draw[dashed, color=blue] (Py) -- (0,\nny,1);
        \draw[dashed, color=blue] (0,\nny,1) -- (\nnx,\nny,1);
        \draw[dashed, color=blue] (0,\nny,1) -- (0,0,1);
        \draw[dashed, color=blue] (Px) -- (\nnx,0,1);
        \draw[dashed, color=blue] (\nnx,0,1) -- (\nnx,\nny,1);


        \end{tikzpicture}
		\caption{3D view showing how $\bf{r'}$ and $\bf{n}$ change for the infinitesimal motion.}
		\label{fig: geom 1}
	\end{subfigure}
    \vspace{8mm}
 
	\begin{subfigure}{0.85\textwidth}
    \tdplotsetmaincoords{0}{90}
        \begin{tikzpicture}[scale=6, tdplot_main_coords]
        
        \draw[thick,->] (0,0,0) -- (1,0,0) node[anchor=north east]{$\bf{v}$};
        \draw[thick,->] (0,0,0) -- (0,1,0) node[anchor=north west]{$\bf{u}$};

        \pgfmathsetmacro{\rvec}{1.0};
        \pgfmathsetmacro{\thetavec}{30};
        \pgfmathsetmacro{\phivec}{60};

        \tdplotsetcoord{O}{0}{0}{0}
        \tdplotsetcoord{P}{\rvec}{\thetavec}{\phivec}

        \tdplotsetcoord{R}{0.5}{90}{0}
              
        \draw[-{Stealth[length=3mm, width=2mm]},color=blue] (O) -- (P)
        node[above=25, left=12]{$d\bf{n}$};

         \fill [blue] (P) circle[radius=0.4pt]node[below=2, right=2]{$\tilde{\bf{n}}$};

        \draw[dashed, color=blue] (P) -- (Pxz);
        \draw[dashed, color=blue] (P) -- (0.5,0,0);
        
        \pgfmathsetmacro{\ax}{0.5}
        \pgfmathsetmacro{\ay}{-0.1}
        \pgfmathsetmacro{\az}{0}

        \pgfmathsetmacro{\nnx}{1.0*sin(\thetavec)*cos(\phivec)}
        \pgfmathsetmacro{\nny}{1.0*sin(\thetavec)*sin(\phivec)}
        \pgfmathsetmacro{\nnz}{0}

        \draw[-{Stealth[length=3mm, width=2mm]},color=blue] (0,0,0) -- (0.5,0,1) node[above=30, right=0]{$d\bf{r'}$};

         \fill [blue] (0,0,0) circle[radius=0.4pt]node[below=2, left=2]{$\bf{r}=\bf{r'}$};

        \fill [blue] (0.5,0,0) circle[radius=0.4pt]node[below=2, left=2]{$\bf{r''}$};
        
        \node[] at (0.24,0.3) {$\boldsymbol{\theta}$};

        \tdplotdrawarc[dashed]{(\nnx,\nny,\nnz)}{0.5}{180+\phivec}{180+2*\phivec}{anchor=north}{};

        \end{tikzpicture}

		\caption{2D overhead view of the figure in (a) showing the rotation angle $\theta$ as $\bf{r}$ sweeps out an arc in the plane while rotating to $\tilde{\bf{r}}$, which is made to coincide with ${\bf{r}}''$.}
		\label{fig: geom 2}
	\end{subfigure}

	\caption{Geometry for the derivation of the differential equation.}
	\label{fig: geom 1 and geom 2}
\end{figure}

Instead, suppose now that ${\pmb{\omega }}$ is perpendicular to ${\bf{F}}$.  In this case, we will show that we only have to “tilt” ${\bf{F}}$ a little bit, say by angle $\gamma$, without changing its length. That is, we can make this work with ${\bf{\tilde n}} = {\bf{n}} + d{\bf{n}},$ while keeping ${\tilde \theta} = \theta.$ Our goal is to find $d{\bf{n}}$ so that
${\bf{\tilde r}} = {\bf{r}}'' + o(t)$ for all ${\bf{r}}$.  Interestingly, we find that we can determine from pictures what $d{\bf{n}}$ must be, by making ${\bf{\tilde r}} = {\bf{r}}'' + o(t)$ for just \textit{one} judiciously chosen ${\bf{r}}$: namely, choose ${\bf{r}} = {\bf{n}}$.  Then ${\bf{r}}' = {\bf{r}} = {\bf{n}},$ since it is on the rotation axis.  Let ${\bf{u}} = \pmb{\omega }/{\left| \pmb{\omega } \right|}$ and ${\bf{v}} = {\bf{u}} \times {\bf{n}}$. Figure \ref{fig: geom 1 and geom 2} shows what happens to the point ${\bf{r}} = {\bf{n}}.$

We have ${\bf{r}}' = {\bf{r}},$ and then the infinitesimal motion moves it to ${\bf{r}}'' = {\bf{r}}' + d{\bf{r}}'$, where $d{\bf{r}}' = {\pmb{\omega }} \times {\bf{r}}'dt$ 
$ = {\bf{v}}\left| {\pmb{\omega }} \right|dt,$  as shown.  This is a known quantity; it is ${\bf{\tilde n}} = {\bf{n}} + d{\bf{n}}$ that is to be determined.  Now ${\bf{r}}$ is to be rotated by angle $\theta$ about axis $\tilde {\bf{n}},$ which takes place in a plane perpendicular to $\tilde {\bf{n}},$ but since $1-\cos \gamma = O(\gamma ^2) = O((|{{\bf{dn}}}|)^2),$ it is the same as the plane containing the point ${\bf{n}}$ and perpendicular to ${\bf{n}},$ ignoring second-order terms.  Figure ~\ref{fig: geom 2} shows the slice of that from the first figure, shown as coinciding with the plane through ${\bf{n}}$ and perpendicular to $\tilde {\bf{n}}.$  The arc shows ${\bf{r}}$ rotated by $\theta$ about  $\tilde {\bf{n}},$ with  $\tilde{\bf{r}}$ winding up at the same place as ${\bf{r}}'',$ because  $\tilde{\bf{n}}$ is chosen correctly.  Geometry in Figure ~\ref{fig: geom 2} gives $d{\bf{n}} = {\bf{u}}\left| {{\bf{dn}}} \right|\cos (\theta /2) + {\bf{v}}\left| {{\bf{dn}}} \right|\sin (\theta /2),$ and $\left| {{\bf{dn}}} \right| = {\left| {\pmb{\omega }} \right|dt}/(2\sin (\theta /2))$, so
\[d{\bf{n}} = {\bf{u}}\left| {\pmb{\omega }} \right|dt\cot (\theta /2)/2 + {\bf{v}}\left| {\pmb{\omega }} \right|dt/2 = {\pmb{\omega  }}dt\cot (\theta /2)/2 + {\pmb{\omega }} \times {\bf{n}}dt/2.\]
Tracking just a \textit{single} ${\bf{r}}$ has determined what ${\bf{\tilde n}}$ must be, for the case when ${\pmb{\omega }}$ is perpendicular to ${\bf{F}}$. And it shows the geometry that leads to the appearance of $\theta /2$, which is what ultimately causes the spinor-like behavior of ${\bf{n}}(t)$ in some cases, as will be shown.  In \cite{mtw} p. 1137, there is a geometric argument for why half-angles appear in the context of the quaternion representation of finite rotations.  But our setting is completely non-abstract, and the half-angle already shows up in the simple infinitesimal picture.
 
To verify this works for any choice of ${\bf{r}}$, it is enough to verify it for one other ${\bf{r}}$ not collinear with ${\bf{n}}$ and zero, because it is a rigid body. One can use Rodrigues' formula (\ref{eqn: Rod formula 1}) and differential approximation to verify it, with say ${{\bf{r}}}={{\pmb{\omega}}}$. We omit showing that.

Since $\theta $ did not change when ${\pmb{\omega }}$ is perpendicular to ${\bf{F}}$,\\ $d\boldsymbol{F} = d{\bf{n}}f(\theta) = {\pmb{\omega }}f(\theta )\cot(\theta /2 )dt/2 + {\pmb{\omega }} \times \boldsymbol{F}dt/2$ in this case.

To find the total $d\boldsymbol{F}$, since the motions are infinitesimal, we can add together the cases for ${\pmb{\omega }}$ parallel and perpendicular to $\boldsymbol{F}$. In general, write  ${\pmb{\omega }} = {{\pmb{\omega }}_\parallel } + {{\pmb{\omega }}_ \bot } = {\bf{n}}\left( {{\pmb{\omega }} \cdot {\bf{n}}} \right) + \left( {{\pmb{\omega }} - {\bf{n}}\left( {{\pmb{\omega }} \cdot {\bf{n}}} \right)} \right)$. Writing it as a differential equation, $\boldsymbol{\dot{F}} = d\boldsymbol{F}/dt = {\bf{n}}\left( {{\pmb{\omega }} \cdot {\bf{n}}} \right)f'(\theta) + \left( {{\pmb{\omega }} - {\bf{n}}\left( {{\pmb{\omega }} \cdot {\bf{n}}} \right)} \right)f(\theta )\cot(\theta/2)/2 + {\pmb{\omega }} \times \boldsymbol{F}/2$. Rearranging terms, we find our first form of an Euler vector differential equation. 

\subsection{Generalized Euler vector equation.}
\begin{equation}
\label{eqn: gen Euler vec diffeq}
\boldsymbol{\dot{F}}= {\pmb{\omega }}f'(\theta ) - \left\{{\pmb{\omega }}{f^2(\theta)} - {\bf{F}}\left( {{\pmb{\omega }} \cdot {\bf{F}}} \right)\right\}\frac{f'(\theta)-f(\theta)\cot(\theta/2)/2}{f^2(\theta)} + {\pmb{\omega }} \times \boldsymbol{F}/2.
\end{equation}

From assumptions on $f$, $\left\{f'(\theta)-(\theta)\cot(\theta/2)/2\right\}/
{{f^2}(\theta)}$ converges to\\$\left\{f'(0)/12 + 2{f^{(3)}}(0)/3\right\}/f'(0)^2$, as $\theta \rightarrow 0$. Now $f'(0) > 0$, and $f(\theta)$ is invertible on an interval where $f'(\theta) > 0$, and $f$ is odd, so $|f(\theta)| = |{\bf{F}}|$ determines $|\theta|$ for $|{\bf{F}}| \le K$, for some $K>0$.  The r.h.s of (\ref{eqn: gen Euler vec diffeq})  is the same if $\theta \rightarrow -\theta$, so it can be written as a function of $\bf{F}$, for $|{\bf{F}}| \le K$. It follows that
(\ref{eqn: gen Euler vec diffeq}), with initial condition ${\bf{F}}(t_0)= \bf{0}$, is solvable for $t_0 \le t < t_0 + \delta$ for some $\delta > 0$. There is not a problem with the differential equation for $\left| \boldsymbol{F}({t_0}) \right| = 0$. In fact, that is the motivating case, starting from the body system aligned with the fixed system, and growing out of the infinitesimal beginning motion.  Suppose $t_0 = 0$ for convenience. The differential equation near zero is simply $\boldsymbol{\dot{F}} = {\pmb{\omega }}f'(0) + O\left({{\left| \boldsymbol{F} \right|}} \right)$, so then $\boldsymbol{F}(t) = f'(0)\int_0^t {{\pmb{\omega }}(\tau )d\tau }  + O(t^2)$, for $0 \le t \ll 1.$ This is as expected, since infinitesimal rotations just add (integrate) to produce the net rotation vector.
\subsection{Euler axis and angle as separate variables.}
Consider the behavior of $\bf{n}$ and $\theta$.  Suppose ${\bf{F}}(t)$ satisfies (\ref{eqn: gen Euler vec diffeq}) on some interval $t_0 \le t < t_1$, and that $0 < |{\bf{F}}(t)| \le K$ for $t_0 < t < t_1$, so that ${\bf{F}}(t) = {\bf{n}}(t)f(\theta(t))$ uniquely determines ${\bf{n}}(t)$ and $\theta (t) > 0$ as continuous functions on $t_0 < t < t_1$. (Or, the sign of $\theta (t)$ and ${\bf{n}}(t)$ could both be flipped, but we don't need to discuss negative $\theta$ until a later section). If $|{\bf{F}}(t_0)| = 0$, then $\theta(t_0) = 0$ and ${\bf{n}}(t_0)$ is undefined. Now $f^2(\theta) = {\bf{F}} \cdot {\bf{F}}$, so $2f(\theta)f'(\theta){\dot{\theta}} = 2{\bf{F}} \cdot \dot{{\bf{F}}}$, and from (\ref{eqn: gen Euler vec diffeq}), ${\bf{F}} \cdot \dot{{\bf{F}}} = {\pmb{\omega}} \cdot {\bf{F}}f'(\theta)$. Neither $f(\theta)$ or $f'(\theta)$ is zero, so $\dot{\theta} = {\pmb{\omega}} \cdot {\bf{n}}$. Also, differentiating  ${\bf{F}} = {\bf{n}}f(\theta)$ gives ${\dot{\bf{F}}} = {\dot{\bf{n}}}f(\theta) + {\bf{n}}f'(\theta)\dot{\theta}$. This may be solved for ${\dot{\bf{n}}}$, using (\ref{eqn: gen Euler vec diffeq}) and $\dot{\theta} = {\pmb{\omega}} \cdot {\bf{n}}$.  For future reference, we record these, for $t_0 < t < t_1$: 
\begin{equation}
\begin{aligned}
\label{eqn: joint n theta diffeq}
\dot \theta  &= {\pmb{\omega }} \cdot {\bf{n}} \\
{\bf{\dot n}}  &= ({\pmb{\omega }} - {\bf{n}}({\pmb{\omega }} \cdot {\bf{n}}))\cot(\theta /2 )/2  + {\pmb{\omega }} \times {\bf{n}}\,/2 
\end{aligned} 
\end{equation}

The fact that $\boldsymbol{F}(t)$ behaves nicely as $t \rightarrow 0^+$ for any continuous $\pmb{\omega}$, when $|{\bf{F}}(0)| = 0$, does not imply that ${\bf{n}}(t)$ does. Here is a pathological example.
\begin{exa}
 Let
\begin{equation}
\label{eqn: pathological example}
{\pmb{\omega}}(t) = 3t^2 \left( {\bf{i}}\sin(1/t)+{\bf{j}}\cos(1/t)\right) + t \left( -{\bf{i}}\cos(1/t)+{\bf{j}}\sin(1/t)\right),t>0
\end{equation}
with ${\pmb{\omega}}(0) = {\bf{0}}$.
Then $\int_0^t {{\pmb{\omega }}(\tau )d\tau } = t^3 \left( {\bf{i}}\sin(1/t)+{\bf{j}}\cos(1/t)\right)$.  From this and the differential equation it can be shown that ${\bf{n}}(t) = {\bf{i}}\sin(1/t)+{\bf{j}}\cos(1/t) + O(t)$, which does not converge as $t\rightarrow 0^+$; it spins wildly. Later we'll show that with mild regularity conditions on $\pmb{\omega}$, this behavior cannot happen, and $\bf{n}$ approaches a limit as $t\rightarrow 0^+$, and the same for $\dot{\bf{n}}$, with proper conditions on $\pmb{\omega}$.
\end{exa}
This example shows that although coding the rotation as an Euler vector works well all the way down to the infinitesimal case, thinking of the Euler rotation as described by axis and angle separately can break down in the infinitesimal limit.  When the rotation angle is near zero, the points in the rigid body barely move, but the directions of the axis - that is, the ${\bf{n}}$'s - can be changing wildly, if $\pmb{\omega}$ is allowed to be an arbitrary continuous function.

We only look at three examples of generalized Euler vector representations.

\subsection{Euler vector equation.} Take $f(\theta)=\theta$, and refer to $\bf{F}$ as $\bf{E}$.  Then the differential equation, written entirely in terms of $\bf{E}$, becomes
\begin{equation}
\label{eqn: Euler diff eq rep 1}
\boldsymbol{\dot{E}} = {\pmb{\omega }} - \left\{{\pmb{\omega }}{| \boldsymbol{E}|^2} - \boldsymbol{E} \left( {{\pmb{\omega }} \cdot \boldsymbol{E}} \right)\right\} \left( {\frac{{1 - g\left( {\left| \boldsymbol{E} \right|} \right)}}{{{{\left| \boldsymbol{E} \right|}^2}}}} \right) + {\pmb{\omega }} \times \boldsymbol{E}/2,
\end{equation}
where $g(\theta ): = (\theta /2)\cot (\theta /2) = 1 - {\theta ^2}/12 + O(\theta ^4).$ Now $(1-g(\theta))/\theta^2$ is real analytic except at non-zero integer multiples of $2\pi$. It follows from standard facts about differential equations that if $\left| \boldsymbol{E}({t_0}) \right| \ne 2k\pi$ for any positive integer $k$, a solution exists and can be continued for $t > {t_0}$ until/unless $\left| {\boldsymbol{E}(t)} \right|$ approaches $2\pi k$ in finite time, for some positive integer $k$. As mentioned above, $\left| \boldsymbol{E}({t_0}) \right| = 0$ is not a problem. 

Later we'll show that with regularity conditions on $\pmb{\omega}$, the solution for ${\bf{E}}$, as well as $\bf{n}$ and $\theta$ (and their derivatives) may be continued continuously for all $t$; we'll need to let $\theta$ go below zero as well as above $2\pi$. This will accomplish our goal of describing a continuous evolution of the Euler axis and angle, that is, $\bf{n}$ and $\theta$, for all $t$. In many interesting cases, $\theta$ stays between zero and $2\pi$ anyway.

\subsection{Modified Gibbs vector equation, and quaternions.} A glance at (\ref{eqn: gen Euler vec diffeq}) shows that the equation is by far the simplest if we take $f(\theta)=\sin(\theta /2)$, because the middle term disappears entirely. The vector ${\bf{M}} = {\bf{n}}\sin(\theta/2)$ is called the \textit{modified Gibbs representation} in \cite{kruglov}, motivated in a different way. The differential equation becomes
\begin{equation}
\label{eqn: Modified Gibbs vec diffeq}
{\bf{\dot{M}}} = {\pmb{\omega }}\cos(\theta /2)/2 + {\pmb{\omega }} \times \boldsymbol{M}/2.
\end{equation}
This $f(\theta)$ is invertible on the interval $|\theta| \le \pi$, and $|\sin(\theta/2)| = |{\bf{M}}|$ determines $\cos(\theta/2) = \sqrt{1 - |{\bf{M}}|^2}$, which allows (\ref{eqn: Modified Gibbs vec diffeq}) to be written entirely in terms of $\bf{M}$; but the sign of $\cos(\theta /2)$ cannot be determined from $\bf{M}$ if $\theta$ is allowed to be greater than $\pi$. That problem can be eliminated by introducing a fourth variable, letting $M_0 = \cos(\theta /2)$, so $\dot{M_0} = -\dot{\theta}\sin(\theta /2)/2 = -\pmb{\omega} \cdot {\bf{n}}\sin(\theta /2)/2 = -\pmb{\omega} \cdot {\bf{M}}/2$, using (\ref{eqn: joint n theta diffeq}).  We now have a differential equation in 4 variables $(M_0,{\bf{M}}) = (M_0,M_1,M_2,M_3)$:
\begin{equation}
\label{eqn: quaternion diffeq}
{\bf{\dot{M}}} = {\pmb{\omega }}M_0 /2 + {\pmb{\omega }} \times \boldsymbol{M}/2;
\dot{M_0} = -{\pmb{\omega}} \cdot {\bf{M}}/2.
\end{equation}
This is in fact a well-known quaternion differential equation \cite{chou}. Let $Q=(M_0,{\bf{M}})=$\\$(M_0,M_1,M_2,M_3)$. This is a unit quaternion, since $\sum_{i=1}^{4}{M_i}^2 = 1$.  Equation (\ref{eqn: quaternion diffeq}) can be written as ${\dot{Q}} = \omega \otimes Q /2$, where $\omega = (0,{\pmb{\omega}})$, a quaternion with zero "real" part, and $\otimes$ is quaternion multiplication. We arrived at the equation in a different way, using only infinitesimal rigid body motions in real three-dimensional space.

For any initial condition $(M_0(t_0), {\bf{M}}(t_0))$ satisfying the normalization condition $M_0(t_0)^2 + {\bf{M}}(t_0) \cdot {\bf{M}}(t_0)  = 1$, equation (\ref{eqn: quaternion diffeq}) is solvable for all $t$, and the differential equation shows that $(d/dt)\{M_0(t)^2 + {\bf{M}}(t) \cdot {\bf{M}}(t)\} = 0$, so the normalization condition remains true.  If $|M_0(t)| < 1$, we have ${\bf{M}}(t) \ne {\bf{0}}$, so a unique $0 < \theta(t) < 2\pi$ is determined from $\cos(\theta(t)/2 = M_0(t)$, and ${\bf{n}}(t) = {\bf{M}}(t)/\sin(\theta(t)/2)$. This makes  ${\bf{n}}$ differentiable on any interval where $|M_0(t)| < 1$, and a similar argument to the one given to prove (\ref{eqn: joint n theta diffeq}) shows that (\ref{eqn: joint n theta diffeq}) holds here as well. If $|M_0(t_1)| = 1$ for some $t_1$, the same example as above shows that ${\bf{n}}(t_1)$ cannot necessarily be defined to make ${\bf{n}}$ continuous at $t_1$, without some conditions (to be discussed later) on ${\pmb{\omega}}$.

\subsection{Gibbs vector equation.} The representation ${\bf{G}} = {\bf{n}}\tan(\theta/2)$ is called variously the Gibbs vector \cite{kruglov}, or the Rodrigues vector \cite{wikiRotationformalisms}. Differential equation (\ref{eqn: gen Euler vec diffeq}) becomes
\begin{equation}
\label{eqn: Gibbs vec diffeq}
{\bf{\dot{G}}} = {\pmb{\omega }}/2 + \boldsymbol{G} \left( {{\pmb{\omega }} \cdot \boldsymbol{G}} \right)/2 + {\pmb{\omega }} \times \boldsymbol{G}/2.
\end{equation}
This is a three-variable equation which is simpler than the other three-variable equations, which is interesting.  But ${\bf{G}}$ blows up as $\theta$ approaches $\pi$, and there is no fix for this.\\

From here on, we will mainly consider the Euler vector equation (\ref{eqn: Euler diff eq rep 1}) for ${\bf{E}}$, and the 4-variable (quaternion) version (\ref{eqn: quaternion diffeq}) of the Modified Gibbs representation ${\bf{M}}$. The Euler vector codes arbitrarily large rotation angles $\theta$, which will be useful when we discuss having a continuously evolving axis and angle; but the solution to (\ref{eqn: Euler diff eq rep 1}) cannot necessarily be continued for all time without some mild restrictions on ${\pmb{\omega}}$. The quaternion solution exists for all time with no restrictions, but does not distinguish between angles that differ by multiples of $2\pi$.

The following theorem formally states and proves what was indicated above by a geometric argument, for these two cases.
\begin{theorem}
\label{thm: main}
  Let ${\pmb{\omega }}$ be the angular velocity function of a rigid body for which the origin of the body system is constrained to be at the origin of the space system.
  
  Let ${\bf{E}}(t)$ solve equation (\ref{eqn: Euler diff eq rep 1}) for ${t_0} \le t < {t_1}$, with $\left| {\bf{E}}(t) \right| \ne 2k\pi$ for any positive integer $k$. Let $\theta(t) = \left| {\bf{E}}(t) \right|$, and ${\bf{n}}(t) = {\bf{E}}(t) / \theta(t)$ if $\theta(t) \ne 0$.

  Or, let $({M_0(t),\bf{M}}(t))$ solve equation (\ref{eqn: quaternion diffeq}) for ${t_0} \le t < t_1$, with $M_0(t_0)^2 + {\bf{M}}(t_0) \cdot {\bf{M}}(t_0)  = 1$. Let $\theta(t)= 2{\cos^{ - 1}}(M_0(t))$, using any branch of the $\arccos$ function, and ${\bf{n}}(t) = {\bf{M}}(t) /\sin(\theta(t)/2)$ if $\left| {\bf{M}}(t) \right| \ne 0$.
  
  If the position of the body at time ${t_0}$ is $Rot_{{\bf{n}}(t_0),\theta (t_0)}$, then the position at time $t$ is $Rot_{{\bf{n}}(t),\theta (t)}$, for $t_0 \le t < t_1$.
  \end{theorem}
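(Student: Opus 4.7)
The plan is to reduce the theorem to uniqueness of solutions of the linear ODE $\dot{\boldsymbol{r}} = \pmb{\omega}\times\boldsymbol{r}$. Fix any body point whose space position at $t=0$ is $\boldsymbol{p}$ (equivalently, its body-frame coordinates are $\boldsymbol{p}$). Let $\boldsymbol{r}(t)$ denote its actual trajectory; by the angular velocity lemma, $\dot{\boldsymbol{r}} = \pmb{\omega}\times\boldsymbol{r}$, and by hypothesis $\boldsymbol{r}(t_0) = Rot_{{\bf{n}}(t_0),\theta(t_0)}(\boldsymbol{p})$. Define the candidate $\tilde{\boldsymbol{r}}(t) := Rot_{{\bf{n}}(t),\theta(t)}(\boldsymbol{p})$, which agrees with $\boldsymbol{r}(t_0)$ at $t=t_0$. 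If I can verify $\dot{\tilde{\boldsymbol{r}}} = \pmb{\omega}\times\tilde{\boldsymbol{r}}$ on $[t_0,t_1)$, then uniqueness for this linear ODE with Lipschitz right-hand side forces $\tilde{\boldsymbol{r}}\equiv\boldsymbol{r}$, which is exactly the conclusion (as $\boldsymbol{p}$ was arbitrary).

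For the Euler vector version, Rodrigues' formula (\ref{eqn: Rod formula 1}) gives
\[
\tilde{\boldsymbol{r}}(t) = \boldsymbol{p}\cos\theta + {\bf{n}}({\bf{n}}\cdot\boldsymbol{p})(1-\cos\theta) + ({\bf{n}}\times\boldsymbol{p})\sin\theta.
\]
I would differentiate in $t$ and substitute $\dot{\theta} = \pmb{\omega}\cdot{\bf{n}}$ along with $\dot{{\bf{n}}} = (\pmb{\omega} - {\bf{n}}(\pmb{\omega}\cdot{\bf{n}}))\cot(\theta/2)/2 + \pmb{\omega}\times{\bf{n}}/2$ from (\ref{eqn: joint n theta diffeq}), then show the result equals $\pmb{\omega}\times\tilde{\boldsymbol{r}}$. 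The crucial simplifying identity is $\cot(\theta/2)\sin\theta = 1+\cos\theta$, which makes the $\cot(\theta/2)$ contributions (coming from $\dot{{\bf{n}}}$) combine cleanly with the $\sin\theta$ and $(1-\cos\theta)$ factors; triple-product identities and the axis-preservation identity ${\bf{n}}\cdot\tilde{\boldsymbol{r}} = {\bf{n}}\cdot\boldsymbol{p}$ take care of the cross products. A much cleaner bookkeeping route mirrors the geometric derivation itself: decompose $\pmb{\omega} = \pmb{\omega}_\parallel + \pmb{\omega}_\perp$ relative to ${\bf{n}}$, check the two cases separately (in the parallel case $\dot{{\bf{n}}} = 0$ and $\dot\theta = \pm|\pmb{\omega}|$; in the perpendicular case $\dot\theta = 0$), then add by linearity of both the ODE (\ref{eqn: joint n theta diffeq}) and the map $\pmb{\omega}\mapsto\pmb{\omega}\times\tilde{\boldsymbol{r}}$.

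For the quaternion version, identify $\boldsymbol{p}$ with the pure quaternion $(0,\boldsymbol{p})$. The classical formula $Rot_{{\bf{n}},\theta}(\boldsymbol{p}) = Q\otimes (0,\boldsymbol{p})\otimes Q^*$, valid since $Q = (M_0,{\bf{M}}) = (\cos(\theta/2),{\bf{n}}\sin(\theta/2))$ is a unit quaternion, gives $\tilde{\boldsymbol{r}} = Q\otimes(0,\boldsymbol{p})\otimes Q^*$. Differentiating and using $\dot{Q} = \omega\otimes Q/2$ with $\omega = (0,\pmb{\omega})$ from (\ref{eqn: quaternion diffeq}), together with $\dot{Q^*} = -Q^*\otimes\omega/2$ (since $\omega^* = -\omega$ for pure-imaginary $\omega$), yields $\dot{\tilde{\boldsymbol{r}}} = (\omega\otimes\tilde{\boldsymbol{r}} - \tilde{\boldsymbol{r}}\otimes\omega)/2$. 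For two pure-imaginary quaternions one has $\omega\otimes\tilde{\boldsymbol{r}} = -\pmb{\omega}\cdot\tilde{\boldsymbol{r}} + \pmb{\omega}\times\tilde{\boldsymbol{r}}$, so the commutator collapses to $(0,\pmb{\omega}\times\tilde{\boldsymbol{r}})$, as required.

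The main obstacle is the algebraic bookkeeping in the Euler vector case: without the $\pmb{\omega}_\parallel/\pmb{\omega}_\perp$ split, several terms involving $\cot(\theta/2)$, $\sin\theta$, and $\cos\theta$ must conspire to cancel, and the derivation is easy to get wrong. The quaternion computation is essentially mechanical once the $Q\otimes(0,\boldsymbol{p})\otimes Q^*$ formula is in hand, and one could in principle deduce the Euler vector case from it via the correspondence ${\bf{E}} = \theta{\bf{n}}$, $Q = (\cos(\theta/2),{\bf{n}}\sin(\theta/2))$ — but a direct computation, as sketched here, is more in keeping with the paper's stated goal of deriving everything from elementary rigid-body geometry.
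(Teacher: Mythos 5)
Your overall strategy is the same as the paper's: show that $\tilde{\boldsymbol{r}}(t) := Rot_{{\bf n}(t),\theta(t)}(\boldsymbol{p})$ satisfies $\dot{\tilde{\boldsymbol{r}}} = \pmb{\omega}\times\tilde{\boldsymbol{r}}$, then invoke uniqueness of solutions to the rigid-body kinematic ODE. For the Euler vector case the direct computation you sketch (differentiate Rodrigues, substitute $\dot\theta = \pmb{\omega}\cdot{\bf n}$ and the $\dot{\bf n}$ equation, use $\cot(\theta/2)\sin\theta = 1+\cos\theta$ and a triple-product identity) is exactly what the paper does; it clinches the cancellation with the 1901 Gibbs four-vector identity. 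Your suggested $\pmb{\omega}_\parallel/\pmb{\omega}_\perp$ split is a valid bookkeeping alternative, since both $(\dot{\bf n},\dot\theta)$ and $\pmb{\omega}\mapsto\pmb{\omega}\times\tilde{\boldsymbol{r}}$ are linear in $\pmb{\omega}$, though you do not carry either route to completion.

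Your quaternion argument is a genuine and welcome departure from the paper. Rather than reusing the component computation (the paper merely remarks that the ${\bf M}$-proof is ``similar''), you invoke the sandwich formula $\tilde{\boldsymbol{r}} = Q\otimes(0,\boldsymbol{p})\otimes Q^*$, differentiate it using $\dot Q = \omega\otimes Q/2$ and $\dot{Q^*} = -Q^*\otimes\omega/2$, and collapse the commutator to $\pmb{\omega}\times\tilde{\boldsymbol{r}}$. This is cleaner and, importantly, smooth in $Q$ even where ${\bf M}(t)={\bf 0}$, so it handles the multiples-of-$2\pi$ points with no special argument — an advantage over the paper's treatment.

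The one real gap is in the Euler vector branch: the theorem's hypothesis permits $|{\bf E}(t)| = 0$, and there ${\bf n}(t)$ is undefined and the differentiation-via-(\ref{eqn: joint n theta diffeq}) step does not apply. You cannot just differentiate Rodrigues and substitute $\dot{\bf n}$ at such a point. The paper handles this separately: at a time $t$ with $\theta(t)=0$, it expands ${\bf r}(t')-{\bf r}(t)$ directly in $h=t'-t$, using ${\bf E}(t') = \pmb{\omega}(t)h + o(h)$ to get ${\bf n}(t') \to \pmb{\omega}(t)/|\pmb{\omega}(t)|$ when $\pmb{\omega}(t)\ne{\bf 0}$ (and a trivial argument when $\pmb{\omega}(t)={\bf 0}$), concluding $\dot{\bf r}(t) = \pmb{\omega}(t)\times{\bf r}_b$. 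You should add this limiting argument, or alternatively note that one can pass from ${\bf E}$ to $Q = \bigl(\cos(|{\bf E}|/2),\, \hat{\bf E}\sin(|{\bf E}|/2)\bigr)$, verify that $Q$ solves (\ref{eqn: quaternion diffeq}), and then deduce the Euler vector case from your quaternion argument, which is uniformly valid.
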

\begin{proof} We describe a rigorous proof, one which also does not assume Euler's finite rotation theorem.
For a point in the body, let ${{\bf{r}}_b}$ be its position in the body coordinate system. If $\theta(t) = 2\pi m$ for some integer $m$ (in which case ${\bf{n}}(t)$ is not defined), define ${\bf{r}}(t) = {\bf{r}}_b$. Otherwise, define ${\bf{r}}(t) = {{\bf{r}}_b}\cos \theta  + {\bf{n}}({\bf{n}} \cdot {{\bf{r}}_b})(1 - \cos \theta ) + ({\bf{n}} \times {{\bf{r}}_b})\sin \theta$; by equation (\ref{eqn: Rod formula 1}), this is the rotation of ${{\bf{r}}_b}$ specified by axis-angle $({\bf{n}}(t),\theta(t))$.  If we show that ${\bf{\dot r}}(t) = {\pmb{\omega }}(t) \times {\bf{r}}(t)$, ${t_0} \le t < {t_1}$, the theorem will be proved, since the solution to the differential equation (\ref{eqn: Ang velocity}) uniquely determines the path of a point in the body, given its starting position.

If $\theta(t) \ne 2\pi m$ for any integer $m$, ${\bf{n}}$ is defined in a nbhd of $t$.  Compute
${\bf{\dot r}}(t) = ( - {{\bf{r}}_b} + {\bf{n}}({\bf{n}} \cdot {{\bf{r}}_b}))\dot \theta \sin \theta  + ({\bf{\dot n}}({\bf{n}} \cdot {{\bf{r}}_b}) + {\bf{n}}({\bf{\dot n}} \cdot {{\bf{r}}_b}))(1 - \cos \theta ) + ({\bf{\dot n}} \times {{\bf{r}}_b})\sin \theta  + ({\bf{n}} \times {{\bf{r}}_b})\dot \theta \cos \theta$. Use (\ref{eqn: joint n theta diffeq}) to express ${\bf{\dot n}}$ and $\dot \theta$ in this expression. Then with some amount of algebra, using a vector triple product identity and the trig identity $\cot(\theta /2) = \sin(\theta)/(1- \cos(\theta))$, we get\\
${\bf{\dot r}} = \left( {{\bf{n}}\left( {({\pmb{\omega }} \times {\bf{n}}) \cdot {{\bf{r}}_b}} \right) - {\bf{n}} \times {{\bf{r}}_b}({\pmb{\omega }} \cdot {\bf{n}})
 +{\pmb{\omega }} \times {\bf{n}}({\bf{n}} \cdot {{\bf{r}}_b})} \right)(1 - \cos \theta )/2\\
+\left( {{\bf{n}}({\pmb{\omega }} \cdot {{\bf{r}}_b}) - {{\bf{r}}_b}({\pmb{\omega }} \cdot {\bf{n}})} \right)\sin \theta  + {\pmb{\omega }} \times {{\bf{r}}_b}(1 + \cos \theta )/2$. But\\
${\pmb{\omega }} \times {\bf{r}} = {\pmb{\omega }} \times {{\bf{r}}_b}\cos \theta  + {\pmb{\omega }} \times {\bf{n}}({\bf{n}} \cdot {{\bf{r}}_b})(1 - \cos \theta ) + \left( {{\bf{n}}({\pmb{\omega }} \cdot {{\bf{r}}_b}) - {{\bf{r}}_b}({\pmb{\omega }} \cdot {\bf{n}})} \right)\sin \theta $; we need to show this is equal to ${\bf{\dot r}}$.   Subtracting from ${\bf{\dot r}}$ and rearranging and writing $1={\bf{n}}\cdot{\bf{n}}$ to facilitate using an identity, we get ${\bf{\dot r}} - {\pmb{\omega }} \times {\bf{r}} ={ \bf{Q}}(1 - \cos \theta )/2$, where \\
$\bf{Q} =  {{\bf{n}}\,\left( {{{\bf{r}}_b} \cdot ({\pmb{\omega }} \times {\bf{n}})} \right) - {\pmb{\omega }} \times {\bf{n}}\,({\bf{n}} \cdot \,{{\bf{r}}_b}) - {\bf{n}} \times \,{{\bf{r}}_b}({\pmb{\omega }} \cdot {\bf{n}}) - \,{{\bf{r}}_b} \times {\pmb{\omega }}({\bf{n}} \cdot {\bf{n}})}$. The following identity of Gibbs from 1901 (which does not seem to appear in current textbooks) shows $\bf{Q}$ is identically zero.

\begin{lemma}
[Vector identity] (\cite{gibbs_id} p. 77, also \cite{wikigibbs_id}). For ${\bf{A,B,C,D}}$ in $\mathbb{R}^3$,\\
${\bf{D}}\left( {{\bf{A}} \cdot ({\bf{B}} \times {\bf{C}})} \right) = {\bf{B}} \times {\bf{C}}({\bf{A}} \cdot {\bf{D}}) + {\bf{C}} \times {\bf{A}}({\bf{B}} \cdot {\bf{D}}) + {\bf{A}} \times {\bf{B}}({\bf{C}} \cdot {\bf{D}})$.
\end{lemma}
Letting ${\bf{A}} = {{\bf{r}}_b},{\bf{B}} = {\pmb{\omega }},{\bf{C}} = {\bf{D}} = {\bf{n}}$ in the identity shows ${\bf{Q}} = {\bf{0}}$.\\

There remains the case $\theta (t) = 2\pi m$ for integer $m$, so ${\bf{r}}(t)={\bf{r}}_b$. Then ${\bf{r}}(t')-{\bf{r}}(t)\\= \{-{{\bf{r}}_b} + {\bf{n}}(t')({\bf{n}}(t') \cdot {{\bf{r}}_b})\}(1 - \cos\theta(t'))  + ({\bf{n}}(t') \times {{\bf{r}}_b})\sin \theta(t')\\= ({\bf{n}}(t') \times {{\bf{r}}_b})\theta(t') + O((\theta(t'))^2)$. (If $\cos \theta(t') = 1$, this is zero without ${\bf{n}}(t')$ being defined.) We give the proof for ${\bf{E}}$, the proof for ${\bf{M}}$ being similar. For this case, $\theta (t) = 0$.
From (\ref{eqn: Euler diff eq rep 1}), and using ${\bf{E}}(t) = {\bf{0}}$, ${\bf{E}}(t') = {\bf{E}}(t) + {\dot{\bf{E}}(t)}h + o(h) = {\pmb{\omega }}(t)h + o(h)$, where $h=t'-t$. Thus $\theta(t') = |{\bf{E}}(t')| = {\pmb{\omega }}(t)|h| + o(h)$. If
$|{\pmb{\omega }}(t)| \ne 0$, ${\bf{n}}(t') = {\bf{E}}(t')/|{\bf{E}}(t')| = {\pmb{\omega }}(t)/|{\pmb{\omega }}(t)| + o(1)$, so ${\bf{r}}(t')-{\bf{r}}(t) = \{{\pmb{\omega }}(t)/|{\pmb{\omega }}(t)| + o(1)\} \times {\bf{r}}_b(|{\pmb{\omega }}(t)|h| + o(h)) = ({\pmb{\omega }}(t) \times {\bf{r}}_b) h + o(h)$. So $\{{\bf{r}}(t')-{\bf{r}}(t)\}/h \rightarrow {\pmb{\omega }}(t) \times {\bf{r}}_b$ as $h \rightarrow 0$, and ${\dot{\bf{r}}}(t) = {\pmb{\omega }}(t) \times {\bf{r}}_b$.  If
$|{\pmb{\omega }}(t)| = 0$, ${\bf{r}}(t')-{\bf{r}}(t) = o(h)$, and 
${\dot{\bf{r}}}(t) = 0 = {\pmb{\omega }}(t) \times {\bf{r}}_b$.
\end{proof}

Theorem \ref{thm: main} accomplishes what initially motivated us: to use only infinitesimal motions to generate all future positions of the body, by means of a differential equation, and in the process showing that finite motions are also rotations, giving yet another proof of Euler's rotation theorem. The version with ${\bf{M}}$ gives an axis and angle for the position for all time, though they do not necessarily evolve continuously for all time; example (\ref{eqn: pathological example}) shows that such is not always possible.  The version with ${\bf{E}}$ also could be used to verify all future positions are rotations, even though the solution for ${\bf{E}}$ cannot necessarily be continued for all time, if $|{\bf{E}}(t)|$ approaches a positive multiple of $2\pi$ in finite time. If $|{\bf{E}}(t)|$ reaches say $3\pi/2$ at time $t_1$, restart with initial condition $-{\bf{E}}(t_1)/3$, which amounts to flipping the axis and starting over with rotation angle $\pi/2$, corresponding to the same physical position. This avoids dealing with the boundary issues.

In Section \ref{sec: boundary}, we'll show that that, in fact, it is possible to have continuous evolution of axis and angle, and also continue the solution for ${\bf{E}}$ for all time, with mild regularity assumptions on ${\pmb{\omega }}$; for example, if it is never zero. But first, in Section \ref{sec: spinor}, we present some interesting cases where the angle stays between $0$ and $2 \pi$ automatically, and ${{\bf{n}}(t)} = {{\hat{\boldsymbol{E}}}(t)} = \boldsymbol{E}(t)/\left| {\boldsymbol{E}(t)} \right|$  exhibits spinor-like behavior.

\section{\textbf{Simple examples. The Euler Spinor.}}
\label{sec: spinor}

Suppose that from time $t_0$ forward, ${\pmb{\omega }}$ is constant and not zero.  If ${\bf{E}}$ is parallel to ${\pmb{\omega }}$ at that time, then it stays parallel, and only the length of ${\bf{E}}$ changes, going forward.  But if ${\bf{E}}$ is not parallel to ${\pmb{\omega }}$, we will show ${\bf{\hat E}}$  moves in a plane circle, with spinor-like behavior, tracking the rays of a simply parameterized ellipse, with twice the period of the rotating body, and changing signs as the body goes through one revolution.  Similar to the discussion in \cite{mtw} p. 1149, we think of the Euler vector solution, being defined by its relation to the surrounding fixed system, as manifesting "entanglement" with the surrounding after only one revolution of the body, even though the body itself is restored to its original position.  We call ${\bf{\hat E}}$ the \textit{Euler Spinor}.  We first discovered this result by numerical experiments, as seen in the motivating Figures \ref{fig:Euler vec const w trajectory} and \ref{fig:Euler vec const w ind variables} and the corresponding animation. Next we give an analytical solution to describe this behavior.  

 For notational convenience, we state the result for $t_0=0$ and $\left| {\pmb{\omega }} \right|=1$. See Figure \ref{fig: euler_spinor_tikz}.


\begin{figure}
\centering
  
\begin{tikzpicture}
	\node[ellipse, dashed,
	draw = black,
	text = blue,
	minimum width = 10cm, 
	minimum height = 4cm] (e) at (0,0) {};

 \node[circle,
    draw,
    minimum size =4cm] (c) at (0,0){};

    \draw[line width=1pt,black,-stealth](0,0) -- (2,0) node[anchor= west]{$\boldsymbol{e_2}$};
    
    \draw[line width=1pt,black,-stealth](0,0) -- (0,-2) node[anchor=north ]{$\boldsymbol{e_1}$};

      \draw[line width=1pt,red,-stealth](0,0)--(2*0.2588, -2*0.9659) node[anchor=north ]{$\boldsymbol{{\hat{E}}_0}$};


      \draw[line width=1pt,blue,-stealth](0,0)--(2*0.866, -2*0.5) node[above=3,right=0]{$\boldsymbol{\hat{E}}(t)$};

      \draw[line width=1pt,blue,-stealth,dashed](2*0.866, -2*0.5)--(3.27*0.866, -3.27*0.5) node[below=3,right=1]{$\boldsymbol{p}(t)$};

    \node[] () [below = 2.5cm] at (0,0) {};

\end{tikzpicture}

	\caption{Euler spinor dynamics. The normalized Euler vector ${\bf{\hat E}}(t)$ moves in a plane unit circle inscribed in an ellipse, guided by the ray to a point ${\bf{p}}(t)$ moving on the ellipse. The speed is maximum when ${\bf{\hat E}}(t)$ is along ${\bf{e}}_1$, the direction of the semi-minor axis of the ellipse, and the speed is minimum along ${\bf{e}}_2$, the semi-major axis.}
	\label{fig: euler_spinor_tikz}
\end{figure}
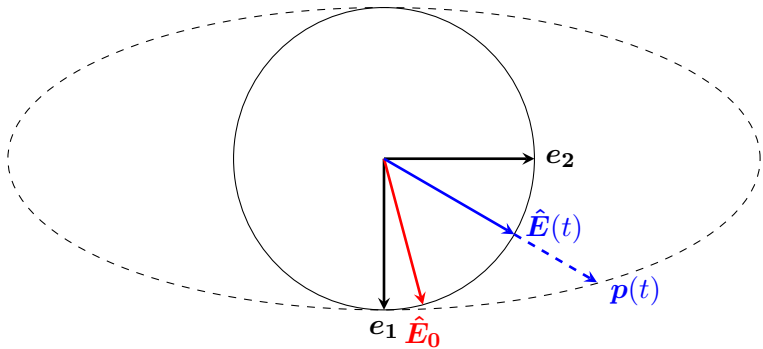

\begin{theorem}
\label{thm: spinor}
Suppose ${\pmb{\omega }}(t) = {\pmb{\omega }}$ is constant for $t \ge {t_0}$, and $\left| {\pmb{\omega }} \right|=1$.  Let $0 < {\theta _0} < 2\pi $, and let ${{\bf{n}}_0}$ be a unit vector such that $\left| {{\pmb{\omega }} \cdot {{\bf{n}}_0}} \right| < 1$; that is, ${{\bf{n}}_0}$ is not parallel to ${\pmb{\omega }}$.  Then there is ${\bf{E}}(t)$, for all $t \ge {t_0}$, that solves equation (\ref{eqn: Euler diff eq rep 1}) with initial condition ${\bf{E}}({t_0}) = {{\bf{n}}_0}{\theta _0}$. Let ${{\bf{n}}(t)} = {\bf{\hat E}}(t)$, so ${{\bf{n}}_0} = {{\bf{\hat E}}_0}$, and $\theta(t) = \left| {{\bf{E}}(t)} \right|$, so that ${\bf{E}}(t) = {\bf{n}}(t)\theta(t)$. Both $\theta $ and ${\bf{n}}$ are $4\pi$  periodic.  We have
\begin{flalign*}
\theta \left( {t + 2\pi} \right) = 2\pi  - \theta \left( t \right),\text{and } {\bf{n}}\left( {t + 2\pi } \right) =  - {\bf{n}}(t),
\end{flalign*}
so that ${\bf{n}}$ changes sign when the body goes through one complete revolution.   Letting ${{\mathop{\rm Cos}\nolimits} ^{ - 1}}$ be the principal branch of the $\arccos $ function, there are constants $a$ and $b$, with $0 < \left| {{\pmb{\omega }} \cdot {{\bf{n}}_0}} \right| <a < 1$, such that 
\begin{flalign*}
\theta (t) = 2{{\mathop{\rm Cos}\nolimits} ^{ - 1}}\left( {a \cos \left( {t/2\,\,\, + \,\,\,b} \right)} \right),
\end{flalign*}
so $\theta $ oscillates between ${\theta _{\min }} = 2{{\mathop{\rm Cos}\nolimits} ^{ - 1}}a$  and ${\theta _{\max }} = 2{{\mathop{\rm Cos}\nolimits} ^{ - 1}}( - a) = 2\pi  - {\theta _{\min }}.$ We have $1-{a^2} =  \left( 1 - \left( {\pmb{\omega }} \cdot {{\bf{n}}_0} \right)^2 \right){{\sin }^2}({\theta _0}/2)$, $b = {{\mathop{\rm Cos}\nolimits} ^{ - 1}}\left( {a^{-1}} \cos ({\theta _0}/2) \right)$, for computing $a$ and $b$.

There are orthogonal unit vectors ${{\bf{e}}_1},{{\bf{e}}_2}$ such that ${\bf{n}}(t) = {\bf{p}}(t)/\left| {{\bf{p}}(t)} \right|$ for $t \ge 0$, where
\begin{flalign*}
 {\bf{p}}(t) = {{\bf{e}}_1}\cos \left( {t/2\,\,\, + \,\,\,b} \right) + {{\bf{e}}_2}\left( {1 - {a^2}}\right)^{-1/2}\sin \left( {t/2\,\,\, + \,\,\,b} \right).
\end{flalign*}
Thus ${\bf{n}}(t)$ moves in a plane unit circle inscribed in an ellipse, guided by the ray to a point moving on the ellipse, as in Figure \ref{fig: euler_spinor_tikz}.  The speed $\left| {{\bf{\dot n}}(t)} \right|$ is maximum when ${\bf{n}}(t) =  \pm {{\bf{e}}_1}$, the direction of the semi-minor axis of the ellipse.  It is slowest when ${\bf{n}}(t) =  \pm {{\bf{e}}_2}$ .  In terms of the initial conditions, a formula for these basis vectors for the plane is
${{\bf{e}}_1} = {{\bf{n}}_0}\sqrt {1 - {k^2}}  - {\bf{u}}k$, ${{\bf{e}}_2} = {{\bf{n}}_0}k + {\bf{u}}\sqrt {1 - {k^2}} $, where ${\bf{u}} = {\bf{\dot n}}(0)/\left| {\bf{\dot n}}(0) \right|$ is a unit vector orthogonal to ${{\bf{n}}_0}$, and $k = {\pmb{\omega }} \cdot {{\bf{n}}_0}/a$.  And ${\bf{u}} = {\bf{u}}_1\cos ({\theta _0}/2) + {\bf{u}}_2\sin ({\theta _0}/2)$, where ${\bf{u}}_1 = \left({\pmb{\omega }} - {{\bf{n}}_0}({\pmb{\omega }} \cdot {{\bf{n}}_0}\right)/\sqrt {1 - ({\pmb{\omega }} \cdot {{\bf{n}}_0})^2}$
 is the normalization of ${\pmb{\omega }} - pro{j_{{{\bf{n}}_0}}}({\pmb{\omega }})$, and ${\bf{u}}_2 = {\bf{u}}_1 \times {{\bf{n}}_0}$ is the normalization of ${\pmb{\omega }} \times {{\bf{n}}_0}.$  Geometrically, ${\bf{u}}$ is obtained by rotating ${\bf{u}}_1$ by angle ${\theta _0}/2$ towards ${\bf{u}}_2$.
\end{theorem}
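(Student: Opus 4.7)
The plan is to bypass the nonlinear Euler equation (\ref{eqn: Euler diff eq rep 1}) entirely and work in the 4-variable quaternion form (\ref{eqn: quaternion diffeq}), which for constant unit $\pmb{\omega}$ is linear with constant coefficients; everything in the theorem then falls out of one explicit quaternion exponential. First I would solve $\dot Q = (\omega\otimes Q)/2$ in closed form as $Q(t) = [\cos s + \omega \sin s]\otimes Q(t_0)$ with $\omega=(0,\pmb{\omega})$ and $s=(t-t_0)/2$, starting from $Q(t_0)=(\cos(\theta_0/2),\sin(\theta_0/2){\bf{n}}_0)$. Expanding this quaternion product produces $M_0(t)$ as a linear combination $A\cos s + B\sin s$, which I would rewrite as $a\cos(s+b)$ by matching $a^2=A^2+B^2$ and $\tan b=B/A$; the stated identity $1-a^2=(1-(\pmb{\omega}\cdot{\bf{n}}_0)^2)\sin^2(\theta_0/2)$ and the stated closed form for $b$ then drop out mechanically. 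Since $\theta(t)=2\cos^{-1}M_0(t)$, this immediately gives the oscillatory formula for $\theta$ and its range $[\theta_{\min},\theta_{\max}]\subset (0,2\pi)$ (using $a<1$, which holds because ${\bf{n}}_0$ is not parallel to $\pmb{\omega}$); that range never meets $0$ or a positive multiple of $2\pi$, so $|{\bf{E}}|=\theta$ stays bounded away from the singularities of (\ref{eqn: Euler diff eq rep 1}), establishing global existence of ${\bf{E}}(t)$.

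Next I would analyze the planar nature of ${\bf{n}}(t)$. The same closed form writes ${\bf{M}}(t)$ as a sinusoidal combination of the two constant vectors ${\bf{n}}_0$ and $\cos(\theta_0/2)\pmb{\omega}+\sin(\theta_0/2)\pmb{\omega}\times{\bf{n}}_0$, so it lives in a fixed 2-plane $\Pi$; since ${\bf{n}}(t)={\bf{M}}(t)/\sin(\theta(t)/2)$ with $\sin(\theta/2)>0$ throughout, ${\bf{n}}(t)\in\Pi$ as well, and it is the radial normalization of the planar curve traced by ${\bf{M}}(t)$. Using $|{\bf{M}}|^2=1-M_0^2=1-a^2\cos^2(s+b)$, a short algebraic manipulation recasts this as $(1-a^2)$ times the squared norm of an ellipse with semi-minor axis $1$ and semi-major axis $(1-a^2)^{-1/2}$, which is precisely $|{\bf{p}}(t)|^2$ in the theorem. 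The explicit identification of the orthonormal pair ${\bf{e}}_1,{\bf{e}}_2\in\Pi$ in terms of the initial data is done by re-expanding $\cos s$ and $\sin s$ in terms of $\cos(s+b)$ and $\sin(s+b)$ in the closed form for ${\bf{M}}$, reading off the two coefficient vectors, and simplifying via $a\cos b=\cos(\theta_0/2)$ and $a\sin b=\sin(\theta_0/2)(\pmb{\omega}\cdot{\bf{n}}_0)$. Verifying that the resulting pair is orthonormal (here the orthonormality of $\{{\bf{n}}_0,{\bf{u}}\}$ is what one exploits) and that it matches the stated expressions involving $k$, ${\bf{u}}_1$, ${\bf{u}}_2$ is the main algebraic obstacle of the proof.

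Finally, the spinor-like properties are forced by the half-angle $s=(t-t_0)/2$: one has $M_0(t+2\pi)=-M_0(t)$ and ${\bf{M}}(t+2\pi)=-{\bf{M}}(t)$, giving $\theta(t+2\pi)=2\cos^{-1}(-M_0(t))=2\pi-\theta(t)$, and then $\sin(\theta(t+2\pi)/2)=\sin(\pi-\theta(t)/2)=\sin(\theta(t)/2)$ combined with ${\bf{M}}(t+2\pi)=-{\bf{M}}(t)$ yields ${\bf{n}}(t+2\pi)=-{\bf{n}}(t)$; both $\theta$ and ${\bf{n}}$ are therefore $4\pi$-periodic. For the speed statement I would differentiate ${\bf{n}}={\bf{p}}/|{\bf{p}}|$ to get $|\dot{\bf{n}}|=|{\bf{p}}\times\dot{\bf{p}}|/|{\bf{p}}|^2$; the numerator is the (constant) areal speed of the ellipse parametrization ${\bf{p}}(t)$, so $|\dot{\bf{n}}|$ is maximal at the point of smallest $|{\bf{p}}|$, namely ${\bf{n}}=\pm{\bf{e}}_1$ on the minor axis, and minimal at the point of largest $|{\bf{p}}|$, namely ${\bf{n}}=\pm{\bf{e}}_2$ on the major axis.
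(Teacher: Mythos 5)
Your proposal is correct and leads to exactly the same explicit formulas as the paper, but it enters differently. The paper dodges the ODEs altogether: it observes that the position at time $t$ is the composition of the initial rotation $({\bf{n}}_0,\theta_0)$ with a subsequent rotation by angle $t$ about $\pmb{\omega}$, and then simply quotes Rodrigues' 1840 composition formulas (\ref{eqn: Rod comp angle})--(\ref{eqn: Rod comp vector}) to get $\cos(\theta(t)/2)$ and ${\bf{n}}(t)\sin(\theta(t)/2)$ in closed form. You instead keep the dynamical-systems viewpoint: you exploit the fact that the quaternion equation (\ref{eqn: quaternion diffeq}) is linear with constant coefficients when $\pmb{\omega}$ is constant, solve it as $Q(t)=\exp\bigl((t-t_0)\,\omega/2\bigr)\otimes Q(t_0)=(\cos s+\omega\sin s)\otimes Q(t_0)$ (using $\omega^2=-1$), and expand the quaternion product. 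That expansion reproduces precisely the two Rodrigues expressions the paper quotes, so from that point on the phasor manipulation, the identification of $a$, $b$, the 2-plane, the orthonormal pair ${\bf{e}}_1,{\bf{e}}_2$, the spinor signs under $t\mapsto t+2\pi$, and the areal-speed argument for the extrema of $|\dot{\bf{n}}|$ are the same algebra in either approach. What your route buys is self-containedness (you re-derive the composition formula rather than importing it) and a clean justification for global existence of ${\bf{E}}(t)$: since $|M_0|\le a<1$ forces $\theta(t)$ to stay strictly inside $(0,2\pi)$, the solution of (\ref{eqn: Euler diff eq rep 1}) never approaches a singular value of $|{\bf{E}}|$. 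What the paper's route buys is brevity, at the cost of invoking a classical identity.

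One small slip: after writing $M_0 = A\cos s + B\sin s$ and matching $a\cos(s+b)=a\cos b\cos s-a\sin b\sin s$, you get $A=a\cos b$ and $B=-a\sin b$, so $\tan b=-B/A$, not $B/A$. Since $B=-\sin(\theta_0/2)\,\pmb{\omega}\cdot{\bf{n}}_0$, this gives the paper's $\sin b=(\pmb{\omega}\cdot{\bf{n}}_0/a)\sin(\theta_0/2)$ and $\cos b=\cos(\theta_0/2)/a$; with the sign as you wrote it, $b$ would have the wrong sign. This is a bookkeeping error, not a gap in the method.
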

\begin{proof}
Since $\pmb{\omega}$ is constant, one could get a second order differential equation involving only $\theta$ with $\bf{n}$ eliminated, and solve that by separation of variables, and then solve for $\bf{n}$.  But since the position at time $t$ is the result of rotation by angle ${\theta}_0$ about axis ${\bf{n}}_0$ followed by rotation by angle $t$ about axis $\pmb{\omega}$, we can instead just use Rodrigues' 1840 formula for the composition of two rotations (see \cite{wikiRotationformalisms}), stated in axis-angle terms: 

Let $R_3 = R_2R_1$, where $R_i$ is rotation by angle ${\alpha}_i$ counterclockwise about unit vector axis ${\bf{A}}_i$.  Then
\begin{flalign}
 \cos({{\alpha}_3}/2)& = \cos({{\alpha}_2}/2)\cos({{\alpha}_1}/2) - {{\bf{A}}_2} \cdot {{\bf{A}}_1}\sin({{\alpha}_2}/2)\sin({{\alpha}_1}/2) \label{eqn: Rod comp angle}  \\
 {{\bf{A}}_3}\sin({{\alpha}_3}/2)& = {{\bf{A}}_2}\sin({{\alpha}_2}/2)\cos({{\alpha}_1}/2) + {{\bf{A}}_1}\cos({{\alpha}_2}/2)\sin({{\alpha}_1}/2) \label{eqn: Rod comp vector}\\
 & + {{\bf{A}}_2} \times {{\bf{A}}_1}\sin({{\alpha}_2}/2)\sin({{\alpha}_1}/2)\nonumber
\end{flalign}
Applied to our problem, the angle equation (\ref{eqn: Rod comp angle}) is\\ $\cos({{\theta}(t)}/2) = \cos(t/2)\cos({{\theta}_0}/2) - {{\pmb{\omega}}} \cdot {{\bf{n}}_0}\sin(t/2)\sin({{\theta}_0}/2).$ In phasor form,\\
$\cos(\theta (t)/2) = \sqrt {{{\cos }^2}({\theta _0}/2) + {{({\pmb{\omega }} \cdot {{\bf{n}}_0})}^2}{{\sin }^2}({\theta _0}/2)} \cos (t/2 + b)$\\$ = \sqrt {1 - \left( {1 - {{({\pmb{\omega }} \cdot {{\bf{n}}_0})}^2}} \right){{\sin }^2}({\theta _0}/2)} \cos (t/2 + b)$$ = a\cos (t/2 + b)$,\\ where $a = \sqrt {1 - \left( {1 - {{({\pmb{\omega }} \cdot {{\bf{n}}_0})}^2}} \right){{\sin }^2}({\theta _0}/2)} $ and $b = {{\mathop{\rm Cos}\nolimits} ^{ - 1}}\left( {{a^{ - 1}}\cos ({\theta _0}/2)} \right).$\\ \\
The vector equation \ref{eqn: Rod comp vector} gives ${\bf{n}}(t)\sin (\theta (t)/2) = {\pmb{\omega }}\sin (t/2)\cos ({\theta _0}/2)$$ + {{\bf{n}}_0}\sin ({\theta _0}/2)\cos (t/2)$$\\ + {\pmb{\omega }} \times {{\bf{n}}_0}\sin (t/2)\sin ({\theta _0}/2)$.  We will put this in the form stated in the theorem, to get our geometric interpretation.  By standard identities it can be written in terms of $\cos (t/2 + b)$ and $\sin (t/2 + b)$ as
${\bf{n}}(t)\sin (\theta (t)/2) = {{\bf{v}}_1}\cos (t/2 + b) + {{\bf{v}}_2}\sin (t/2 + b),$ where\\
${{\bf{v}}_1} = {{\bf{n}}_0}\sin ({\theta _0}/2)\cos b - {\pmb{\omega }}\cos ({\theta _0}/2)\sin b - {\pmb{\omega }} \times {{\bf{n}}_0}\sin ({\theta _0}/2)\sin b,$ and\\
${{\bf{v}}_2} = {{\bf{n}}_0}\sin ({\theta _0}/2)\sin b + {\pmb{\omega }}\cos ({\theta _0}/2)\cos b + {\pmb{\omega }} \times {{\bf{n}}_0}\sin ({\theta _0}/2)\cos b.$   By definition, $a\cos b = \cos ({\theta _0}/2)$, and $1 - {a^2}$ = ${\sin ^2}({\theta _0}/2)\left( {1 - {{({\pmb{\omega }} \cdot {{\bf{n}}_0})}^2}} \right)$, from which $\sin b = ({\pmb{\omega }} \cdot {{\bf{n}}_0}/a)\sin ({\theta _0}/2).$   With foresight, write in terms of three mutually orthogonal vectors ${{\bf{n}}_0},$ ${\pmb{\omega }} - {{\bf{n}}_0}({\pmb{\omega }} \cdot {\bf{n}}),$ and ${\pmb{\omega }} \times {{\bf{n}}_0}.$ We have\\
$a{{\bf{v}}_1}$ $= {{\bf{n}}_0}\left( {1 - {{({\pmb{\omega }} \cdot {{\bf{n}}_0})}^2}} \right)\sin ({\theta _0}/2)\cos ({\theta _0}/2)$$ - \left( {{\pmb{\omega }} - {{\bf{n}}_0}({\pmb{\omega }} \cdot {{\bf{n}}_0})} \right)({\pmb{\omega }} \cdot {{\bf{n}}_0})\cos ({\theta _0}/2)\sin ({\theta _0}/2)$$ -\\ ({\pmb{\omega }} \times {{\bf{n}}_0})({\pmb{\omega }} \cdot {{\bf{n}}_0}){\sin ^2}({\theta _0}/2)$$= \sin ({\theta _0}/2)\sqrt {1 - {{({\pmb{\omega }} \cdot {{\bf{n}}_0})}^2}} \left\{ {{{\bf{n}}_0}\sqrt {1 - {{({\pmb{\omega }} \cdot {{\bf{n}}_0})}^2}} \cos ({\theta _0}/2) - {\bf{u}}({\pmb{\omega }} \cdot {{\bf{n}}_0})} \right\},$\\
 where ${\bf{u}} = {{\bf{u}}_1}\cos ({\theta _0}/2) + {{\bf{u}}_2}\sin ({\theta _0}/2),$ with ${{\bf{u}}_1} = ({{\pmb{\omega }} - {{\bf{n}}_0}({\pmb{\omega }} \cdot {{\bf{n}}_0})}) / {{\sqrt {1 - {{({\pmb{\omega }} \cdot {{\bf{n}}_0})}^2}} }},$ and ${{\bf{u}}_2} = ({{\pmb{\omega }} \times {{\bf{n}}_0}}) / {{\sqrt {1 - {{({\pmb{\omega }} \cdot {{\bf{n}}_0})}^2}} }}$, which are unit vectors.   Similarly,\\
 $a{{\bf{v}}_2} = {{\bf{n}}_0}({\pmb{\omega }} \cdot {{\bf{n}}_0})\left( {{{\sin }^2}({\theta _0}/2) + {{\cos }^2}({\theta _0}/2)} \right)$$+ \left( {{\pmb{\omega }} - {{\bf{n}}_0}({\pmb{\omega }} \cdot {{\bf{n}}_0})} \right){\cos ^2}({\theta _0}/2)$\\
 $+  {\pmb{\omega }} \times {{\bf{n}}_0}\sin ({\theta _0}/2)\cos ({\theta _0}/2)$$ = {{\bf{n}}_0}({\pmb{\omega }} \cdot {{\bf{n}}_0}) + {\bf{u}}\sqrt {1 - {{({\pmb{\omega }} \cdot {{\bf{n}}_0})}^2}} \cos ({\theta _0}/2).$  Equation \ref{eqn: joint n theta diffeq} can be used to show that ${\bf{u}}$ is the normalization of ${\bf{\dot n}}(0)$. It is now apparent that  ${{\bf{v}}_1}$ is orthogonal to  ${{\bf{v}}_2}$, and that $\left| {{{\bf{v}}_1}} \right| = \sin ({\theta _0}/2)\sqrt {1 - {{({\pmb{\omega }} \cdot {{\bf{n}}_0})}^2}} \left| {{{\bf{v}}_2}} \right| = \sqrt {1 - {a^2}} \left| {{{\bf{v}}_2}} \right|.$  And $|a{{\bf{v}}_2}{|^2} = {({\pmb{\omega }} \cdot {{\bf{n}}_0})^2} + \left( {1 - {{({\pmb{\omega }} \cdot {{\bf{n}}_0})}^2}} \right){\cos ^2}({\theta _0}/2)$\\$ = 1 - \left( {1 - {{({\pmb{\omega }} \cdot {{\bf{n}}_0})}^2}} \right){\sin ^2}({\theta _0}/2) = {a^2},$ so $\left| {{{\bf{v}}_2}} \right| = 1,$ and so ${{\bf{v}}_2} = {{\bf{n}}_0}({\pmb{\omega }} \cdot {{\bf{n}}_0}/a) + {\bf{u}}\sqrt {1 - {{({\pmb{\omega }} \cdot {{\bf{n}}_0}/a)}^2}} .$ Similarly, ${{\bf{n}}_0}\sqrt {1 - {{({\pmb{\omega }} \cdot {{\bf{n}}_0})}^2}} \cos ({\theta _0}/2) - {\bf{u}}({\pmb{\omega }} \cdot {{\bf{n}}_0})$ has length $a$, so ${\left( {1 - {a^2}} \right)^{ - 1/2}}{{\bf{v}}_1} = \\ {{\bf{n}}_0}\sqrt {1 - {{({\pmb{\omega }} \cdot {{\bf{n}}_0}/a)}^2}}  - {\bf{u}}({\pmb{\omega }} \cdot {{\bf{n}}_0}/a).$ Thus ${\bf{n}}(t)\sin ({\theta (t)}/2){\left( {1 - {a^2}} \right)^{ - 1/2}}$$ = {{\bf{e}}_1}\cos (t/2 + b) + \\ {{\bf{e}}_2}{\left( {1 - {a^2}} \right)^{ - 1/2}}\sin (t/2 + b) = {\bf{p}}(t),$ and ${\bf{n}}(t)$ is the normalization of this. \\
\end{proof} 

\begin{exa}
Suppose the body starts out from having been rotated 90 degrees about the $x$ axis, and thereafter has angular velocity vector ${\bf{k}}$, so it is spinning about the vertical.  That is, ${\theta _0} = \pi /2$, ${{\bf{n}}_0} = {\bf{i}}$, and ${\pmb{\omega }} = {\bf{k}}$, and let ${t_0} = 0$.  Then in Theorem \ref{thm: spinor}, $a = 1/\sqrt 2 $ , $b = 0$, and $\theta (t) = 2{{\mathop{\rm Cos}\nolimits} ^{ - 1}}\left( {{2^{ - 1/2}}\cos \left( {t/2} \right)} \right)$. Also, one can compute ${{\bf{e}}_1} = {\bf{i}}$ and ${{\bf{e}}_2} = ({\bf{j}} + {\bf{k}})/\sqrt{2}$, so ${\bf{n}}(t) = (\cos(t/2),\sin(t/2),\sin(t/2))/\sqrt{1+{{\sin}^2}(t/2)}.$ This is illustrated above in Figures \ref{fig:Euler vec const w trajectory} and \ref{fig:Euler vec const w ind variables}.   Note ${\bf{n}}$ starts at ${{\bf{i}}}$, picks up some positive 2nd and 3rd components, and then winds up at $ - {{\bf{i}}}$ as $\theta$ goes from $\pi /2$ to $3\pi /2$, as the body makes one complete revolution.   Then ${\bf{n}}$ picks up some negative  2nd and 3rd components, and winds up at ${{\bf{i}}}$ as $\theta $ goes from $3\pi /2$  back to $\pi /2$, as the body makes another complete revolution.  Any example of Theorem \ref{thm: spinor} behaves similarly, except that the offset $b$ is not zero unless ${\pmb{\omega }}$ is perpendicular to ${{\bf{n}}_0}$, so in general $\bf{n}$ does not start out at ${{\bf{e}}_1}$.  See Figure \ref{fig: euler_spinor_tikz}.
\end{exa}

The spinor-like behavior of Theorem \ref{thm: spinor}, in which ${\bf{n}}$ changes sign when the body makes one complete revolution,  does not actually require that ${\pmb{\omega }}$ be constant in magnitude, only in direction.  This is consequence of a simple change of time argument that is analogous to changing to arclength parameter for a curve.

\begin{lemma}
Suppose $\left| {{\pmb{\omega }}(t)} \right| > 0,\,\,t \ge {t_0}$. Define $\tau (t) = \int_{{t_0}}^t {\left| {{\pmb{\omega }}(u)} \right|du} $; this is invertible, so dependent variables may be considered as a function of $t$ or $\tau $. Suppose  $\boldsymbol{E}$ satisfies Equation \ref{eqn: Euler diff eq rep 1} for ${t_0} \le t < {t_1}$.   Let ${\pmb{\tilde \omega }} ={\pmb{ \omega }}/
\left| {{\pmb{\omega }}} \right|.$  Then $\boldsymbol{E}$ satisfies Equation \ref{eqn: Euler diff eq rep 1} with $t$  replaced by $\tau $ and ${\pmb{\omega }}$ replaced by  ${\pmb{\tilde \omega }}$; that is,

\[\frac{{d\boldsymbol{E}}}{{d\tau }} = {\pmb{\tilde \omega }} - \left\{ {\pmb{\tilde \omega }}{|\boldsymbol{E}|^2} - \boldsymbol{E}\left( {{\pmb{\tilde \omega }} \cdot \boldsymbol{E}} \right)\right\}  \left( {\frac{{1 - g\left( {\left| \boldsymbol{E} \right|} \right)}}{{{{\left| \boldsymbol{E} \right|}^2}}}} \right) + \,\,{\pmb{\tilde \omega }} \times \boldsymbol{E}/2 .\]
\end{lemma}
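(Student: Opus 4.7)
\medskip

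\noindent\textbf{Proof plan.} The plan is to treat this as a straightforward change-of-variables argument based on the chain rule, exploiting the fact that every term on the right-hand side of (\ref{eqn: Euler diff eq rep 1}) is positively homogeneous of degree one in ${\pmb{\omega}}$.

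\medskip

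\noindent First I would verify that $\tau(t)$ is a legitimate change of variable. Since ${\pmb{\omega}}$ is assumed continuous and $|{\pmb{\omega}}(t)|>0$ on $[t_0,t_1)$, the function $\tau(t)=\int_{t_0}^t |{\pmb{\omega}}(u)|\,du$ is continuously differentiable with $d\tau/dt=|{\pmb{\omega}}(t)|>0$, hence strictly increasing, so the inverse $t(\tau)$ exists and is also $C^1$, with $dt/d\tau=1/|{\pmb{\omega}}|$. So we may legitimately regard $\boldsymbol{E}$ as a function of $\tau$.

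\medskip

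\noindent Next I would apply the chain rule: $d\boldsymbol{E}/dt=(d\boldsymbol{E}/d\tau)(d\tau/dt)=|{\pmb{\omega}}|\,d\boldsymbol{E}/d\tau$, and therefore
\[
\frac{d\boldsymbol{E}}{d\tau}=\frac{1}{|{\pmb{\omega}}|}\frac{d\boldsymbol{E}}{dt}.
\]
The crucial observation is that the right-hand side of (\ref{eqn: Euler diff eq rep 1}) is linear in ${\pmb{\omega}}$: each of the three terms ${\pmb{\omega}}$, $\{{\pmb{\omega}}|\boldsymbol{E}|^2-\boldsymbol{E}({\pmb{\omega}}\cdot\boldsymbol{E})\}(1-g(|\boldsymbol{E}|))/|\boldsymbol{E}|^2$, and ${\pmb{\omega}}\times\boldsymbol{E}/2$ contains exactly one factor of ${\pmb{\omega}}$, while the coefficients $(1-g(|\boldsymbol{E}|))/|\boldsymbol{E}|^2$ and $\boldsymbol{E}$ depend only on $\boldsymbol{E}$. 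Therefore dividing the entire right-hand side by $|{\pmb{\omega}}|$ amounts to replacing each occurrence of ${\pmb{\omega}}$ by ${\pmb{\tilde\omega}}={\pmb{\omega}}/|{\pmb{\omega}}|$, which is precisely the claimed equation.

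\medskip

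\noindent There is no real obstacle here; the entire content is the linearity of the generator in ${\pmb{\omega}}$, which is evident by inspection of (\ref{eqn: Euler diff eq rep 1}). The only thing worth being mildly careful about is that the coefficient $(1-g(|\boldsymbol{E}|))/|\boldsymbol{E}|^2$ depends on $\boldsymbol{E}$ alone and not on ${\pmb{\omega}}$, so it is unchanged by the time rescaling. Once this is noted, assembling the three terms gives the stated equation, and the proof is complete.
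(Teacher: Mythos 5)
Your proposal is correct and takes the same approach as the paper, which simply cites the chain rule with $d\tau/dt = |{\pmb{\omega}}|$; you have just made explicit the homogeneity-in-${\pmb{\omega}}$ observation that the paper leaves implicit.
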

\begin{proof}
This is just the chain rule, using $d\tau/dt = \left| {{\pmb{\omega }}} \right|.$
\end{proof}
Thus in Theorem \ref{thm: spinor}, the hypothesis could be changed to ${\pmb{\omega }}/\left| {\pmb{\omega }} \right|$ being constant, and the conclusions would hold using the new time variable, and the statement about ${\bf{n}}$ changing sign every time the body makes one complete rotation is still true.  In general, the geometrical behavior of $\boldsymbol{E}$ is not affected by a change of time scale, in the same way that the geometry of a curve is not affected by a change of parameterization.  This is useful in our computer experiments with $\boldsymbol{E}$: the magnitude of ${\pmb{\omega }}$ might as well be kept fixed.\\

\section{\textbf{Continuous evolution of axis and angle for all time; continuation of $\bf{E}$ at boundaries.}}
\label{sec: boundary}

The example given in equation (\ref{eqn: pathological example}) showed that for arbitrary continuous ${\pmb{\omega }}$, it is not necessarily possible to define a unit vector $\bf{n}$ for the Euler rotation axis and rotation angle $\theta$, such that both are continuous for all time. For that example, the solution for $\bf{E}$, or for any of the generalized Euler vectors, exists and is continuous at rotation angle $\theta = 0$, but $\bf{n}$ cannot be defined to be continuous there. The same problem can exist at multiples of $2\pi$. One can make a similar example such that a solution ${\bf{E}}(t)$ for equation \ref{eqn: Euler diff eq rep 1} which starts with $0 \le \left| {{\bf{E}}({t_0})} \right| < 2\pi $, has  $\left| {{\bf{E}}(t)} \right|$ approach $2\pi $ in finite time; that is, $\mathop {\lim }\limits_{t \to {{t_1}^-}} \left| {{\bf{E}}(t)} \right| = 2\pi $, for some finite ${t_1}$. But $\mathop {\lim }\limits_{t \to {t_1}^-} {{\bf{n}}(t)}$ does not exist, nor does $\mathop {\lim }\limits_{t \to {t_1}^-} {{\bf{E}}(t)}$: they spin with increasing speed as $t \rightarrow {t_1}^-$, in that example.

However, such pathology does not exist if we put mild regularity conditions on ${\pmb{\omega }}$.
\begin{theorem}[Continuous evolution of Euler axis and angle]\label{thm:continuous axis and angle}
Let ${\pmb{\omega }}$ be the angular velocity function of a rigid body for which the origin of the body system is constrained to be at the origin of the space system. Assume that for all $t \ge t_0$, there exists a non-negative integer $i$ such that ${{\pmb{\omega }}^{(i)}}(t) \ne {\bf{0}}$. Then ${{\bf{n}}(t)}$ and $\theta(t)$ can be defined with ${{\bf{n}}}$ continuous and $\theta$ differentiable for all $t \ge t_0$, such that $Rot_{{\bf{n}}(t),\theta (t)}$ is the position of the rigid body at time $t$.

If in addition it is assumed that for all $t \ge t_0$, there exists a non-negative integer $i$ such that ${{\pmb{\omega }}^{(i)}}(t) \ne {\bf{0}}$ and ${{\pmb{\omega }}^{(i+1)}}(t)$ exists, then ${{\bf{n}}}$ is differentiable for all $t \ge t_0$.
\end{theorem}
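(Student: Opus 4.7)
The plan is to use the quaternion formulation (\ref{eqn: quaternion diffeq}) as the underlying global object, since its solution $(M_0(t),\mathbf{M}(t))$ exists on the unit $3$-sphere for all $t \ge t_0$ with no boundary issues. By Theorem \ref{thm: main}, whenever $\mathbf{M}(t) \ne \mathbf{0}$, choosing a branch with $\cos(\theta(t)/2)=M_0(t)$ and setting $\mathbf{n}(t) = \mathbf{M}(t)/\sin(\theta(t)/2)$ correctly represents the body's position. The remaining work is to extend $\theta$ and $\mathbf{n}$ continuously past the \emph{boundary times} $B = \{t \ge t_0 : \mathbf{M}(t)=\mathbf{0}\}$, where $|M_0|=1$, $\theta \in 2\pi\mathbb{Z}$, and $\mathbf{n}$ is not defined by these formulae.

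The first step is a Taylor-expansion lemma: at each $t_1 \in B$, if $i$ is the smallest non-negative integer with $\pmb{\omega}^{(i)}(t_1) \ne \mathbf{0}$, then
\[
\mathbf{M}(t) \;=\; \frac{M_0(t_1)\,\pmb{\omega}^{(i)}(t_1)}{2\,(i+1)!}\,(t-t_1)^{i+1} + o\bigl(|t-t_1|^{i+1}\bigr).
\]
I would prove this by feeding the Peano expansion $\pmb{\omega}(s) = \pmb{\omega}^{(i)}(t_1)(s-t_1)^i/i! + o(|s-t_1|^i)$ together with the a priori bound $|\mathbf{M}(s)|=O(|s-t_1|)$ into the integral form $\mathbf{M}(t) = \int_{t_1}^{t}[\pmb{\omega}(s) M_0(s)+\pmb{\omega}(s)\times\mathbf{M}(s)]/2\,ds$; the cross-product piece contributes only $O(|t-t_1|^{i+2})$ because of the extra factor of $\mathbf{M}$, and $M_0(s)\to M_0(t_1)$. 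A corollary is that $B$ is locally finite in $[t_0,\infty)$, so between consecutive boundary times both $\theta$ and $\mathbf{n}$ continue along well-defined smooth branches.

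The crucial step is extending through a given $t_1 \in B$. Writing $\mathbf{c}$ for the nonzero leading coefficient above, the sphere constraint yields $\sin^2(\theta(t)/2) = 1-M_0(t)^2 = |\mathbf{M}(t)|^2$, so any continuous branch of $\theta$ makes $\sin(\theta(t)/2) = \sigma(t-t_1)^{i+1} + o(|t-t_1|^{i+1})$ with $|\sigma|=|\mathbf{c}|\ne 0$. I split by the parity of $i+1$: when $i+1$ is odd, $\mathbf{M}$ reverses direction across $t_1$, so I let $\theta$ cross the nearby multiple of $2\pi$ monotonically (so $\sin(\theta/2)$ also reverses sign), and $\mathbf{n} = \mathbf{M}/\sin(\theta/2)$ has a common two-sided limit; when $i+1$ is even, $\mathbf{M}$ keeps its direction and I let $\theta$ have a local extremum at $t_1$ (consistent with $\dot\theta(t_1)=\pmb{\omega}(t_1)\cdot\mathbf{n}(t_1)=0$, since $\pmb{\omega}(t_1)=\mathbf{0}$ when $i\ge 1$), again yielding a common limit. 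Differentiability of $\theta$ at $t_1$ then follows from integrating $\dot\theta = \pmb{\omega}\cdot\mathbf{n}$ on a punctured neighborhood: since $\mathbf{n}$ is continuous at $t_1$ and $\pmb{\omega}$ is continuous, $\theta(t)-\theta(t_1) = \pmb{\omega}(t_1)\cdot\mathbf{n}(t_1)\,(t-t_1)+o(t-t_1)$, proving the first conclusion.

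For the strengthened hypothesis, the additional existence of $\pmb{\omega}^{(i+1)}(t_1)$ supplies one more term of the Taylor expansion, yielding $\mathbf{M}(t) = (t-t_1)^{i+1}\bigl[\mathbf{c}+\mathbf{d}(t-t_1)+o(t-t_1)\bigr]$ and, by the sphere constraint, $\sin(\theta(t)/2) = (t-t_1)^{i+1}\bigl[\sigma+\tau(t-t_1)+o(t-t_1)\bigr]$. Dividing these expansions gives $\mathbf{n}(t) = \mathbf{c}/\sigma + \bigl[\mathbf{d}/\sigma - \mathbf{c}\tau/\sigma^2\bigr](t-t_1) + o(t-t_1)$, so $\mathbf{n}$ acquires a derivative at $t_1$; at non-boundary times $\mathbf{n}$ is already differentiable by the quotient rule applied to the smooth quaternion flow. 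The main obstacle throughout is the careful parity-and-branch bookkeeping near boundary points, together with verifying that the pointwise-only Taylor hypothesis on $\pmb{\omega}$ transfers one order higher to the ODE solution $\mathbf{M}$ at each individual $t_1$ (using that existence of $\pmb{\omega}^{(i)}(t_1)$ already forces $\pmb{\omega}$ to be $C^{i-1}$ on a neighborhood, and iterating the product rule in the quaternion ODE).
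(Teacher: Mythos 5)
Your proposal follows the paper's own proof essentially step for step: both use the global quaternion solution as the underlying object, establish the leading-order Taylor expansion $\mathbf{M}(t) = \frac{M_0(t_1)\pmb{\omega}^{(i)}(t_1)}{2(i+1)!}(t-t_1)^{i+1}+o(|t-t_1|^{i+1})$ at each boundary point (the paper's Lemma \ref{lemma: isolated zeros}), conclude that such points are isolated, split by the parity of $i$ to decide whether $\theta$ crosses or reflects at the nearby multiple of $2\pi$ (the paper's Lemma \ref{lemma: continuous continuation at a zero}, cases (a)/(b)), and obtain differentiability of $\mathbf{n}$ from one more term of the expansion under the stronger hypothesis, finishing by local finiteness of the boundary set. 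The only differences are cosmetic (you phrase the dichotomy in terms of the parity of $i+1$ rather than of $i$, and you add the useful sanity check that $\dot\theta(t_1)=\pmb{\omega}(t_1)\cdot\mathbf{n}(t_1)=0$ when $i\ge1$); the mathematical content is identical.
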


Here is a trivial example showing that $\theta$ must be allowed to take arbitrary positive as well as negative values, in order to satisfy the conclusions of the theorem.  Suppose $\theta(t_0) = 0$. Let ${\pmb{\omega }}(t) = s(t){\bf{k}}$, where $s(t_0)$ is positive, then gradually decreases to zero, and becomes negative and stays negative. The body just rotates about the vertical, at first counterclockwise, then slowing down and eventually rotating clockwise. Then ${{\bf{n}}}(t) = {\bf{k}}$, and $\theta(t)$ increases to a positive amount that can be arbitrarily large, then decreases to zero, and then must go negative in order that ${\bf{n}}$ remain continuous. We could keep $\theta$ positive only by discontinuously flipping ${\bf{n}}$ to $-{\bf{k}}$; the Euler vector ${\bf{E}} = {\bf{n}}\theta$ does not distinguish between these.

We'll prove the theorem by analyzing the solution to equation (\ref{eqn: quaternion diffeq}), which exists for all time, at points where ${\bf{M}} = {\bf{0}}$. Under the hypotheses of Theorem \ref{thm:continuous axis and angle}, these points are isolated.  The trick will be to choose the correct branch of the $\arccos$ function in Theorem \ref{thm: main} between these points so that axis and angle connect up continuously.

\begin{lemma}[Zeros are isolated]
\label{lemma: isolated zeros}
Let ${\pmb{\omega }}$ satisfy the hypotheses of Theorem \ref{thm:continuous axis and angle}, and let $({M_0(t),\bf{M}}(t))$ solve equation (\ref{eqn: quaternion diffeq}) for ${t_0} \le t$, with $M_0(t_0)^2 + {\bf{M}}(t_0) \cdot {\bf{M}}(t_0)  = 1$. If $t_0 < t_1$ and ${\bf{M}}(t_1) = {\bf{0}}$, there exists $\delta > 0$ such that ${\bf{M}}(t) \ne {\bf{0}}$ for $0 < |t-t_1| < \delta$.
\end{lemma}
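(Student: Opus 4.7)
The plan is to analyze the local Taylor expansion of $\mathbf{M}$ at $t_1$ and show that its first nonzero coefficient occurs at order $i+1$, where $i$ is the smallest non-negative integer with $\pmb{\omega}^{(i)}(t_1) \neq \mathbf{0}$ (such an $i$ exists by the regularity hypothesis of Theorem \ref{thm:continuous axis and angle}). As a setup, note that equation (\ref{eqn: quaternion diffeq}) preserves $M_0^2 + |\mathbf{M}|^2 \equiv 1$, so the hypothesis $\mathbf{M}(t_1) = \mathbf{0}$ forces $M_0(t_1) = \pm 1$.

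The central step is an induction showing that for all $1 \le k \le i$, both $\mathbf{M}^{(k)}(t_1) = \mathbf{0}$ and $M_0^{(k)}(t_1) = 0$. One differentiates the quaternion system
\begin{equation*}
2\dot{\mathbf{M}} = \pmb{\omega}\,M_0 + \pmb{\omega} \times \mathbf{M}, \qquad 2\dot{M}_0 = -\pmb{\omega} \cdot \mathbf{M}
\end{equation*}
using the Leibniz rule; every term in the $k$-th derivative is a product of some $\pmb{\omega}^{(j)}(t_1)$ with a lower-order derivative of $\mathbf{M}$ or $M_0$. For $k \le i$, in each such term either $j < i$ (so $\pmb{\omega}^{(j)}(t_1) = \mathbf{0}$ by choice of $i$) or $j = i$ (which in the relevant index range forces the companion factor to be $\mathbf{M}(t_1) = \mathbf{0}$), so every term vanishes. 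Pushing the induction one step further at order $k = i+1$, the unique surviving contribution in the expansion of $2\mathbf{M}^{(i+1)}(t_1)$ is the $j = i$ term $\pmb{\omega}^{(i)}(t_1)\,M_0(t_1)$, yielding
\begin{equation*}
\mathbf{M}^{(i+1)}(t_1) \;=\; \tfrac{1}{2}\,M_0(t_1)\,\pmb{\omega}^{(i)}(t_1) \;\neq\; \mathbf{0}.
\end{equation*}

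Since $\pmb{\omega}$ is $i$-times differentiable at $t_1$, repeated application of the ODE gives $(i+1)$-fold differentiability of $\mathbf{M}$ at $t_1$, so Taylor's theorem yields
\begin{equation*}
\mathbf{M}(t) \;=\; \frac{(t-t_1)^{i+1}}{(i+1)!}\,\mathbf{M}^{(i+1)}(t_1) \;+\; o\bigl((t-t_1)^{i+1}\bigr).
\end{equation*}
Because the leading coefficient is nonzero, $\mathbf{M}(t) \neq \mathbf{0}$ on some punctured interval $0 < |t - t_1| < \delta$, establishing isolation on both sides of $t_1$. The main obstacle is the combinatorial bookkeeping in the induction step: one must track carefully which pairs $(j,k-j)$ can simultaneously fail to annihilate a Leibniz term, and confirm that exactly one contribution survives at order $i+1$. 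Everything else is a routine Taylor-expansion argument made available by the ODE.
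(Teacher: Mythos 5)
Your proof is correct, and it reaches exactly the same conclusion as the paper: $\mathbf{M}(t)$ has a zero of exact order $i+1$ at $t_1$, with leading Taylor coefficient $\tfrac{1}{2}M_0(t_1)\,\pmb{\omega}^{(i)}(t_1)/(i+1)!$, which is nonzero because $|M_0(t_1)|=1$ by normalization and $\pmb{\omega}^{(i)}(t_1)\neq\mathbf{0}$ by the definition of $i$. The difference is organizational rather than conceptual. The paper avoids the derivative-by-derivative induction: it substitutes the Peano-form Taylor expansion $\pmb{\omega}(t)=\pmb{\omega}^{(i)}(t_1)h^i/i! + o(h^i)$ and the crude bounds $M_0(t)=M_0(t_1)+o(h)$, $|\mathbf{M}(t)|=O(h)$ directly into the ODE to read off $\dot{\mathbf{M}}(t) = \tfrac{1}{2}\pmb{\omega}^{(i)}(t_1)M_0(t_1)h^i/i! + o(h^i)$, then integrates once. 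Your version instead differentiates the ODE $k$ times and runs a Leibniz-rule induction to kill $\mathbf{M}^{(k)}(t_1)$ for $1\le k\le i$ and isolate the lone surviving term at order $i+1$. Your route is more explicit about why the lower-order Taylor coefficients vanish, at the cost of the combinatorial bookkeeping you flagged; the paper's route is shorter because the single factor $\pmb{\omega}(t)=O(h^i)$ annihilates everything at once without case analysis on Leibniz indices. One small remark: the companion claim $M_0^{(k)}(t_1)=0$ for $1\le k\le i$, while true, is not actually needed in your induction on $\mathbf{M}^{(k)}$ — for $k\le i$ every Leibniz term already carries a factor $\pmb{\omega}^{(j)}(t_1)$ with $j\le k-1<i$, so the $\pmb{\omega}$ factor alone kills every term regardless of the $M_0$-derivative it multiplies; and at $k=i+1$ the surviving factor is $M_0^{(0)}(t_1)=M_0(t_1)$, not a higher derivative.
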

\begin{proof}
Let $i = min\{j \ge 0 :{\pmb{\omega }}^{(j)}(t_1) \ne {\bf{0}}\}$. Then ${\pmb{\omega }}(t)={\pmb{\omega }}^{(i)}(t_1){h^i}/i! + o(h^i)$, where $h=t-t_1$. Now $M_0(t) = M_0(t_1) + \dot{M_0}(t_1)h+ o(h) = M_0(t_1) + o(h)$ since $\dot{M_0}(t_1) = {\pmb{\omega }}(t_1) \cdot {{\bf{M}}}(t_1) = 0$. Equation (\ref{eqn: quaternion diffeq}) immediately implies ${|\bf{M}}(t)| = O(h)$. Then (\ref{eqn: quaternion diffeq}) gives ${\dot{\bf{M}}}(t) = {\pmb{\omega }}^{(i)}(t_1)({h^i}/i!)M_0(t_1)/2 + +o(h^i)$. Integrating, ${{\bf{M}}}(t) = {\pmb{\omega }}^{(i)}(t_1)({h^{i+1}}/(i+1)!)M_0(t_1)/2 + +o(h^{i+1})$.  Since\\ ${\pmb{\omega }}^{(i)}(t_1) \ne {\bf{0}}$ and $|M_0(t_1)| = 1$, the conclusion of the lemma follows.
\end{proof}

The next lemma is the key tool in the proof of the theorem.  It shows precisely how to continue the axis and angle continuously when the angle reaches a multiple of $2\pi$, where ${\bf{M}} = {\bf{0}}$. 
\begin{lemma}[Continuation at a zero] 
\label{lemma: continuous continuation at a zero}
Let ${\pmb{\omega }}$ satisfy the hypotheses of Theorem \ref{thm:continuous axis and angle}, and let \\ $({M_0(t),\bf{M}}(t))$ solve equation (\ref{eqn: quaternion diffeq}) for ${t_0} \le t$, with $M_0(t_0)^2 + {\bf{M}}(t_0) \cdot {\bf{M}}(t_0)  = 1$. Suppose $t_0 < t_1 < t_2$, ${\bf{M}}(t_1) = {\bf{0}}$, and ${\bf{M}}(t) \ne {\bf{0}}$ if $t \ne t_1$ and $t_0 < t < t_2$. Suppose, for some integer $l$, $\theta(t)$ has been defined for $t_0 \le t \le t_1$ by $\cos (\theta(t)/2) = M_0(t)$ with $2\pi l \le \theta(t) \le 2\pi (l+1)$; and for $t_0 < t < t_1$, ${\bf{n(t)}} = {\bf{M}}(t)/\sin(\theta(t)/2)$. Let $i = min\{j \ge 0 :{\pmb{\omega }}^{(j)}(t_1) \ne {\bf{0}}\}$.

Case (a): $i$ is even.  If $\theta(t_1) = 2\pi (l+1)$, define $\theta(t)$ by $\cos(\theta(t)/2)=M_0(t)$ with $2\pi (l+1) \le \theta(t) \le 2\pi (l+2)$, for $t_1 < t \le t_2$. If $\theta(t_1) = 2\pi l$, define $\theta(t)$ by $\cos (\theta(t)/2) = M_0(t)$ with $2\pi (l-1) \le \theta(t) \le 2\pi l$, for $t_1 < t \le t_2$. Thus $\theta(t)$ crosses the boundary into the neighboring interval in this case, going up if it was at the right end, down if it was at the left end of the previous interval.

Case (b): $i$ is odd. For $t_1 < t \le t_2$ define $\theta(t)$ by $\cos (\theta(t)/2) = M_0(t)$ with $2\pi l \le \theta(t) \le 2\pi (l+1)$. Thus in this case, $\theta(t)$ reflects off the boundary, back into the same interval.

For $t_1 < t < t_2$, define ${\bf{n}}(t) = {\bf{M}}(t)/\sin(\theta(t)/2)$. Then\\ $\mathop {\lim }\limits_{t \to {t_1}^-} {{\bf{n}}(t)} = \mathop {\lim }\limits_{t \to {t_1}^+} {{\bf{n}}(t)} = \{{\pmb{\omega }}^{(i)}(t_1)/|{\pmb{\omega }}^{(i)}(t_1)|\}M_0(t_1)(-1)^{l+i+1}$.  Defining ${\bf{n}}(t_1)$ to be this limit makes ${\bf{n}}$ continuous at $t_1$; so $\dot \theta = {\pmb{\omega }} \cdot {\bf{n}}$ is continuous at $t_1$ also. If ${\pmb{\omega }}^{(i+1)}(t_1)$ exists, then $\dot{{\bf{n}}}$ exists and is continuous at $t_1$ as well.
\end{lemma}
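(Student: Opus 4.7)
The plan is to build on the Taylor expansion for $\mathbf{M}(t)$ obtained in Lemma~\ref{lemma: isolated zeros}, combined with the algebraic identity $\sin^2(\theta(t)/2) = 1 - M_0(t)^2 = |\mathbf{M}(t)|^2$ that comes from the unit-quaternion normalization, and to compare numerator and denominator in the ratio $\mathbf{n}(t) = \mathbf{M}(t)/\sin(\theta(t)/2)$. Writing $h = t - t_1$, the lemma gives
\[
\mathbf{M}(t) = \frac{M_0(t_1)}{2(i+1)!}\,\pmb{\omega}^{(i)}(t_1)\,h^{i+1} + o(h^{i+1}),
\]
so the direction of $\mathbf{M}(t)$ is $\pm \pmb{\omega}^{(i)}(t_1)/|\pmb{\omega}^{(i)}(t_1)|$ with sign $\operatorname{sgn}(M_0(t_1)\,h^{i+1})$. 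Taking absolute values in the normalization identity gives $|\sin(\theta(t)/2)| = |\pmb{\omega}^{(i)}(t_1)|\,|h|^{i+1}/(2(i+1)!) + o(h^{i+1})$, so the problem of identifying $\lim\mathbf{n}(t)$ reduces to pinning down the sign of $\sin(\theta(t)/2)$ on each side of $t_1$.

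I would determine that sign by writing $\theta(t)/2 = \pi m + \phi(t)$, with $m\in\{l,\,l+1\}$ chosen so that $\theta(t_1)=2\pi m$, so $\sin(\theta/2) = (-1)^m\sin\phi$ and $\phi(t)\to 0$. The branch conventions in Cases (a) and (b) dictate the sign of $\phi(t)$ on either side of $t_1$: in Case (a) it flips (crossing the boundary), in Case (b) it does not (reflecting). Combining this with the sign of $h^{i+1}$ coming from the expansion of $\mathbf{M}$, one checks case-by-case (four sub-cases: Case (a)/(b) times the two possible values of $m$) that the net sign factor in $\mathbf{M}(t)/\sin(\theta(t)/2)$ is the same from both sides, and that the common limit simplifies to $M_0(t_1)(-1)^{l+i+1}\pmb{\omega}^{(i)}(t_1)/|\pmb{\omega}^{(i)}(t_1)|$. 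Defining $\mathbf{n}(t_1)$ as this value makes $\mathbf{n}$ continuous at $t_1$, and then $\dot\theta = \pmb{\omega}\cdot\mathbf{n}$ is automatically continuous since $\pmb{\omega}$ is.

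For differentiability of $\mathbf{n}$ at $t_1$ under the additional hypothesis that $\pmb{\omega}^{(i+1)}(t_1)$ exists, I would use (\ref{eqn: joint n theta diffeq}),
\[
\dot{\mathbf{n}} = \tfrac{1}{2}\cot(\theta/2)\bigl(\pmb{\omega} - \mathbf{n}(\pmb{\omega}\cdot\mathbf{n})\bigr) + \tfrac{1}{2}\pmb{\omega}\times\mathbf{n}.
\]
The cross-product term is continuous at $t_1$ once $\mathbf{n}$ is. For the first term, $\cot(\theta/2)$ blows up like $|h|^{-(i+1)}$, but since $\mathbf{n}(t_1)$ is parallel to $\pmb{\omega}^{(i)}(t_1)$, the dominant $\pmb{\omega}^{(i)}(t_1)h^i/i!$ piece of $\pmb{\omega}$ is annihilated by the projection $\pmb{\omega} - \mathbf{n}(\pmb{\omega}\cdot\mathbf{n})$. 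The surviving perpendicular component is $O(h^{i+1})$, and multiplied by $\cot(\theta/2)$ yields a finite limit, determined by the component of $\pmb{\omega}^{(i+1)}(t_1)$ orthogonal to $\mathbf{n}(t_1)$; this limit is the same from either side, so together with the continuity of the cross-product term it gives $\dot{\mathbf{n}}(t_1^-)=\dot{\mathbf{n}}(t_1^+)$.

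The main obstacle is the sign bookkeeping in the continuity step: two independent sign flips compete across $t_1$, one from the parity of $i+1$ in $h^{i+1}$, the other from the branch choice that determines the sign of $\sin\phi$ on each side. Cases (a) and (b) in the lemma are precisely designed so that these two flips cancel, and confirming that the resulting common limit matches the stated formula with $(-1)^{l+i+1}$ in all four sub-cases is the most delicate step, although purely elementary.
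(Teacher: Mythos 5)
Your approach to the continuity half is the same as the paper's: both use the Taylor expansion of $\mathbf{M}$ from Lemma~\ref{lemma: isolated zeros} together with $|\sin(\theta/2)| = |\mathbf{M}|$, reduce the question to sign bookkeeping in $\operatorname{sgn}(h^{i+1}\sin(\theta(t)/2))$, and check that the branch choices in Cases (a) and (b) are exactly those that make the combined sign agree across $t_1$. Your device of writing $\theta/2 = \pi m + \phi$ is a cosmetic repackaging of the paper's direct treatment of $\operatorname{sgn}(\sin(\theta/2)) = (-1)^l$ or $(-1)^{l+1}$; either way one ends up with the stated $(-1)^{l+i+1}$.

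For differentiability you take a genuinely different route: you plug into the ODE $\dot{\mathbf{n}} = \tfrac12\cot(\theta/2)(\pmb{\omega} - \mathbf{n}(\pmb{\omega}\cdot\mathbf{n})) + \tfrac12\pmb{\omega}\times\mathbf{n}$ from (\ref{eqn: joint n theta diffeq}), whereas the paper differentiates $\mathbf{n}(t) = (\mathbf{M}(t)/|\mathbf{M}(t)|)\operatorname{sgn}(\sin(\theta(t)/2))$ with the quotient rule and substitutes the two-term Taylor expansion of $\mathbf{M}$. Your route is cleaner conceptually (the ODE is already available for $t\ne t_1$, as the paper notes after (\ref{eqn: quaternion diffeq})), but the cancellation you describe is slightly more delicate than your sketch suggests. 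When you expand $\pmb{\omega} - \mathbf{n}(\pmb{\omega}\cdot\mathbf{n})$ to order $h^{i+1}$, the coefficient is \emph{not} only the perpendicular component of $\pmb{\omega}^{(i+1)}(t_1)$: because $\cot(\theta/2)$ blows up like $h^{-(i+1)}$, you also need the $O(h)$ correction $\mathbf{n}(t) = \mathbf{n}(t_1) + \mathbf{c}h + o(h)$, and the terms $-\mathbf{n}_1(\pmb{\omega}\cdot\mathbf{c})h^{i+1}$ and $-\mathbf{c}(\pmb{\omega}\cdot\mathbf{n}_1)h^{i+1}$ contribute at the same order as $\pmb{\omega}^{(i+1)}_\perp h^{i+1}$. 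The extraction of $\mathbf{c}$ requires precisely the second-order expansion of $\mathbf{M}$ under the additional differentiability hypothesis, i.e.\ the same input as the paper's proof, so there is no savings in preconditions; and $\pmb{\omega}\times\mathbf{n}$ contributes a term proportional to $\mathbf{c}$ as well (it still $\to 0$, but naively treating it as ``continuous once $\mathbf{n}$ is'' misses that its leading behavior is $O(h^{i+1})$, not merely continuous). In the end both routes show the one-sided limits of $\dot{\mathbf{n}}$ exist, agree, and are controlled by $\pmb{\omega}^{(i+1)}_\perp$; you should then invoke explicitly, as the paper does, the elementary fact that continuity of $\mathbf{n}$ at $t_1$ together with matching one-sided limits of $\dot{\mathbf{n}}$ gives $\dot{\mathbf{n}}(t_1)$.
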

\begin{proof}
${{\bf{M}}}(t) = {\pmb{\omega }}^{(i)}(t_1)({h^{i+1}}/(i+1)!)M_0(t_1)/2 + o(h^{i+1})$, from the proof of Lemma \ref{lemma: continuous continuation at a zero}. So \\${{\bf{M}}}(t)/|{{\bf{M}}}(t)| = ({\pmb{\omega }}^{(i)}(t_1)/|{\pmb{\omega }}^{(i)}(t_1)|)M_0(t_1)sgn(h^{i+1}) + o(1)$. Now $\theta(t)$ has already been chosen for $t_0 \le t \le t_1$, and is to be chosen for  $t_1 < t \le t_2$. For for $t_0 < t < t_2$ and $t \ne t_1$, we need ${{\bf{M}}}(t) = {{\bf{n}}}(t)\sin(\theta(t)/2)$, so $|{{\bf{M}}}(t)| = |\sin(\theta(t)/2)|$, and ${{\bf{n}}}(t) = {{\bf{M}}}(t)/\sin(\theta(t)/2) = ({{\bf{M}}}(t)/|{{\bf{M}}}(t)|) sgn(\sin(\theta(t)/2))$$\\= ({\pmb{\omega }}^{(i)}(t_1)/|{\pmb{\omega }}^{(i)}(t_1)|)M_0(t_1)
sgn\{h^{i+1}\sin(\theta(t)/2)\} + o(1) $. In order that ${\bf{n}}(t)$ have the same limit from the right and left as $t \rightarrow t_1$, $\sin(\theta(t)/2)$ must change sign as $t$ crosses $t_1$ if $i$ is even, and not if $i$ is odd. Note $\theta(t_1)$ must be either $2\pi l$ or $2\pi(l+1)$. If $i$ is even, continuity of $\theta$ then requires that, for $t_1 < t \le t_2$, it be chosen with $2\pi (l-1) \le \theta(t) \le 2\pi l$ if $\theta(t_1) = 2\pi l$, and $2\pi (l+1) \le \theta(t) \le 2\pi (l+2)$ if $\theta(t_1) = 2\pi (l+1)$. If $i$ is odd, $\theta$ must not change sign, so continuity requires $2\pi (l+1) \le \theta(t) \le 2\pi (l+2)$ for $t_1 < t \le t_2$.

Now $sgn(\sin(\theta(t)/2) = (-1)^l$ for $t_0 < t < t_1$. If $i$ is even, $sgn(h^{i+1}) = -1$ and \\ $sgn\{h^{i+1}\sin(\theta(t)/2)\} = (-1)^{l+1}$ for $t_0 < t < t_1$. For $t_1 < t < t_2$, $h > 0$ but $\sin(\theta(t)/2) = (-1)^{l+1}$, so again $sgn\{h^{i+1}\sin(\theta(t)/2)\} = (-1)^{l+1}$.  If $i$ is odd, $sgn\{h^{i+1}\sin(\theta(t)/2)\} = (-1)^l$ on both sides of $t_1$.  We can summarize with 
 ${\bf{n}}(t) =({\pmb{\omega }}^{(i)}(t_1)/|{\pmb{\omega }}^{(i)}(t_1)|)M_0(t_1)(-1)^{l+i+1} + o(1)$, whose limit as $t \rightarrow t_1$ is as asserted.

The proof of existence of $\dot{\bf{n}}$ at $t_1$ is similar, but much longer and more tedious. We'll leave out some detail. By the stronger assumption, ${\pmb{\omega }}(t)={\pmb{\omega }}^{(i)}(t_1){h^i}/i! + {\pmb{\omega }}^{(i+1)}(t_1){h^{i+1}}/(i+1)! + o(h^{i+1})$. From earlier, ${{\bf{M}}}(t) = {\pmb{\omega }}^{(i)}(t_1)({h^{i+1}}/(i+1)!)M_0(t_1)/2 + o(h^{i+1})$.   Since ${\pmb{\omega }}^{(i)}(t_1) \times {\pmb{\omega }}^{(i)}(t_1) =0$, we get ${\pmb{\omega }}(t) \times {{\bf{M}}}(t) = o(h^{2i+1}$. Using this and $M_0(t) = M_0(t_1) + o(h)$ in (\ref{eqn: quaternion diffeq}),$\dot{{\bf{M}}}(t) = $ $ {\pmb{\omega }}^{(i)}(t_1)({h^{i}}/i!)M_0(t_1)/2 + {\pmb{\omega }}^{(i+1)}(t_1)({h^{i+1}}/(i+1)!)M_0(t_1)/2 + o(h^{i+1})$. Integrating, \\
${\bf{M}}(t) = {\pmb{\omega }}^{(i)}(t_1)({h^{i+1}}/(i+1)!)M_0(t_1)/2 + {\pmb{\omega }}^{(i+1)}(t_1)({h^{i+2}}/(i+2)!)M_0(t_1)/2 + o(h^{i+2})$. 

Now ${\bf{n}}(t) = ({\bf{M}}(t)/|{\bf{M}}(t)|) sgn(\sin(\theta(t)/2))$ for $t \ne t_1$. Calculate\\ ${\dot{\bf{n}}}(t) = \{{\dot{\bf{M}}}(t)({\bf{M}}(t) \cdot {\bf{M}}(t)) - {\bf{M}}(t)({\bf{M}}(t) \cdot {\dot{\bf{M}}}(t))\}sgn(\sin(\theta(t)/2))/|{\bf{M}}(t)|^3$. Straightforward but very tedious algebra which we omit, using the above for ${\bf{M}}$ and $\dot{{\bf{M}}}$, gives 
\\ ${\dot{\bf{n}}}(t) = 
\left\{{\pmb{\omega }}^{(i+1)}(t_1)|{\pmb{\omega }}^{(i)}(t_1)|^2 -  {\pmb{\omega }}^{(i)}(t_1)\left({\pmb{\omega }}^{(i)}(t_1) \cdot {\pmb{\omega }}^{(i+1)}(t_1)\right)\right\}$\\ 
$\times \left\{(i+1)/\left((i+2)|{\pmb{\omega }}^{(i)}(t_1)|^3 \right) \right\} M_0(t_1)(-1)^{l+i+1} + o(1)$, whose limit as $t \rightarrow t_1$ exists. By elementary real analysis, that implies ${\dot{\bf{n}}}(t_1)$ exists and is equal to that limit.
\end{proof}

\begin{proof}[Proof of Theorem \ref{thm:continuous axis and angle}]
This follows by induction from Lemmas \ref{lemma: isolated zeros} and \ref{lemma: continuous continuation at a zero}. Suppose $t_0 < t_1 <...< t_n$ satisfy ${\bf{M}}(t_j) = {\bf{0}}$ for $j=1,...,n$, and ${\bf{M}}(t) \ne {\bf{0}}$ for $t_0 < t < t_n$ and $t \ne t_j$ for $j=1,...,n$.   Assume ${\bf{n}}(t)$ and $\theta(t)$ have been defined continuously for $t_0 \le t \le t_n$. Let $t_{n+1} = \inf\{t>t_n : {\bf{M}}(t) = {\bf{0}}\}$, where $t_{n+1} = \infty$ if no such $t$ exists.  By Lemma \ref{lemma: isolated zeros}, $t_{n+1} > t_n$.  By Lemma \ref{lemma: continuous continuation at a zero}, ${\bf{n}}(t)$ and $\theta(t)$ can be continuously extended to $t_0 \le t \le t_{n+1}$.  By Lemma \ref{lemma: isolated zeros}, there can be no cluster point of the $t_n$, so "Zeno's paradox" cannot occur; that is, $t_n \rightarrow \infty$, and ${\bf{n}}(t)$ and $\theta(t)$ are defined for all $t$.
\end{proof}

\begin{corollary}[Continuation of ${\bf{E}}$ at boundaries] Let ${\pmb{\omega }}$ satisfy the hypotheses of Theorem \ref{thm:continuous axis and angle}. Then there exists ${\bf{E}}(t)$ which is continuous for all $t \ge t_0$, and satisfies Equation (\ref{eqn: Euler diff eq rep 1}) except where $|{\bf{E}}(t)| = |\theta(t)|$ is a positive multiple of $2\pi$ and the r.h.s of the equation is not defined. If ${\pmb{\omega }}$ satisfies the stronger differentiability condition of the theorem, ${\bf{E}}(t)$ is differentiable for all $t \ge t_0$
\end{corollary}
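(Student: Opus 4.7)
The plan is to set ${\bf{E}}(t) = {\bf{n}}(t)\theta(t)$, where ${\bf{n}}$ and $\theta$ are the continuous axis and differentiable angle functions delivered by Theorem \ref{thm:continuous axis and angle}, with $\theta$ allowed to be negative. Continuity of ${\bf{E}}$ is then immediate, and at any point $t_1$ where $\theta(t_1) = 0$ the direct difference-quotient calculation
\[
\frac{{\bf{E}}(t) - {\bf{E}}(t_1)}{t - t_1} \;=\; {\bf{n}}(t)\frac{\theta(t) - \theta(t_1)}{t - t_1} \;\longrightarrow\; {\bf{n}}(t_1)\dot\theta(t_1)
\]
shows that $\dot{\bf{E}}(t_1)$ exists without any appeal to $\dot{\bf{n}}$. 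Under the stronger hypothesis, $\dot{\bf{n}}$ exists everywhere by Theorem \ref{thm:continuous axis and angle}, so ${\bf{E}}$ is differentiable for all $t \ge t_0$. The substance of the corollary is then to verify (\ref{eqn: Euler diff eq rep 1}) wherever its right-hand side is defined.

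First I would handle the open subset where $\theta(t)$ is not an integer multiple of $2\pi$. The joint equation (\ref{eqn: joint n theta diffeq}) holds there for either sign of $\theta$, since its derivation is invariant under the simultaneous sign flip $(\theta,{\bf{n}}) \mapsto (-\theta,-{\bf{n}})$ (which leaves ${\bf{E}}$ unchanged). Differentiating ${\bf{E}} = {\bf{n}}\theta$ and substituting (\ref{eqn: joint n theta diffeq}), and using $(\theta/2)\cot(\theta/2) = g(\theta)$, gives
\begin{equation*}
\dot{\bf{E}} \;=\; g(\theta){\pmb{\omega }} + (1 - g(\theta)){\bf{n}}({\pmb{\omega }}\cdot{\bf{n}}) + {\pmb{\omega }} \times {\bf{E}}/2.
\end{equation*}
To match the right-hand side of (\ref{eqn: Euler diff eq rep 1}) I would observe two invariances: $g$ is even, so $g(\theta) = g(|{\bf{E}}|)$; and ${\bf{E}}({\pmb{\omega }}\cdot{\bf{E}})/|{\bf{E}}|^2 = {\bf{n}}({\pmb{\omega }}\cdot{\bf{n}})$ regardless of the sign of $\theta$, since the powers of $\theta$ in the numerator and denominator recombine with ${\bf{E}} = {\bf{n}}\theta$. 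A brief rearrangement of (\ref{eqn: Euler diff eq rep 1}) then produces the displayed expression above.

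At the isolated points $t_1$ where $\theta(t_1) = 0$, the Taylor expansion $(1 - g(\theta))/\theta^2 \to 1/12$ as $\theta \to 0$ collapses the right-hand side of (\ref{eqn: Euler diff eq rep 1}) at ${\bf{E}}(t_1) = {\bf{0}}$ to simply ${\pmb{\omega }}(t_1)$. On the left side, $\dot{\bf{E}}(t_1) = {\bf{n}}(t_1)\dot\theta(t_1)$ by the first paragraph. Invoking Lemma \ref{lemma: continuous continuation at a zero} with $l = 0$ and $M_0(t_1) = \cos 0 = 1$ identifies ${\bf{n}}(t_1) = (-1)^{i+1}{\pmb{\omega }}^{(i)}(t_1)/|{\pmb{\omega }}^{(i)}(t_1)|$, where $i$ is the smallest nonnegative integer with ${\pmb{\omega }}^{(i)}(t_1) \ne {\bf{0}}$. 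When $i = 0$, this yields $\dot\theta(t_1) = {\pmb{\omega }}(t_1)\cdot{\bf{n}}(t_1) = -|{\pmb{\omega }}(t_1)|$, so $\dot{\bf{E}}(t_1) = {\pmb{\omega }}(t_1)$; when $i \ge 1$, both ${\pmb{\omega }}(t_1)$ and $\dot\theta(t_1)$ vanish, so $\dot{\bf{E}}(t_1) = {\bf{0}} = {\pmb{\omega }}(t_1)$. Either way, (\ref{eqn: Euler diff eq rep 1}) holds at $t_1$.

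At the remaining points, where $|\theta(t_1)|$ is a \emph{positive} multiple of $2\pi$, the corollary explicitly excludes the differential equation, so only continuity (respectively differentiability, under the stronger hypothesis) of ${\bf{E}}$ is needed; both follow at once from the corresponding properties of ${\bf{n}}$ and $\theta$ already supplied by Theorem \ref{thm:continuous axis and angle} and Lemma \ref{lemma: continuous continuation at a zero}. The main bookkeeping subtlety in the whole argument is tracking the sign of $\theta$ when the angle dips below zero, but once one checks that every quantity entering (\ref{eqn: Euler diff eq rep 1}) is invariant under $(\theta,{\bf{n}}) \mapsto (-\theta,-{\bf{n}})$, the verification reduces to a routine algebraic calculation with no serious obstacle.
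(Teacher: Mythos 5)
Your proof is correct and follows the same basic idea as the paper — set ${\bf{E}}(t) = {\bf{n}}(t)\theta(t)$ with ${\bf{n}},\theta$ from Theorem \ref{thm:continuous axis and angle} — but you carry out the verification the paper leaves implicit, and that extra work is worth noting. The paper's proof is a single sentence: it asserts continuity and observes $\dot{\bf{E}} = \dot{\bf{n}}\theta + {\bf{n}}\dot\theta$ under the stronger hypothesis, leaving unexamined both the claim that ${\bf{E}}$ actually satisfies (\ref{eqn: Euler diff eq rep 1}) away from positive multiples of $2\pi$ and the claim that ${\bf{E}}$ is differentiable at zeros of $\theta$ under only the weaker hypothesis (where $\dot{\bf{n}}$ may not exist). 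Your difference-quotient calculation at $\theta(t_1)=0$ — $\,({\bf{E}}(t)-{\bf{E}}(t_1))/(t-t_1) = {\bf{n}}(t)(\theta(t)-\theta(t_1))/(t-t_1) \to {\bf{n}}(t_1)\dot\theta(t_1)$, needing only continuity of ${\bf{n}}$ and differentiability of $\theta$ — closes that gap cleanly. Your explicit verification that the sign flip $(\theta,{\bf{n}})\mapsto(-\theta,-{\bf{n}})$ leaves (\ref{eqn: joint n theta diffeq}) invariant, so (\ref{eqn: joint n theta diffeq}) $\Rightarrow$ (\ref{eqn: Euler diff eq rep 1}) holds for negative $\theta$ as well, and your direct evaluation of both sides of (\ref{eqn: Euler diff eq rep 1}) at ${\bf{E}} = {\bf{0}}$ using Lemma \ref{lemma: continuous continuation at a zero} to identify ${\bf{n}}(t_1)$, are the right checks. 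One tiny bookkeeping point: when invoking the lemma at $\theta(t_1)=0$ you fix $l=0$, but $l=-1$ is also possible there (if $\theta$ arrives at $0$ from below); as you implicitly note, the resulting sign ambiguity in ${\bf{n}}(t_1)$ cancels in the product ${\bf{n}}(t_1)\dot\theta(t_1)$, so nothing breaks, but it would be worth a sentence to say so. In short, same route as the paper, executed with the rigor the paper elides.
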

\begin{proof}
Let ${\bf{E}}(t) = {\bf{n}}(t) \theta(t)$, where ${\bf{n}}$ and $\theta$ are as constructed in the proof of the theorem. This ${\bf{E}}$ is continuous everywhere, and also differentiable everywhere when ${\bf{n}}$ is, since $\dot{\bf{E}} = \dot{\bf{n}}\theta + {\bf{n}}\dot{\theta}$.
\end{proof}
Thus it represents a continuation of solutions of the Euler vector equation if a boundary is encountered, under these hypotheses.  We could have constructed this continuation arguing directly from that equation, but the argument using \ref{eqn: quaternion diffeq} was slightly simpler. 

\begin{exa}
[Non-trivially going through $2\pi$]
To construct this, we use the trick of running the equation for ${\bf{E}}$ backwards, in a certain way.  Let ${\pmb{\omega }}(t) = {\bf{i}} + t{\bf{j}}$ and ${\bf{E}}(0) = {\bf{0}}$. Run Equation (\ref{eqn: Euler diff eq rep 1}) from ${t=0}$ to $t_1 =1$, and let $\theta_1 = |{\bf{E}}(t_1)|$ and ${\bf{n}}_1 = {\bf{E}}(t_1)/|{\bf{E}}(t_1)|$.

We'll use the endpoint of this first run to create initial conditions for a second run.  Let ${\pmb{\omega^*}}(t) = -{\pmb{\omega}}(t_1 - t) = {-\bf{i}} - (t_1 - t){\bf{j}}, t\ge 0$.  Let $\theta^{*}(0) = 2\pi - \theta_1$, ${\bf{n}}^{*}(0) = -{\bf{n}}_1$, and ${\bf{E}}^{*}(0) = {\bf{n}}^{*}(0)\theta^{*}(0)$. Now run Equation (\ref{eqn: Euler diff eq rep 1}) for ${t \ge 0}$, using ${\pmb{\omega^*}}$ and ${\bf{E}}^{*}$ with these initial conditions.  It can be shown that then  $\theta^{*}(t) = 2\pi - \theta(t_1 - t)$ and ${\bf{n}}^{*}(t) = -{\bf{n}}(t_1 - t)$ for $0 \le t \le t_1$, where $\theta^{*}(t) = |{\bf{E}}^{*}(t)|$ and ${\bf{n}}^{*}(t) = {\bf{E}}^{*}(t)/|{\bf{E}}^{*}(t)|$. Now $\theta(t_1 - t)$ approaches zero as $t \rightarrow t_1^-$, so $\theta^{*}(t)$ approaches $2\pi$, and we have our example.   Since ${\pmb{\omega^*}}(t_1) \ne {\bf{0}}$, $\theta^{*}(t)$ should go past $2\pi$ as $t$ passes $t_1$, according to Theorem \ref{thm:continuous axis and angle}.
\end{exa}

\begin{figure}
\centering
	\includegraphics[width=0.8\textwidth,height=0.6\linewidth]{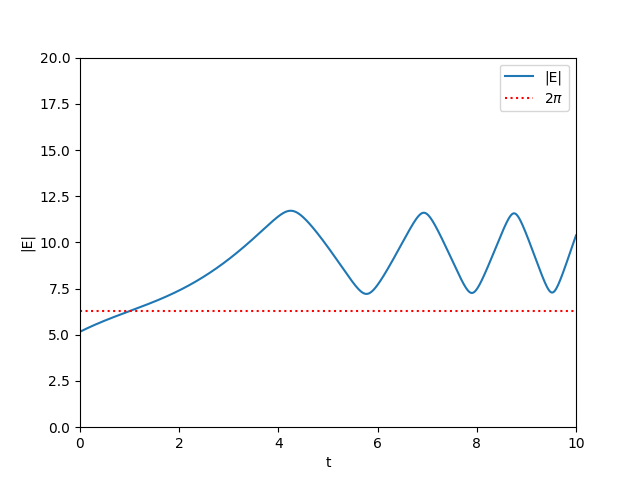}

	\caption{Numerical solution for the example in which $|{\bf{E}}|$ is made to pass through 2$\pi$. It then continues to oscillate indefinitely afterwards.}
	\label{fig:going_through_2pi}
\end{figure}

Figure \ref{fig:going_through_2pi} shows the result of solving Equation (\ref{eqn: Euler diff eq rep 1}) numerically for this example.  It goes through $2\pi$ with no difficulty, even though the r.h.s. of the equation is not actually defined at exactly $t_1$, as long as $t_1$ exactly is not a node. Near $t_1$, the middle term of the equation has a factor going to infinity, but the other factor goes to zero so that the term remains bounded, and it is well-behaved numerically. After passing through the 2$\pi$ boundary, the solution simply continues to oscillate in-between the 2$\pi$ and 4$\pi$ regions, as shown.

\section{\textbf{Numerical solutions of the Differential Equation.}}
\label{sec:NumericalSolns}
The dynamical system for the evolution of the Euler vector, Equation 
 \ref{eqn: Euler diff eq rep 1}, or one of the alternate representations such as the Modified Gibbs \ref{eqn: quaternion diffeq}, is generally non-autonomous, specified through the choice of the angular velocity function $\pmb{\omega}(t)$. In the simplest possible scenario in which $\pmb{\omega}$ is constant in time, making the system autonomous, it was shown in Section \ref{sec: spinor} that there exist periodic solutions for $\boldsymbol{E}$ when $\boldsymbol{E}$ is not initially parallel to $\pmb{\omega}$. A visualization of such a periodic trajectory was shown in Figures \ref{fig:Euler vec const w trajectory} and \ref{fig:Euler vec const w ind variables} and briefly described in the Introduction.

For the more general case when $\pmb{\omega}(t)$ is not constant in time, we resort to numerical integration to study the behavior of the Euler vector. In particular, we analyze a case in which the angular velocity vector itself rotates in a plane at constant frequency. This prototype example may in fact capture some quite general properties of how $\boldsymbol{E}$ may behave. The evolution of the Euler vector in this case produces some seemingly complex and intricate trajectories, though by applying techniques from dynamical systems theory, we will be able to conclude that the motion is in fact quasiperiodic. 

\subsection{Dynamical system properties.}
We first may be interested in what kinds of general, global statements we can make about the dynamical system.
With the Gibbs representation, Equation \ref{eqn: Gibbs vec diffeq}, it becomes especially simple to compute the \textit{divergence} of the vector field describing this dynamical system. Using the notation $\dot{\boldsymbol{G}} = \boldsymbol{f}(\boldsymbol{G})$, it can be easily checked that the divergence of $\boldsymbol{f}$ is found to be

\begin{equation}
\label{divergence}
\nabla \cdot \boldsymbol{f} = \frac{\partial f_{x}}{\partial G_x} + \frac{\partial f_{y}}{\partial G_y} + \frac{\partial f_{z}}{\partial G_z} = 2\ \omega_x G_x + 2\ \omega_y G_y + 2\ \omega_z  G_z = 2\ \pmb{\omega} \cdot \boldsymbol{G}.
\end{equation}

And since $\dot{\theta} = \pmb{\omega} \cdot \bf{n}$, we could also write 

\begin{equation}
\label{divergence result 2}
\nabla \cdot \boldsymbol{f} = \pmb{\omega} \cdot 2\tan(\theta/2)\textbf{n} = 2\tan(\theta/2) \dot{\theta}.
\end{equation}

At least one reason for being interested in the divergence is that it provides information regarding these global properties of the system \cite{strog} \cite{guck_holmes} \cite{guido}; for a conservative vector field the divergence would vanish everywhere and phase space volumes for the flow are invariant, and for a dissipative system the divergence would be strictly negative, implying a shrinking phase space volume and the possibility for chaotic attractors. This 3D system is generally nonautonomous, however following a common trick and introducing a fourth variable to the system (such as $G_{\tau} = t$, $\dot{G_{\tau}} = 1$, $\partial f_{\tau} / \partial G_{\tau} = 0$) to remove the explicit time dependence, we still get the same result for the divergence.

Just from the first example shown in Figure \ref{fig:Euler vec const w trajectory}, it is clear that $\pmb{\omega} \cdot \boldsymbol{E}$ is neither always zero, positive, or negative, even in the autonomous case of constant $\pmb{\omega}$. Since this would also be true for the Gibbs representation, we are thus not able to classify this dynamical system as being conservative or dissipative, and furthermore this means we are unable to use theorems or results that apply to dynamical systems with one of these special properties.

\subsection{Rotating angular velocity vector.}
We now turn to analyzing a special case of the differential equation (\ref{eqn: Euler diff eq rep 1}) in which 
 we let $\pmb{\omega}(t)$ rotate in a plane at fixed frequency. The subsequent motion of $\boldsymbol{E}$, as shown in Figures \ref{fig:Euler vec rotating w} and \ref{fig:E and strobe period pi}, appears to be considerably more complex. Let us take the parameterization $\pmb{\omega}(t)$ to be given by
\begin{equation}
\label{eqn:omega_const_rotation}
(\omega_x, \omega_y, \omega_z) = (\text{cos}(\alpha t), \text{sin}(\alpha t), 0)
\end{equation}
so that $\pmb{\omega}(t)$ has fixed unit length and rotates in the x-y plane with angular frequency $\alpha$, and period $T_{\alpha} = 2\pi/\alpha$.

Once this functional form for the angular velocity vector is chosen, the only free choices left for investigating solutions of Equation \ref{eqn: Euler diff eq rep 1} are $\alpha$ and the initial condition on $\boldsymbol{E}$. Using the original representation, the length of the vector $\boldsymbol{E}$ has a meaningful interpretation as the total rotation angle a body would move through while rotating about $\boldsymbol{E}$, so we note that any non-zero initial condition, $\boldsymbol{E}_0$, may be thought of as an initial rotation of the body that is needed to get into the position it has when time starts. Throughout this section we use $\boldsymbol{E}_0 = (1/\sqrt{3}, 1/\sqrt{3}, 1/\sqrt{3})$ as the initial condition, though our experiments of repeating the analysis with different nonzero initial conditions show that the qualitative results do not depend strongly on this choice.

Integrating the differential equation with a 4th order Runge-Kutta implementation, we can trace out trajectories of $\boldsymbol{E}(t)$ in time. In Figure \ref{fig: E torus 40} we show the integrated trajectory for $\boldsymbol{E}$ up to $t = $ 4200 time units, as referenced in the Introduction. The value of $T_{\alpha}$ used in this example is 40. In Figure \ref{fig:E_t rotating w} we show length of $\boldsymbol{E}(t)$ against $t$ as a time series for two different times, remembering that in the original representation, the length of $\boldsymbol{E}(t)$ represents the rotation angle of the body about the rotation axis. From the trajectory and time series plots alone, it is certainly not clear whether the motion is periodic, chaotic, or quasiperiodic.

 Recall that when $\pmb{\omega}$ was fixed, there was a natural period with which $\boldsymbol{\hat{E}}$ made periodic revolutions, namely 4$\pi$. Indeed with a rotating $\pmb{\omega}(t)$,  multiple competing frequencies are involved, and below we will look at the spectral components of the motion. 
We know that the body vector rotates around $\pmb{\omega}$ as $\pmb{\omega}$ rotates in a plane, so  $\boldsymbol{B}$ should trace out a torus. But what about the Euler vector? In Figures \ref{fig: E torus 40} and \ref{fig: E torus norm} and the associated animation, we certainly observe a winding torus-like shape, but it is not immediately clear what frequencies are involved to produce this shape. One way to think about $\alpha$ is as a type of "driving" frequency which will produce multiple response frequencies in the behavior of $\boldsymbol{E}$.  Using different values for $T_{\alpha}$ may slightly change the shape or level of deformation of this object, though we find that the general conclusions for the motion are not changed by varying this period, nor by changing the starting condition $\boldsymbol{E}_0$.  Note that these trajectory plots of $\boldsymbol{E}$ are not true "phase space" plots since the system is nonautonomous.

\begin{figure}
\centering
	\begin{subfigure}{0.49\textwidth}
		\includegraphics[width=1.0\textwidth,height=1.0\linewidth]{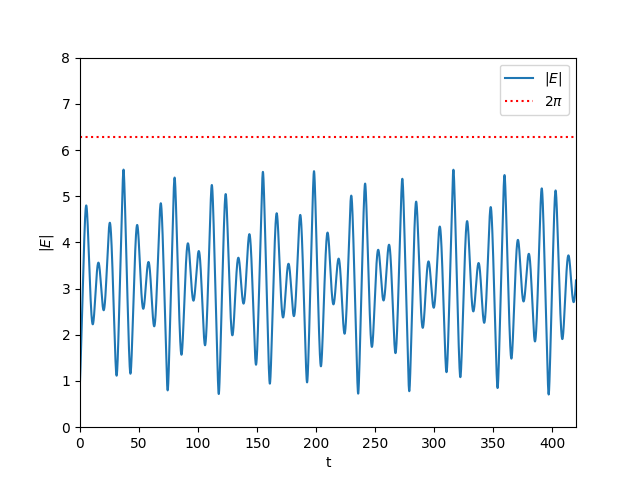}
		\caption{Up to time 420}
        \label{fig: ts_1}
	\end{subfigure}
	\begin{subfigure}{0.49\textwidth}
		\includegraphics[width=1.0\textwidth,height=1.0\linewidth]{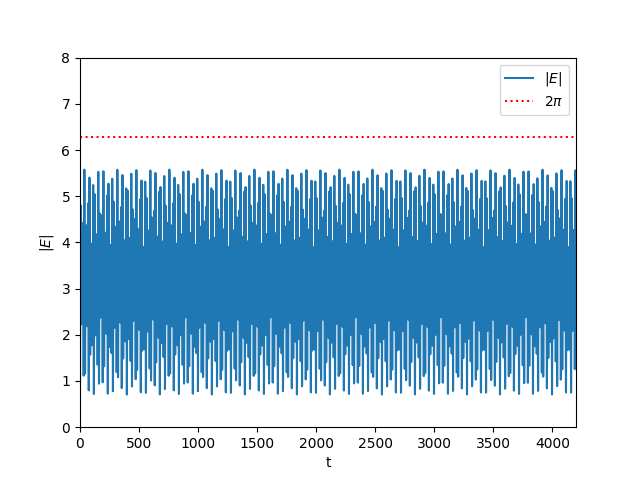}
		\caption{Up to time 4200}
		\label{fig: ts_2}
	\end{subfigure}
	
	\caption{Time series plot of $|\boldsymbol{E}|$ versus $t$ for different lengths of time, for rotating $\pmb{\omega}(t)$, using driving period $T_\alpha$ = 40.}
	\label{fig:E_t rotating w}
\end{figure}

\subsection{Quasiperiodic solutions.}

To further analyze the behavior of trajectories of the Euler vector when $\pmb{\omega}(t)$ rotates at constant frequency, we apply some techniques commonly used in dynamical systems theory; namely analyzing Poincar\'e section and strobe plots, computing Lyapunov exponents, analyzing the spectral frequencies of the motion, and making recurrence plots. These various techniques all lead us to the conclusion that the trajectory of $\boldsymbol{E}$ for this example is \textit{quasiperiodic}. See \cite{das} and \cite{zou_thesis} for a general discussion.

Computing the Lyapunov exponents for a dynamical system helps guide the process by showing the rate of divergence of trajectories starting from nearby initial conditions. Essentially a positive maximum Lyapunov exponent is a strong indicator of chaotic motion, demonstrating sensitivity on the precise initial conditions \cite{strog}. If the maximum Lyapunov exponent is numerically close to 0, indicating that nearby trajectories do not separate at an exponential rate, one may conclude the motion is not chaotic. Evolving a ball of different initial conditions, the dynamical system will stretch and distort the ball, and the asymptotic rates of expansion/contraction in each dimension are the Lyapunov exponents \cite{sandri}  \cite{janaki}. Informally, if the initial separation between trajectories has size $|{\pmb{\delta}}_0|$, then after time $t$ the separation will have grown to $| {\pmb{\delta}}(t)| \approx | {\pmb{\delta}}_0 | e^{\lambda t}$. Computing the ratio of the error terms for long times gives an approximation to $\lambda$, the maximum Lyapunov exponent.
For formal numerical computation of the Lyapunov exponents, one computes the linearized system and Jacobian about an initial point. Following the method described in \cite{kuptsov}, a package provided \cite{savary} is used for the computation. For the Euler vector system, we find that the size of the maximum Lyapunov exponent converges to 0, as seen in Figure \ref{fig:Lyap}, for the case of the constantly rotating $\pmb{\omega}(t)$, indicating that trajectories like the ones shown in Figures \ref{fig: E torus 40}, \ref{fig:E_420_pi}, and \ref{fig:E_4200_pi} are \textit{not} chaotic. After $10^5$ integration steps, the value of the maximum Lyapunov exponent is about $10^{-5}$.

\begin{figure}
\centering
\includegraphics[width=0.85\linewidth]{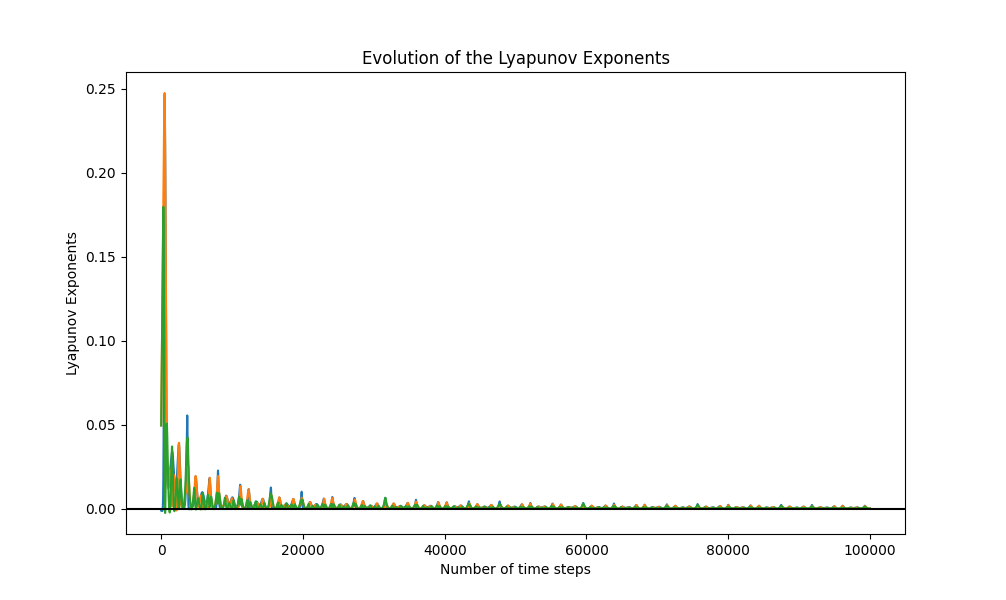}
\caption{\label{fig:Lyap} Convergence of the Lyapunov exponents to 0 for the Euler vector dynamical system in the case of rotating $\pmb{\omega}(t)$, indicating non-chaotic behavior. After $10^5$ steps, the value of the maximum Lyapunov exponent is about $10^{-5}$.}
\end{figure}

\begin{figure}[h!tp]
\centering
	\begin{subfigure}{0.49\textwidth}
		\includegraphics[width=1.0\textwidth,height=1.0\linewidth]{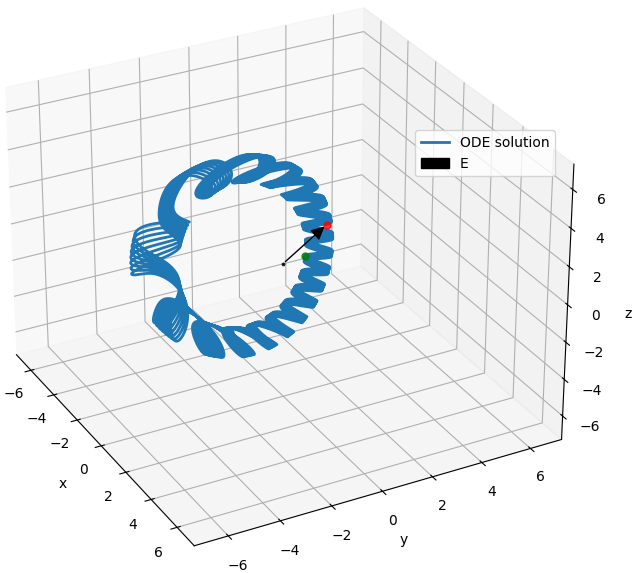}
		\caption{Trajectory of $\boldsymbol{E}$ up to time 420 for $T_\alpha = \pi$}
        \label{fig:E_420_pi}
	\end{subfigure}
	\begin{subfigure}{0.49\textwidth}
		\includegraphics[width=1.0\textwidth,height=1.0\linewidth]{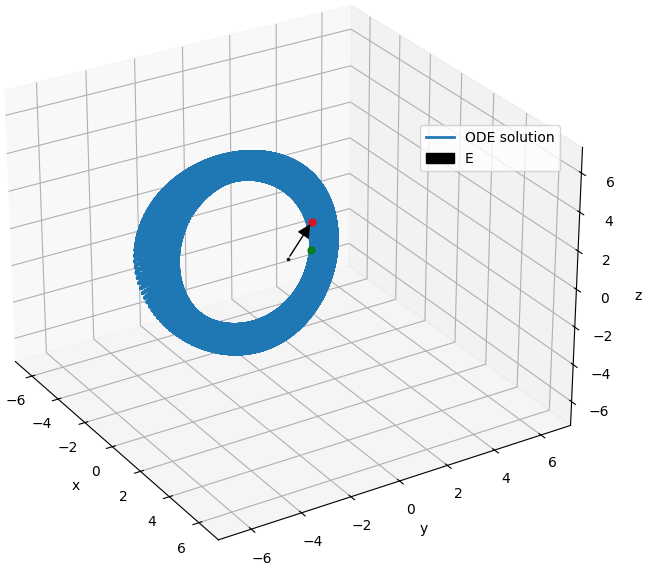}
		\caption{Trajectory of $\boldsymbol{E}$ up to time 4200 for $T_\alpha = \pi$}
		\label{fig:E_4200_pi}
	\end{subfigure}
    \begin{subfigure}{0.49\textwidth}
		\includegraphics[width=1.0\textwidth,height=1.0\linewidth]{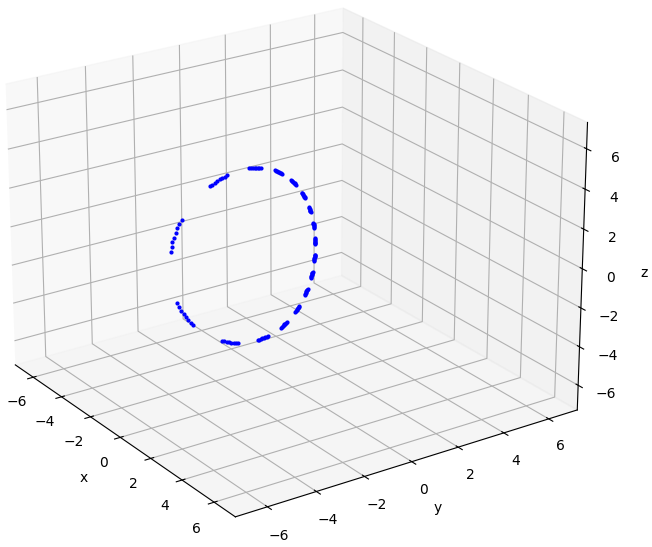}
		\caption{Strobe plot of the trajectory up to time 420}
        \label{strobe_420_pi}
	\end{subfigure}
	\begin{subfigure}{0.49\textwidth}
		\includegraphics[width=1.0\textwidth,height=1.0\linewidth]{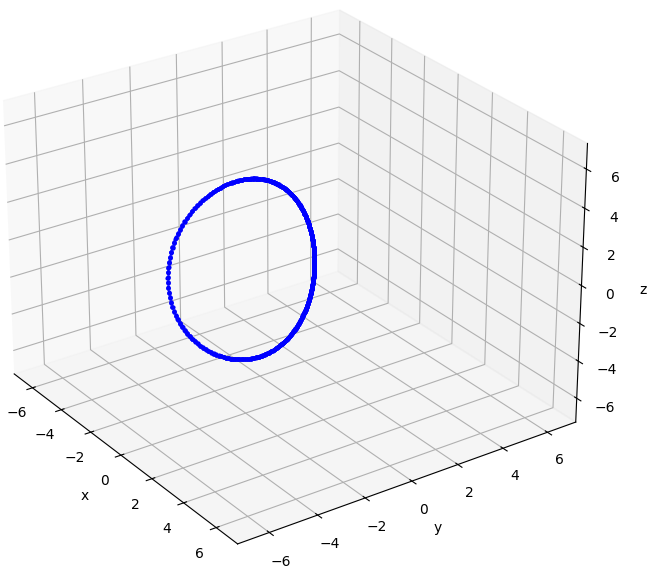}
		\caption{Strobe plot of the trajectory up to time 4200}
		\label{strobe_4200_pi}
	\end{subfigure}
    \begin{subfigure}{0.49\textwidth}
		\includegraphics[width=1.0\textwidth,height=1.0\linewidth]{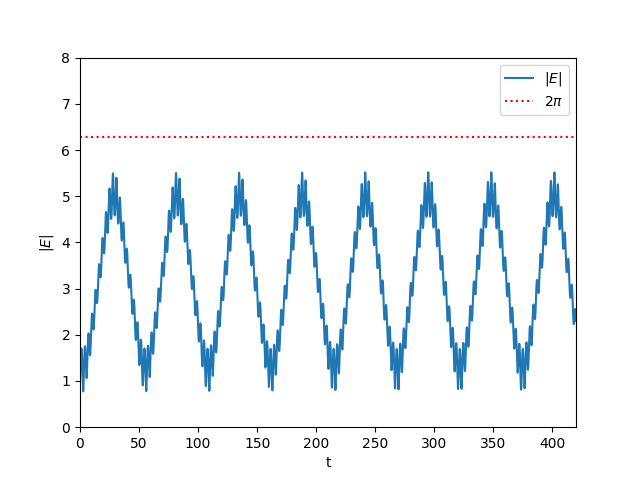}
		\caption{Time series plot of $| \boldsymbol{E} |$ }
	\end{subfigure}
	\begin{subfigure}{0.49\textwidth}
		\includegraphics[width=1.0\textwidth,height=1.0\linewidth]{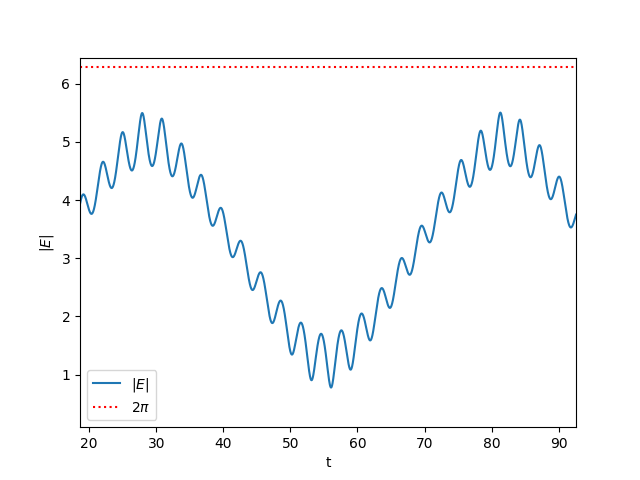}
		\caption{Time series plot, zoomed in.}
		\label{fig:E_t_zoom}
	\end{subfigure}
	\caption{Using period $\pi$ for the rotation of $\pmb{\omega} (t)$ still leads to a quasiperiodic trajectory of $\boldsymbol{E}$. In the zoomed in time series plot, one can approximate the periods of the dominant fast and slow oscillations.}
	\label{fig:E and strobe period pi}
\end{figure}

\begin{figure}
\centering
	\begin{subfigure}{0.49\textwidth}
		\includegraphics[width=1.0\textwidth,height=1.0\linewidth]{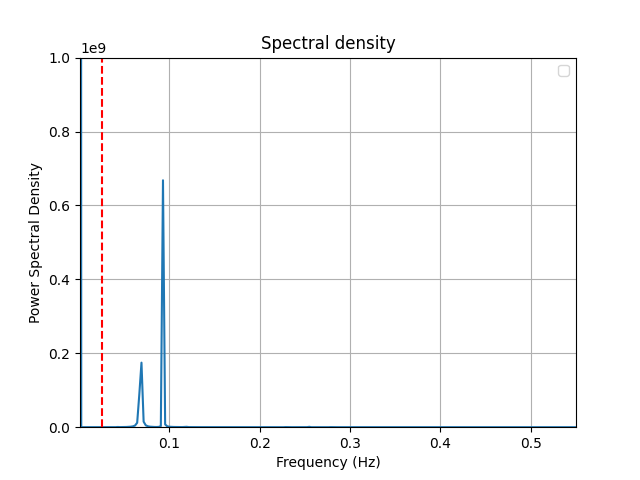}
		\caption{Power spectral density when $T_\alpha = 40$}
        \label{spectral 40}
	\end{subfigure}
	\begin{subfigure}{0.49\textwidth}
		\includegraphics[width=1.0\textwidth,height=1.0\linewidth]{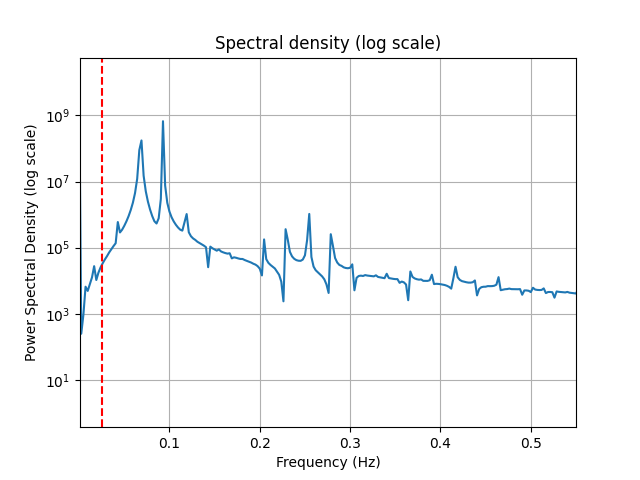}
		\caption{Log scale, power spectral density, $T_\alpha = 40$}
        \label{spectral 40 log}
	\end{subfigure}
	\begin{subfigure}{0.49\textwidth}
		\includegraphics[width=1.0\textwidth,height=1.0\linewidth]{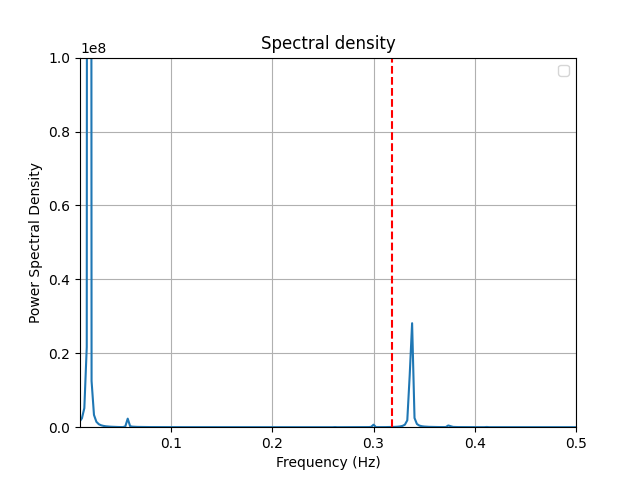}
		\caption{Power spectral density when $T_\alpha = \pi$}
        \label{spectral pi}
	\end{subfigure}
	\begin{subfigure}{0.49\textwidth}
		\includegraphics[width=1.0\textwidth,height=1.0\linewidth]{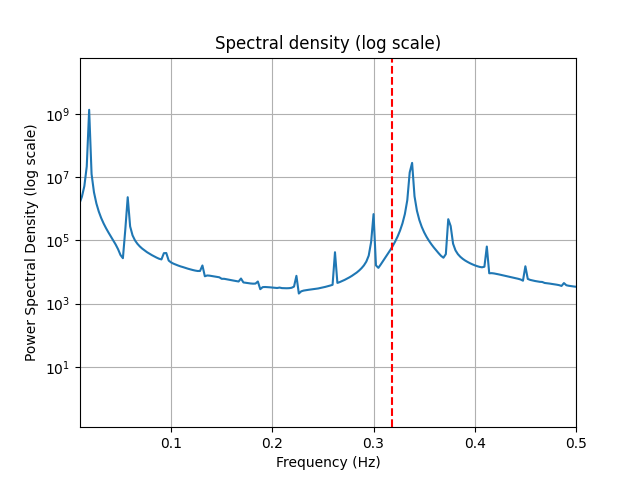}
		\caption{Log scale, power spectral density, $T_\alpha = \pi$}
        \label{ spectral pi log}
	\end{subfigure}
	\caption{Power spectral density plots when $T_\alpha = 40$ and $T_\alpha = \pi$. Dashed vertical lines represent the driving frequency for $\pmb{\omega}(t)$. The dominant response frequencies for $\boldsymbol{E}$ can be seen in the torus-like shape of the full trajectory. }
	\label{fig: spectral 1}
\end{figure}

\begin{figure}[h!]
\centering
\includegraphics[width=0.7\linewidth]{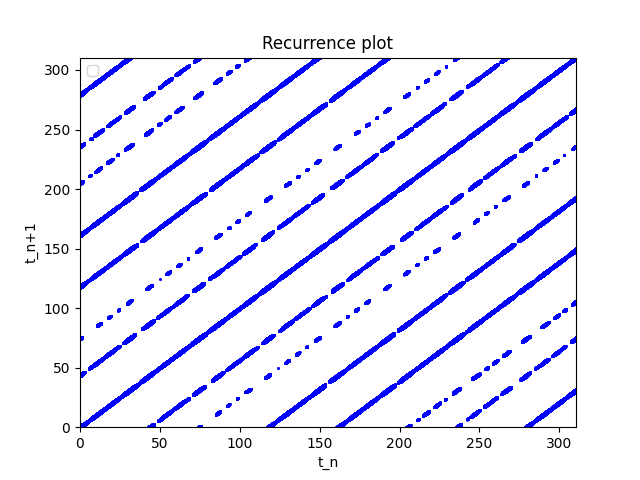}
\caption{\label{fig:recurrence1} Recurrence plot for Euler vector solution for rotating $\pmb{\omega}(t)$ with $T_\alpha = 40$.}
\end{figure}

We next look at a Poincar\'e section plot of the 3D trajectory or flow for $\boldsymbol{E}$, a mapping of the full trajectory onto a lower dimensional space. Taking a slice of the full spatial trajectory along some axis or plane is one common method used for Poincar\'e sections, though in this case a \textit{time strobe} section plot is more relevant. \footnote{For a well-known example, see the discussions in Guckenheimer/Holmes \cite{guck_holmes} and Strogatz \cite{strog} of Duffing's equation for a double well potential as a model for a nonlinear buckled beam, studied by Moon/Holmes \cite{moon_holmes}. In this case the image of the strobe plot (Guckenheimer/Holmes page 90) reveals a fractal set of a strange attractor that has become a notable image for chaotic dynamics.} We plot values of $\boldsymbol{E}(t)$ whenever $t$ is a multiple of the period associated with the driving frequency for $\pmb{\omega}(t)$, $T_{\alpha}$. In Figures \ref{fig: E strobe 420 40} and \ref{fig: E strobe 4200 40} we see such a strobe plot for the example with $T_{\alpha}=40$ after a moderate and a longer time. In the early stages of the strobe plot we see a ring of dots, but as time is increased, it becomes clear that the dots are densely filling in a \textit{closed orbit} in this strobed phase space. Were the flow periodic, there would instead only be a finite number of points on the Poincar\'e map, and if the flow were chaotic, the points would not lie on a simple closed curve but would rather be expected to have some more complicated fractal-like structure. The existence of a smooth, closed curve in the strobe plots indicates quasiperiodicity. The interpretation is that each time $\pmb{\omega}(t)$ returns back to it's initial position, the location of $\boldsymbol{E}$ lies somewhere on this curve, and in fact it's location appears to "jump" around on successive passes, filling in the curve in a rather clumpy way, but never returning to exactly the same point! The full orbit of $\boldsymbol{E}$, then, would be filling out a torus-like shape, as we see in Figure \ref{fig: E torus 40}.

The driving frequency associated with the rotation rate of $\pmb{\omega}$ certainly plays a role in the subsequent motion of $\boldsymbol{E}$, but what can we say about the response frequencies involved in producing the toroidal shape in Figure \ref{fig: E torus 40}? For the example we have been using with $T_{\alpha} = 40$, we show a spectral density plot for $\boldsymbol{E}$ (\cite{dumont} \cite{valsakumar}) in Figures \ref{spectral 40} and \ref{spectral 40 log}. The vertical dotted line corresponds to the frequency associated with the rotation of $\pmb{\omega}$, that is, 1/40. The two dominant response frequencies which show up occur at frequencies of about 0.068 and 0.0925, corresponding to periods of about 14.7 and 10.8 for the $T_{\alpha} = 40$ case. These dominant response periods have no obvious connection to any of the input values to the system. Is there perhaps a way to pick the driving frequency or starting condition such that the dominant response frequencies are not \textit{incommensurate}? Our experiments with this have all led to incommensurate response frequencies for $\boldsymbol{E}$, and thus quasiperiodic motion.
As a second example, we pick $T_{\alpha} = \pi$ as the period associated with the driving frequency, with the thought that since $4 \pi$ was the periodicity of $\boldsymbol{\hat{E}}$ when $\pmb{\omega}$ is constant, this might produce a commensurate response and a periodic trajectory. In Figure \ref{fig:E and strobe period pi} we see that this is not the case, and we still end up with incommensurate response frequencies and quasiperiodic motion of $\boldsymbol{E}$. In fact, for this case, from the spectral density plot in Figure \ref{spectral pi}  we see that the dominant frequencies occur at values corresponding to periods of roughly 3 and 53. Looking at the time series plot for $| \boldsymbol{E} |$ in Figure \ref{fig:E_t_zoom}, we can approximately read off the period for the fast and slow oscillations and verify that the values match what was found from the spectral density plot.

As a final method of analysis from the dynamical systems toolbox, we look at the behavior of the numerically integrated solution of the Euler vector differential equation using a \textit{recurrence plot} \cite{zou_thesis} \cite{zou} \cite{ivchenko}. A recurrence plot shows the times for which a trajectory visits a small neighborhood of each point on the trajectory. In other words, whenever a trajectory returns to within some small distance threshold $\epsilon$ of a point, those points are plotted vertically above that point. This is done for every point along the trajectory, and usually produces bands of diagonal lines for simple systems. One can compute the binary matrix $\boldsymbol{R}_{i,j} = \Theta(\epsilon - ||\boldsymbol{x}_i - \boldsymbol{x}_j||)$ for each pair of points $i$, $j$ on the trajectory where $\epsilon$ is a threshold and $\Theta$ is the Heaviside function. So each matrix entry that is 1 is plotted as a point on the recurrence plot. The takeaway is that for periodic motion, the diagonal lines on the recurrence plot will all be equally spaced. For quasiperiodic motion, one finds lines of different, unequal spacings, reflecting the different almost-periodic time scales, and for chaotic motion, the recurrence plot pattern will break from the band structure into a more complicated shape. (Examples in Chapters 2-4 of \cite{zou_thesis}, in particular Fig 4.6, show a typical distinction of quasiperiodic and chaotic recurrence plots in a real system, the H\'enon-Heiles system). In Figure \ref{fig:recurrence1} we show the recurrence plot for the Euler vector trajectory with $T_\alpha = 40$, using a threshold of $\epsilon = 0.1$. The unequally spaced bands further confirm the quasiperiodic nature of the motion. \\

\bibliographystyle{vancouver}
\bibliography{EulerRotationArxiv}

\end{document}